\tikzstyle{every node}=[font=\small]
\newcommand\loaddata[1]{\CatchFileDef\loadeddata{#1}{\endlinechar=-1}}
\setlist[enumerate]{itemsep=0pt}
\setlist[itemize]{itemsep=0pt}
\newcommand{\RNum}[1]{\scriptsize \uppercase\expandafter{\romannumeral #1\relax}}
\newtheorem{thm}{Theorem}[section]
\newtheorem{pro}[thm]{Proposition}
\newtheorem{lem}[thm]{Lemma}
\newtheorem{cor}[thm]{Corollary}
\newtheorem{opr}[thm]{Open Problem}
\theoremstyle{definition}
\newtheorem{dfn}[thm]{Definition}
\newtheorem{exa}[thm]{Example}
\newtheorem{rmk}[thm]{Remark}
\newtheorem{quest}[thm]{Question}
\newtheorem{conj}[thm]{Conjecture}
\newcounter{claimcount}
\newenvironment{clm}{\medskip\par\noindent\refstepcounter{claimcount}Claim~\arabic{claimcount}.}{\par\medskip}
\newcommand\Ccal{\mathcal{C}}
\newcommand\one{\mathbf{1}}
\newcommand\two{\mathbf{2}}
\newcommand\three{\mathbf{3}}
\newcommand\four{\mathbf{4}}
\newcommand\five{\mathbf{5}}
\newcommand\six{\mathbf{6}}
\newcommand\good{good}
\newcommand{\initcube}{
\tikzmath{
\posx =0; \posy = 0;
\x1 = 1; \y1 = 0; \z1 = 0; 
\x2 = 0; \y2 = 1; \z2 = 0; 
\x3 = 0; \y3 = 0; \z3 = 1;}
}
\newcommand{\rolldown}{
\tikzmath{
\posy = \posy - 1;
\t = \y1; \y1 = -\z1; \z1 = \t; 
\t = \y2; \y2 = -\z2; \z2 = \t; 
\t = \y3; \y3 = -\z3; \z3 = \t;}
}
\newcommand{\rollup}{
\tikzmath{\posy = \posy + 4;}\rolldown \rolldown \rolldown}
\newcommand{\flipdown}{
\tikzmath{
\posy = \posy - 1;
\t = \y1; \y1 = -\t; 
\t = \y2; \y2 = -\t; 
\t = \y3; \y3 = -\t;}
}
\newcommand{\flipup}{
\tikzmath{\posy = \posy + 2;} \flipdown} 
\newcommand{\rollright}{
\tikzmath{
\posx = \posx + 1;
\t = \x1; \x1 = -\z1; \z1 = \t; 
\t = \x2; \x2 = -\z2; \z2 = \t; 
\t = \x3; \x3 = -\z3; \z3 = \t; }
}
\newcommand{\rollleft}{\tikzmath{\posx = \posx -4;}\rollright \rollright \rollright}
\newcommand{\flipright}{
\tikzmath{
\posx = \posx + 1;
\t = \x1; \x1 = -\t; 
\t = \x2; \x2 = -\t; 
\t = \x3; \x3 = -\t;}
}
\newcommand{\flipleft}{\tikzmath{\posx = \posx - 2;}\flipright}
\newcommand{\printcube}{

\pgfmathtruncatemacro\a{\z3!=0?\x1:(\z1!=0?\x2:\x3)}
\pgfmathtruncatemacro\c{\z3!=0?\y1:(\z1!=0?\y2:\y3)}
\pgfmathtruncatemacro\b{\z3!=0?\x2:(\z1!=0?\x3:\x1)}
\pgfmathtruncatemacro\d{\z3!=0?\y2:(\z1!=0?\y3:\y1)}

\pgfmathtruncatemacro\rot{\a==0?90:0}
\pgfmathtruncatemacro\xsc{\a+\b}
\pgfmathtruncatemacro\ysc{\d-\c}

\pgfmathtruncatemacro\lett{\z1==1?3:(\z2==1?2:(\z3==1?1:(\z1==-1?4: (\z2==-1?5:(\z3==-1?6:0)))))}
\pgfmathtruncatemacro\presc{\z1+\z2+\z3}
\pgfmathtruncatemacro\prerot{\z1==1?90:(\z3==0?-90:0)}
\pgfmathtruncatemacro\col{80+20*\xsc*\ysc*\presc}
  
\path[shift = {(\posx,\posy)},fill = gray!10] (-.5,-.5)--(-.5,.5)--(.5,.5)--(.5,-.5)--cycle;
\node[text height=.5cm,text width=.5cm,color = black!\col,xscale = \xsc, yscale = \ysc,rotate=\rot,xscale = \presc, rotate = \prerot] at (\posx,\posy) {\bfseries \Large \lett};
}
\title{Folding polyominoes into cubes}
\author{Oswin Aichholzer\thanks{O. Aichholzer was partially supported by FWF (Austrian Science Fund) project DK W1230.}}
\affil{\normalsize Institute of Software Technology, Graz University of Technology, Austria}
\author{Florian Lehner\thanks{F. Lehner was partially supported by FWF (Austrian Science Fund) projects P31889.}}
\affil{\normalsize Department of Mathematics, 
University of Auckland,
New Zealand}
\author{Christian Lindorfer\thanks{C. Lindorfer was partially supported by FWF (Austrian Science Fund) projects P31889 and DK W1230.}}
\affil{\normalsize Institut f\"ur Diskrete Mathematik, 
Technische Universit\"at Graz,
Austria}
\date{\today} 
\begin{document}

\maketitle

\begin{abstract}
Which polyominoes can be folded into a cube, using only creases along edges of the square lattice underlying the polyomino, with fold angles of $\pm 90\degree$ and $\pm 180\degree$, and allowing faces of the cube to be covered multiple times? Prior results studied tree-shaped polyominoes and polyominoes with holes and gave partial classifications for these cases. 

We show that there is an algorithm deciding whether a given polyomino can be folded into a cube. This algorithm essentially amounts to trying all possible ways of mapping faces of the polyomino to faces of the cube, but (perhaps surprisingly) checking whether such a mapping corresponds to a valid folding is equivalent to the unlink recognition problem from topology.

We also give further results on classes of polyominoes which can or cannot be folded into cubes. 
Our results include
(1) a full characterisation of all tree-shaped polyominoes that can be folded into the cube 
(2) that any rectangular polyomino which contains only one simple hole (out of five different types)  does not fold into a cube,
(3) a complete characterisation when a rectangular polyomino with two or more unit square holes (but no other holes) can be folded into a cube, and
(4) a sufficient condition when a simply-connected polyomino can be folded to a cube.

These results answer several open problems of previous work and close the cases of tree-shaped  polyominoes and rectangular polyominoes with just one simple hole.
\end{abstract}

\section{Introduction}

Even in its seemingly simplest form folding a piece of paper is a fruitful source for interesting research problems. Citing from Chapter 7 (The Combinatorics of paper folding) in the book of Martin Gardner~\cite{gardner1983}:
``One of the most unusual and frustrating unsolved problems in
modern combinatorial theory, proposed many years ago by
Stanislaw M. Ulam, is the problem of determining the number
of different ways to fold a rectangular map. The map is precreased
along vertical and horizontal lines to form a matrix of
identical rectangles. The folds are confined to the creases, and
the final result must be a packet with any rectangle on top and
all the others under it.''

Surprisingly, even the version of an $1 \times n$ map is open. This is known as the stamp folding problem, as it is similar to count in how many ways a single strip of stamps can be folded along their perforated edges such that all stamps end up on top of each other.
See again~\cite{gardner1983} for a history of this problem and also Example~\ref{exa:stampfolding} below.

In a similar spirit, the following innocent looking problem was proposed in~\cite{aich18}: Which polyominoes can be folded into a cube?
More precisely, given a piece of paper in the shape of a polyomino it is asked whether it has a folded state in the shape of a unit cube. Here a polyomino is a polygon in the plane build by a collection of unit squares on the integer square lattice that are connected edge-to-edge (see the next section for more formal definitions).

In the same paper different models for folding were used to tackle the problem. In addition to foldings along the creases of the lattice of the polyomino (similar to what is used for the above mentioned map problem) it can be also allowed to fold along a $\pm 45\degree$ diagonal of a square of the polyomino, or also so-called half-grid foldings are possible. These are foldings where orthogonal and diagonal folds between half-integral points can be made. Other differences in the folding model come from the way the faces of the cube are covered. For diagonal and half-grid models it is possible that a face of the cube is only partially covered by (parts of) a square of the polyomino, and the rest is covered by other (parts of) squares. Note, however, that in all these cases it is allowed that a face is covered by multiple layers of paper (like in origami, but unlike the general polyhedron unfolding), where some of these layers might only partially cover a face. 
Another restriction to the folding model could be the requirement that the surface of the cube is made entirely from just one side of the paper. That is, if you have a paper which is on one side green and on the other side red, then only the green side is visible to the exterior of the cube.
For further variates of folding models we refer to~\cite{aich18}.

For the half-grid model it was shown in~\cite[Thm. 3]{aich18} that all polyominoes of at least ten unit squares can be folded into a unit cube. Rather recently some ot the remaining cases were covered in~\cite{czajkowski2020folding}. For example there it is shown that any tree-shaped polyomino of at least 9 squares can be folded to the cube in the half-grid model.  Table 1 of this paper contains a nice overview of previous results in the different folding models.\\

In our work we focus on the grid-folding model. There foldings are allowed only along edges of the square lattice of the polyomino and the folding angles can be $\pm 90\degree$ or $\pm 180\degree$. Thus, every face of the cube is covered by a seamless square in the folding and we allow faces to be covered multiple times, that is, by several layers of different squares of the polyomino.
Readers who are not familiar with this folding model might find it instructive to solve the three cube folding puzzles given by Nikolai Beluhov~\cite{puzzle} to see how powerful the model is.

\subsection*{Previous and new results in the grid-folding model}

Our first few results concern computational aspects of the problem. 
In \cite{aich19}, the authors observe that if a polyomino folds into a cube, then there is an adjacency-preserving map from the faces of the polyomino to the faces of the cube, which they call a \emph{consistent mapping}. Clearly, if a polyomino does not admit a consistent mapping, then it cannot be folded into a cube, giving an easy brute-force algorithm to certify non-foldability. They ask whether this algorithm can also be used to certify foldability, and whether every consistent mapping can be turned into a valid folding. While the answer to the second question turns out to be negative even in the special case of trees (see Example \ref{exa:self-intersection-fan} below), we provide a positive answer to the first question: we show that there is an algorithm which determines whether a consistent mapping can be turned into a valid folding.

The second half of this paper focuses on foldability conditions for various classes of polyominoes, namely, rectangular polyominoes with (simple) holes, tree-shaped polyominoes, and simply connected polyominoes.

Our starting point for our investigation of polyominoes with holes is a result from \cite[Thm. 1]{aich19} which states that any polyomino that contains a non-simple hole can be folded into a cube. There are five different simple (also called basic) holes that can occur inside a polyomino, see Figure~\ref{fig:simpleholes}.
In the same paper several combinations of two simple holes were given that allow the polyomino to fold into a cube. The authors also show that rectangular polyominoes with only a single square hole (\cite[Thm. 11]{aich19}) or a slit of size 1 (\cite[Thm. 15]{aich19}) and no other hole cannot fold onto the unit cube. Extending their result and answering one of their open problems we show in Section~\ref{sec:one_hole} that no rectangular polyomino with only one simple hole (one of the five simple holes shown in Figure~\ref{fig:simpleholes}) folds into a cube.

Further extending this result and generalizing Theorem~12 of~\cite{aich19} we provide a complete characterisation when a rectangular polyomino with two or more unit square holes (but no other holes) can be folded into a cube. Roughly speaking this can be done if and only if at least two of these holes lie in the same or in adjacent rows (resp. columns) and the number of columns (resp. rows) between them is odd. See Theorem~\ref{thm:two_holes} in Section~\ref{sec:two_holes} for details.

Our next group of results concerns tree-shaped polyominoes. 
A polyomino is called \emph{tree-shaped} if its dual is a tree. As usual, here the dual of a polyomino consists of a vertex for every unit square of the polyomino and an edge connects two vertices whenever the corresponding unit squares are joined edge to edge (see the next section for a proper definition).
In Section~\ref{sec:tree-shaped} we fully characterise all tree-shaped polyominoes which can be folded onto the cube. This is done by a combination of considering well-defined classes of small polyominoes, bounding boxes of different sizes, and some exceptional cases of constant size. 
See e.g. Figure~\ref{fig:4times4-unfoldable} which shows the only tree-shaped polyomino with bounding box $4 \times 4$ that can not be folded into a cube. Besides this example we show that any tree-shaped polyomino with a bounding box of size $4 \times 4$ or larger can be folded to a cube (Theorems~\ref{thm:4times4} and~\ref{thm:4times5}), and exhaustively characterise all tree-shaped polyominoes of smaller bounding size (Theorems~\ref{thm:2_times_n} and~\ref{thm:3_times_n}). A similar result for bounding boxes of sizes $2 \times n$ and $3 \times n$ was presented already in~\cite[Thm. 4]{aich18}, but it seems their characterisation was incomplete, see Section~\ref{sec:tree-shaped} for details.

Finally, in Section~\ref{sec:simply_connected} we consider simply-connected polyominoes.
A polyomino is simply connected if it does not contain any holes. Clearly, all tree-shaped polyominoes are simply-connected, and simply-connected polyominoes which are not tree-shaped have to contain at least one rectangular sub-polyomino of size at least $2 \times 2$. This makes it harder to classify foldability for simply-connected polyominoes. Using the maximal distance of a point on the boundary of the polyomino to a corner of its bounding box, we provide a sufficient condition when a simply-connected polyomino can be folded to a cube, see Theorem~\ref{thm:simply-connected}.\\

Before providing the details of these results we start by stating the underlying notation in the next section. In Section~\ref{sec:folding_model} we specify our folding model, and give some more rigorous definitions and general results on folding polyominoes. We then provide details on consistent mappings to the unit cube in Section~\ref{sec:mapping}, which will be used in the consecutive sections to derive the results mentioned above.

\section{Notation}
A polyomino is a plane geometric figure formed by joining one or more unit squares edge to edge. It may be regarded as a finite, connected subset of the regular square tiling.  We do not require a connection between every pair of adjacent squares, i.e., we allow slits along the edges of the square tiling. As usual, the dual $D(P)$ of a polyomino $P$ is constructed by taking a vertex for every unit square of the polyomino and connecting two vertices by an edge whenever the corresponding unit squares are joined edge to edge. We call a maximal set of missing squares and slits a hole in $P$, if $D(P)$ has a cycle containing $P$ in its interior. A hole is called simple, if it is one of the following: a unit square, a slit of size 1, an I-slit of size 2, an L-slit of size 2 or a U-slit of size 3.

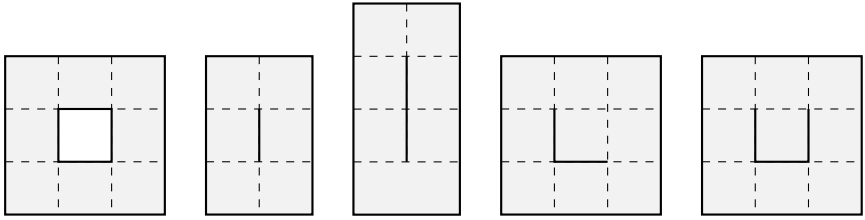
\begin{figure}[ht]
\centering
\begin{tikzpicture}[xscale=0.7,yscale=0.7]
\draw[thick,fill=gray!10] (0,1)--(0,-2)--(3,-2)--(3,1)--cycle;
\fill[white] (1,0)--(1,-1)--(2,-1)--(2,0)--cycle;
\foreach \i in {0,-1}{
    \draw[dashed] (0,\i)--(3,\i);}
\foreach \i in {1,2}{
    \draw[dashed] (\i,1)--(\i,-2);}
\draw[thick] (1,0)--(1,-1)--(2,-1)--(2,0)--cycle;
\end{tikzpicture}
\quad
\begin{tikzpicture}[xscale=0.7,yscale=0.7]
\draw[thick,fill=gray!10] (0,1)--(0,-2)--(2,-2)--(2,1)--cycle;
\foreach \i in {0,-1}{
    \draw[dashed] (0,\i)--(2,\i);}
\foreach \i in {1}{
    \draw[dashed] (\i,1)--(\i,-2);}
\draw[thick] (1,0)--(1,-1);
\end{tikzpicture}
\quad
\begin{tikzpicture}[xscale=0.7,yscale=0.7]
\draw[thick,fill=gray!10] (0,1)--(0,-3)--(2,-3)--(2,1)--cycle;
\foreach \i in {0,-1,-2}{
    \draw[dashed] (0,\i)--(2,\i);}
\foreach \i in {1}{
    \draw[dashed] (\i,1)--(\i,-2);}
\draw[thick] (1,0)--(1,-2);
\end{tikzpicture}
\quad
\begin{tikzpicture}[xscale=0.7,yscale=0.7]
\draw[thick,fill=gray!10] (0,1)--(0,-2)--(3,-2)--(3,1)--cycle;
\foreach \i in {0,-1}{
    \draw[dashed] (0,\i)--(3,\i);}
\foreach \i in {1,...,2}{
    \draw[dashed] (\i,1)--(\i,-2);}
\draw[thick] (1,0)--(1,-1)--(2,-1);
\end{tikzpicture}
\quad
\begin{tikzpicture}[xscale=0.7,yscale=0.7]
\draw[thick,fill=gray!10] (0,1)--(0,-2)--(3,-2)--(3,1)--cycle;
\foreach \i in {0,-1}{
    \draw[dashed] (0,\i)--(3,\i);}
\foreach \i in {1,...,2}{
    \draw[dashed] (\i,1)--(\i,-2);}
\draw[thick] (1,0)--(1,-1)--(2,-1)--(2,0);
\end{tikzpicture}
\caption{The five simple holes: unit square hole, slit of size 1, I-slit of size 2, L-slit of size 2, U-slit of size 3. This Figure is similar to~\cite[Fig. 3]{aich19} where these holes are called basic.}
\label{fig:simpleholes}
\end{figure}

\section{Folding model}
\label{sec:folding_model}

In order to define our folding model rigorously, we first rigorously define what a polyomino is. An \emph{abstract polyomino} is a square complex in which every edge and every vertex is contained in at least one face, and every edge is contained in at most two faces. In other words, an abstract polyomino is obtained from a disjoint collection of unit squares by identifying some of their edges in such a way that each edge gets identified with at most one other edge. Note that defining the faces to be unit squares allows us to view an abstract polyomino as a metric space, and therefore also as a topological space.


A \emph{folding blueprint} of an abstract polyomino $P$ is a map $\beta \colon P \to \mathbb R^3$ such that the restriction of $\beta$ to each face $A$ of $P$ is an isometry.
A \emph{folded state} of an abstract polyomino $P$ is a piecewise linear (topological) embedding $F \colon P \to \mathbb R^3$.
A folded state $F$ is said to \emph{$\epsilon$-correspond} to a folding blueprint $\beta$ if $\|F(x) - \beta(x)\|< \epsilon$ for every $x \in P$; in this case we call $F$ and \emph{$\epsilon$-realisation} of $\beta$.

We note that if $F$ is a folded state, then the boundary of $F(P)$ in $\mathbb R^3$ is homeomorphic to a disjoint union of circles.
To lighten notation, we fix some small $\epsilon$ for the rest of this section (for our purpose any $\epsilon < \frac 12$ will do), and write `correspond' for `$\epsilon$-correspond' and `realisation' for `$\epsilon$-realisation'.

Let us call a folding blueprint \emph{flat} if its image is contained in a plane and the images of different faces only overlap in edges. Note that not every abstract polyomino has a flat folding blueprint, but if it has one, then the flat folding blueprint is unique up to isometries of $\mathbb R^3$.

\begin{dfn}
    A \emph{polyomino} is an abstract polyomino $P$ which admits a flat folding blueprint. A realisation of a flat folding blueprint is called a \emph{trivial folding}.
\end{dfn}

Finally, we are ready to define our folding model.

\begin{dfn} \label{def:folding}
    Two folded states are called \emph{equivalent} if they are ambient isotopic. A folded state $F$ of a polyomino $P$ is called \emph{valid} if it is equivalent to a trivial folding; in this case we also call $F$ a \emph{folding} of $P$.
\end{dfn}

Intuitively speaking, the above definition states that two folded states $F_1$ and $F_2$ are equivalent, if we can `move' the image of $F_1$ around in $\mathbb R^3$ to obtain the image of $F_2$. In this process we allow `bending and stretching' the image, but we do not allow `cutting it and glueing it back together'. It is not hard to see that any two trivial foldings of a polyomino $P$ are equivalent, and therefore the same is true for any two foldings.

\begin{dfn} \label{def:foldinginto}
    Let $P$ be a polyomino, and let $Q \subseteq \mathbb R^3$. A \emph{folding of $P$ into $Q$} is a folding $F$ corresponding to a folding blueprint $\beta \colon P \to Q$. If $\beta$ is surjective, we say that $F$ is a \emph{folding of $P$ onto $Q$}.
\end{dfn}

Usually we will consider the case where $Q$ is the surface of the unit cube $\Ccal$. We note that any folding blueprint $\beta \colon P \to \Ccal$ must map faces of $P$ to faces of $\Ccal$.
Hence, if a folding $F$ corresponds to such a folding blueprint, then each face $A$ of $P$ is embedded somewhere near a face $B$ of $\Ccal$, and for every face $B$ of $\Ccal$ there is a face $A$ of $P$ embedded near it. In particular, a folding of $P$ into $\Ccal$ gives maps from the vertices, edges, and faces of $P$ to the vertices, edges, and faces of $\Ccal$. If for a folding into $\Ccal$, all faces of $\Ccal$ are covered we call this a folding onto $\Ccal$.

We finish this section by giving some necessary and sufficient conditions for valid foldings. Let us define a \emph{polyhedral sphere} as a polyhedron $Q \subseteq \mathbb R^3$ which is homeomorphic to a sphere. An \emph{unlink} is a collection of circles embedded in $\mathbb R^3$ which is ambient isotopic to a set of circles embedded side-by-side in a plane. The following theorem tells us that polyhedral spheres and unlinks can be used as certificates for valid foldings.

\begin{thm}
    \label{thm:foldedstatevalid}
    Let $F$ be a folded state of a polyomino $P$. The following are equivalent.
    \begin{enumerate}
        \item \label{itm:valid}The folded state $F$ is valid, that is, $F$ is a folding.
        \item \label{itm:sphere}There is a polyhedral sphere $Q$ such that $F(P) \subseteq Q$.
        \item \label{itm:unlink}The boundary of $F(P)$ is an unlink.
    \end{enumerate}
\end{thm}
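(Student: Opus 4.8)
The plan is to prove the three conditions equivalent by establishing a cycle of implications, say \ref{itm:valid} $\Rightarrow$ \ref{itm:sphere} $\Rightarrow$ \ref{itm:unlink} $\Rightarrow$ \ref{itm:valid}, exploiting at each step the fact that the relevant properties are preserved under ambient isotopy. The conceptual core is that a folding is, by definition, ambient isotopic to a trivial folding, and a trivial folding manifestly sits inside a plane (hence inside a polyhedral sphere, after thickening the plane into a flat ``pancake'' polyhedron) and has boundary an unlink (its boundary is a disjoint union of polygons lying side by side in a plane). Since both the property of being contained in \emph{some} polyhedral sphere and the property of having unlinked boundary are invariant under ambient isotopy, these two properties transfer from the trivial folding to any equivalent folded state.

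For \ref{itm:valid} $\Rightarrow$ \ref{itm:sphere} I would start from a trivial folding $F_0$ of $P$, whose image lies in a plane with faces overlapping only along edges; I can enclose $F_0(P)$ in a thin rectangular box (a polyhedral sphere) $Q_0$. If $F$ is valid, then by definition there is an ambient isotopy $H_t$ of $\mathbb{R}^3$ carrying $F_0(P)$ to $F(P)$; applying the time-$1$ map to $Q_0$ yields a polyhedral sphere $Q = H_1(Q_0)$ containing $F(P)$. (One must check that the image of a polyhedral sphere under a piecewise-linear ambient isotopy can again be taken to be polyhedral, which is routine in the PL category.) For \ref{itm:sphere} $\Rightarrow$ \ref{itm:unlink}, I would use the remark already recorded in the excerpt that the boundary $\partial F(P)$ is a disjoint union of circles. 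If $F(P) \subseteq Q$ for a polyhedral sphere $Q$, then $\partial F(P)$ is a link embedded on the sphere $Q$; a link lying on an embedded sphere in $\mathbb{R}^3$ bounds disjoint disks on either side of the sphere and is therefore an unlink. This is the step where I would invoke the topology of curves on a sphere: disjoint simple closed curves on $S^2$ cut it into planar pieces, so they are simultaneously unknotted and unlinked.

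The implication \ref{itm:unlink} $\Rightarrow$ \ref{itm:valid} is where I expect the main obstacle, and it is the only direction that genuinely requires the polyomino structure rather than soft topology. The idea is that $F(P)$ is a compact surface-with-boundary (a planar polyomino is homeomorphic to a planar region, hence to a disk-with-holes or disjoint union thereof) embedded in $\mathbb{R}^3$ whose boundary is an unlink; I want to conclude that the embedding itself is unknotted, i.e.\ ambient isotopic to a flat embedding in a plane. The natural strategy is to cap off the boundary unlink with disjoint embedded disks to obtain an embedded closed surface, then argue that an embedded planar surface with unlinked boundary is standard. The delicate point is that ``unlinked boundary'' for a \emph{surface} is a strictly stronger hypothesis than it looks: there exist knotted embeddings of an annulus whose two boundary circles form an unlink, so the argument cannot simply run ``boundary is unlink, therefore surface is standard'' without using more. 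Here I would lean on the fact that $F(P)$ is \emph{flat-foldable} (it is the image of a polyomino admitting a flat folding blueprint), which controls the intrinsic geometry, combined with a Schoenflies-type / disk-shrinking argument to isotope the capping disks and then the whole surface into a neighbourhood of a plane. Making this rigorous in the PL setting, and in particular ruling out knotting that is invisible at the level of the boundary, is the technical heart; I would likely phrase it via an innermost-disk induction on the components and holes, using that unknottedness of the boundary lets me shrink each boundary circle across an embedded disk and thereby collapse the holes one at a time until only a trivially embedded disk remains.
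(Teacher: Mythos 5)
Your high-level architecture is sound but organised differently from the paper's: you close the single cycle \ref{itm:valid}~$\Rightarrow$~\ref{itm:sphere}~$\Rightarrow$~\ref{itm:unlink}~$\Rightarrow$~\ref{itm:valid}, whereas the paper proves \ref{itm:valid}~$\Rightarrow$~\ref{itm:sphere} and \ref{itm:valid}~$\Rightarrow$~\ref{itm:unlink} together (both properties hold for a trivial folding and are preserved under ambient isotopy, made piecewise linear by Hudson's theorem --- exactly the fact your parenthetical remark needs but does not name), then \ref{itm:sphere}~$\Rightarrow$~\ref{itm:valid} by the PL Sch\"onflies theorem, and \ref{itm:unlink}~$\Rightarrow$~\ref{itm:sphere} by capping the holes. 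Your direct argument for \ref{itm:sphere}~$\Rightarrow$~\ref{itm:unlink} (disjoint circles on a polyhedral sphere form an unlink) is a clean alternative the paper does not use; it is correct modulo the standard remark that nested curves bound nested, not disjoint, disks on the sphere, so one pushes them to different depths into the ball the sphere bounds.

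The genuine gap is in \ref{itm:unlink}~$\Rightarrow$~\ref{itm:valid}. First, your claim that there exist knotted annuli whose two boundary circles form an unlink is false: any PL annulus is ambient isotopic, by isotopy extension, to a thin ribbon about its core; the core is isotopic within the surface to either boundary circle, hence unknotted when the boundary is an unlink; the ribbon's framing equals the linking number of the two boundary circles, hence is $0$; and the $0$-framed ribbon on an unknot is the flat annulus. Were your claim true it would refute the very theorem you are proving (fold an annular polyomino into such an annulus: \ref{itm:unlink} would hold while \ref{itm:valid} fails), and it contradicts the step of the paper's own proof which asserts that a non-flat annulus forces a nontrivial link on its boundary curves. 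Second, and worse, the ingredient you introduce to fight this phantom obstruction --- that ``flat-foldability controls the intrinsic geometry'' --- cannot do any work: the intrinsic geometry of $F(P)$ is the same for every embedding, while knottedness is an extrinsic phenomenon, so flat-foldability of $P$ adds nothing beyond $F(P)$ being an embedded planar surface. Once this is stripped away, the two steps that carry the actual content remain unproved gestures: (i) the disjoint disks supplied by the unlink hypothesis must be made disjoint from the interior of $F(P)$ (in your innermost-circle induction, an innermost intersection circle may be \emph{essential} in $F(P)$, which forces a compression-and-recombination argument rather than a disk swap; the paper instead decomposes $P$ into one disk plus one annulus per hole and reduces to the standardness of annuli); and (ii) after capping you have only established \ref{itm:sphere}, so you still owe the content of \ref{itm:sphere}~$\Rightarrow$~\ref{itm:valid}: the complementary regions of $F(P)$ on the sphere are disks (an Euler characteristic count), the identification with a trivial folding extends across them by the Alexander trick, and the PL Sch\"onflies theorem turns the resulting sphere homeomorphism into an ambient isotopy. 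Your phrase ``Schoenflies-type / disk-shrinking argument'' compresses all of this away.
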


\begin{proof}
    To see that condition \ref{itm:valid} implies the other two, we note that only allowing piecewise linear movement of the polyomino leads to an equivalent definition of valid foldings, see \cite[Theorem 6.2]{Hudson69}. Since the image of any trivial folding is contained in a polyhedral sphere and has an unlink as its boundary, the same must be true for any valid folding.

    The implication \ref{itm:sphere} to \ref{itm:valid} follows from the fact that any two polyhedral spheres are ambient isotopic, for instance by the piecewise linear Schönflies theorem \cite[Chapter 17, Theorem 12]{Moise77}.

    It remains to show the implication from \ref{itm:unlink} to \ref{itm:sphere}. For this purpose first note that every polyomino contains a (topological) disk $D$ such that $P \setminus \overline D$ is homeomorphic to a disjoint union of annuli $A_1, \dots, A_{k-1}$ where each $A_i$ contains one connected component $C_i$ of the boundary of $P$, and the last connected component $C_k$ of the boundary meets every $A_i$, see Figure \ref{fig:cutpolyomino}.

    \begin{figure}
    \centering
    \begin{tikzpicture}[xscale=0.7,yscale=0.7]
        \path[fill=gray!10] (0,1)--(3,1)--(3,0)--(7,0)--(7,3)--(5,3)--(5,5)--(2,5)--(2,4)--(0,4)--cycle;       
        \draw[dashed]
        (3,1)--(7,1)
        (0,2)--(7,2)
        (0,3)--(5,3)
        (2,4)--(5,4)
        (1,4)--(1,1)
        (2,4)--(2,1)
        (3,5)--(3,1)
        (4,5)--(4,0)
        (5,3)--(5,0)
        (6,3)--(6,0);
        \path[draw=red] (1.5,4) to [out=-50,in=90] (2.5,1) -- (3,1) to[out=80, in=-140]  (5.5,3) to (5,3) to[out=-140, in=-80] (2,4)--cycle;  
        \path[fill=white] (1,2) rectangle (2,3);
        \path[fill=white] (3,3) rectangle (4,4);
        \path[fill=white] (6,1) rectangle (4,2);
        \draw[thick]
        (0,1)--(3,1)--(3,0)--(7,0)--(7,3)--(5,3)--(5,5)--(2,5)--(2,4)--(0,4)--cycle
        (1,2) rectangle (2,3)
        (3,3) rectangle (4,4)
        (6,1) rectangle (4,2);  
        
    \end{tikzpicture}
    \caption{Cutting the polyomino along the red lines gives one connected component which is homeomorphic to a disk, and three connected components each of which is homeomorphic to an annulus.}
    \label{fig:cutpolyomino}
    \end{figure}
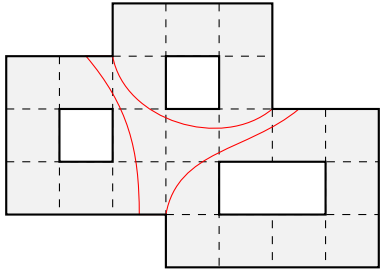
    
    Since the cycles $C_1, \dots, C_{k-1}$ form an unlink, for each $i$ there is $S_i \subseteq \mathbb R^3$ such that $S_i$ is homeomorphic to an open sphere, $F(A_i)\subseteq S_i$, and $S_i \cap S_j = \emptyset$ for $i \neq j$. We note that $F$ is ambient isotopic to a piecewise linear embedding $F'$ such that $F'(A_i) \subseteq S_i$; the corresponding ambient isotopy can be chosen to be the identity outside a small neighbourhood of $F(A_i)$. If $F'(A_i)$ for some $i$ is not ambient isotopic to an annulus embedded in the plane, then $F'(C_i) \cup F'(C_k)$ cannot be an unlink. 
    Hence we may assume that each $F'(A_i)$ is ambient isotopic to an annulus embedded in the plane, and thus we can find disks $D_i \subseteq S_i$ such that $\partial D_i = D_i \cap F'(P) = C_i$. Finally, $F'(P) \cup \bigcup_{i=1}^{k-1} D_i$ is homeomorphic to a closed disk and therefore is contained in some polygonal sphere.
\end{proof}

The last result in this section deals with the computational complexity of recognising whether a folded state is valid. Let \textsc{Folding} denote the decision problem whether a folded state of a polyomino $P$ is valid. For $Q \subseteq \mathbb R^3$, let \textsc{Folding}$(Q)$ denote the problem of deciding whether a folded state corresponding to a folding blueprint $\beta\colon P \to Q$ is valid. Finally, let \textsc{Unlink} denote the problem of deciding whether a collection of polygonal circles embedded in $\mathbb R^3$ is an unlink.

\begin{thm}
    \label{thm:folding-eq-unlink}
    \textnormal{\textsc{Folding}} and \textnormal{\textsc{Folding}}$(\Ccal)$ are polynomially equivalent to \textnormal{\textsc{Unlink}}.
\end{thm}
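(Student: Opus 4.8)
The plan is to prove three polynomial-time reductions forming a cycle,
\[
\textsc{Folding}(\Ccal) \;\le_p\; \textsc{Folding} \;\le_p\; \textsc{Unlink} \;\le_p\; \textsc{Folding}(\Ccal),
\]
from which the polynomial equivalence of all three problems follows at once. Two of these reductions are short. Since every instance of $\textsc{Folding}(\Ccal)$ is by definition an instance of \textsc{Folding} (a folded state whose blueprint happens to map into $\Ccal$), the first reduction is the identity. For the second I would invoke Theorem~\ref{thm:foldedstatevalid}: a folded state $F$ is valid if and only if $\partial F(P)$ is an unlink. Given a combinatorially encoded folded state $F$ of $P$, the image $F(\partial P)$ of the free edges of $P$ is a collection of polygonal circles which can be read off in polynomial time, and $F$ is valid exactly when this collection is an unlink; this is precisely an instance of \textsc{Unlink}.

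The entire content of the theorem therefore lies in the reduction $\textsc{Unlink}\le_p \textsc{Folding}(\Ccal)$, and the key obstacle to clear first is topological. Because a folded state is an \emph{embedding}, $F(P)$ is an embedded copy of the planar surface $P$, so $\partial F(P)$ always bounds the embedded planar surface $F(P)$ and can never be, say, a single trefoil; hence I cannot realise a given link $L$ directly as $\partial F(P)$. Instead, given $L$, I would build a connected planar surface $\Sigma$ having an isotopic copy of $L$ as a spine: take a $0$-framed annular neighbourhood of each component of $L$ and join these annuli into one planar surface by trivial (unknotted, untwisted) bands routed inside a small ball. I would then realise $\Sigma$ as the image of a folding $F$ of a polyomino $P$ \emph{into} $\Ccal$. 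For this I would first isotope $L$ into the shell $\{x:\operatorname{dist}(x,\Ccal)<\epsilon\}$ around the cube (possible since this shell is homeomorphic to $S^2\times(-\epsilon,\epsilon)\cong\mathbb R^3\setminus\{\mathrm{pt}\}$, into which every link embeds), place $L$ in a cubical position on a fine grid, and let $P$ be a strip-and-band polyomino that wraps repeatedly around $\Ccal$ following this grid, the over/under information of $L$ being encoded by the layering of the paper. Each wrap covers $\Ccal$ once, so only the blueprint condition $\beta\colon P\to\Ccal$ is required, not surjectivity.

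With such an $F$ in hand the equivalence is clean and again rests on Theorem~\ref{thm:foldedstatevalid}. If $L$ is an unlink, then $\Sigma$ is isotopic to a flat planar surface, so $\partial F(P)=\partial\Sigma$ is an unlink and $F$ is valid. Conversely, if $F$ is valid then by Theorem~\ref{thm:foldedstatevalid} there is a polyhedral sphere $Q$ with $F(P)=\Sigma\subseteq Q$; the spine of $\Sigma$, an isotopic copy of $L$, then consists of disjoint circles lying on the $2$-sphere $Q$, and disjoint circles on a sphere bound disjoint disks on that sphere and hence form an unlink. Thus the copy of $L$, and therefore $L$ itself, is an unlink. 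Consequently $F$ is valid if and only if $L$ is an unlink, which is exactly the required reduction.

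I expect the \textbf{main obstacle} to be making the wrapping construction of the third paragraph fully explicit: routing the connecting bands and the layered strip around $\Ccal$ so that $F$ is a genuine piecewise-linear embedding realising the prescribed spine, and verifying that the number of unit squares — and hence the size of the encoded output $(P,F)$ — is polynomial in the number of segments (or the grid number) of the input link $L$. The topological half of the argument, by contrast, is cheap once the construction is in place, since it only reuses Theorem~\ref{thm:foldedstatevalid} together with the elementary fact that disjoint circles on a sphere form an unlink.
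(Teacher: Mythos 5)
Your high-level architecture coincides with the paper's: the same three-reduction cycle, the same two easy reductions, and the same strategy for the hard one (thicken each link component into a paper annulus, join the annuli into one connected surface, and decide validity via Theorem~\ref{thm:foldedstatevalid}). Your argument that validity forces the spine onto a polyhedral sphere, whence $L$ is an unlink, is correct and clean, and your insistence on $0$-framed annuli is a point of care (the paper's blackboard-framed strips implicitly need a writhe normalisation for the converse direction). However, there is a genuine gap at the crux of the reduction: you never ensure that the object $P$ you build is actually a \emph{polyomino}, i.e.\ an abstract square complex admitting a flat folding blueprint in which images of distinct faces overlap only in edges. This is a rigid condition: in a flat blueprint no face may fold back onto its neighbour, so the planar image of a strip of squares is forced, square by square, by the gluing combinatorics (its ``development''). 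A closed strip wrapping around $\Ccal$ generally has no flat blueprint at all --- already the four-square equatorial belt of the cube develops to a straight strip that cannot close up --- and strips following a diagram curve with crossings develop with two-dimensional overlaps. So ``realise $\Sigma$ as the image of a folding of a polyomino $P$'' fails for the natural routings, and the reduction must output a legitimate instance $(P,F)$ for \emph{every} input link; one cannot first check whether the routing happens to work, since that is entangled with the knottedness one is trying to decide. Your ``main obstacle'' paragraph anticipates difficulties with routing and polynomial size, but not this flat-embeddability obstruction, which is the real issue.

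This is precisely what the paper's one essential trick is for, and it is the idea missing from your sketch: the strips following the link components are first produced only as \emph{abstract} polyominoes; then each closed strip is \emph{cut open} (a cut strip is flat-embeddable) and the two ends are \emph{reconnected by a frame} of squares which, in the folded state, are all collapsed by $180\degree$ folds onto a single unit square (Figure~\ref{subfig:createpolyomino}). The folded image is then not your clean $\Sigma$ but a surface carrying the same spine with a small crumpled attachment, and the unlink equivalence goes through unchanged, while $P$ is now a genuine polyomino by construction. Two minor further remarks: wrapping $L$ around the cube shell is unnecessary --- the paper keeps the whole construction near a plane and folds it down to a single unit square at the end, which already satisfies the blueprint condition $\beta\colon P\to\Ccal$; and with the frame trick in place, your spine-on-a-sphere argument would serve as a correct (and arguably tidier) substitute for the paper's ``side-by-side embedding'' verification.
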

\begin{proof}[Proof sketch]
    \textsc{Folding}$(\Ccal)$ trivially can be reduced to \textsc{Folding}, and  \textsc{Folding} can be reduced to \textsc{Unlink} by Theorem \ref{thm:foldedstatevalid}. Hence we only need to find a reduction from \textsc{Unlink} to \textsc{Folding}$(\Ccal)$. We provide a proof sketch and leave the (easy but tedious) details to the reader.

    Every link can be represented by a diagram, that is, a projection to the $x$--$y$-plane in which for every pair of crossing lines we record which one lies `above' and which one lies `below'. Moreover (by applying a suitable ambient isotopy) we may assume that all lines in this diagram lie in an $\epsilon$-neighbourhood of the $x$-axis, each crossing lies in an $\epsilon$-neighbourhood of $x = n$ for some $n \in \mathbb N$, and no two crossings lie in a neighbourhood of the same $x$. 

    Let $L$ be an arbitrary link, represented by a link diagram as above. We can create a polyomino $P$ together with a folded state $F$ from this diagram as follows, see Figure~\ref{fig:polyominofromdiagram}. 
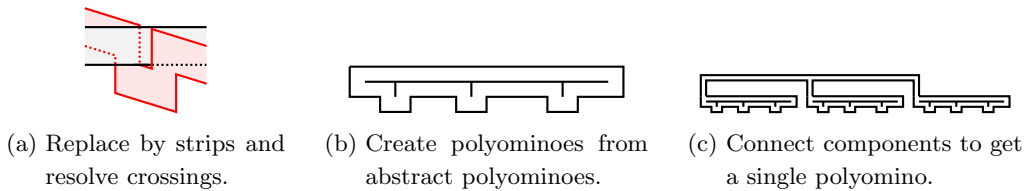
\begin{figure}
\centering

	\begin{subfigure}[t]{.25\textwidth}
	    \centering
	\begin{tikzpicture}[xscale=.8,yscale=.5]
	        \path[fill=gray!10] (-1,1)--(1,1)--(1,.5)--(0.1,1)--(0.1,0)--(-1,0)--cycle;
	        \path[fill=red!10] (-1,1.5)--(-0.1,1.05)--(-0.1,1)--(-1,1)--cycle;
	        \path[fill=red!10] (-0.1,0)--(0.1,-0.1)--(0.1,0.95)--(1,0.5)--(1,-.5)--(.5,-.25)--(.5,-1.25)--(-.5,-.75)--(-.5,0)--cycle;
	        \draw[thick,densely dotted,red,opacity=.7] (-1,.5)--(-.5,.25)--(-.5,0);
	        \draw[thick,densely dotted,red,opacity=.7] (-0.1,1)--(-0.1,0);
	        \draw[thick,densely dotted,opacity=.7] (0.1,0)--(1,0);
	        \draw[thick,red] (-1,1.5)--(-0.1,1.05)--(-0.1,1);
	        \draw[thick,red] (-.5,0)--(-.5,-.75)--(.5,-1.25)--(.5,-.25)--(1,-.5);
	        \draw[thick] (-1,1)--(1,1);
	        \draw[thick] (-1,0)--(0.1,0);
	        \draw[thick,red] (-0.1,0)--(0.1,-0.1)--(0.1,0.95)--(1,0.5);
	        \draw[thick] (-.2,0)--(.05,0);
	\end{tikzpicture}
	\caption{Replace by strips and resolve crossings.}
	\label{subfig:resolvecrossing}
	\end{subfigure}
	\quad
	\begin{subfigure}[t]{.29\textwidth}
	    \centering
	        \begin{tikzpicture}[scale=.2]
	            \draw[thick] (0,0)--(16,0);
	            \draw[thick] (2,0)--(2,-1);
	            \draw[thick] (7,0)--(7,-1);
	            \draw[thick] (13,0)--(13,-1);
	            \draw[thick] (-1,1)--(-1,-1)--(1,-1)--(1,-2)--(3,-2)--(3,-1)--(6,-1)--(6,-2)--(8,-2)--(8,-1)--(12,-1)--(12,-2)--(14,-2)--(14,-1)--(17,-1)--(17,1)--(-1,1);
	        \end{tikzpicture}
	    \caption{Create polyominoes from abstract polyominoes.}
	    \label{subfig:createpolyomino}
	\end{subfigure}
	\quad
	\begin{subfigure}[t]{.3\textwidth}
        \centering
            \begin{tikzpicture}[scale=.07]
                \draw[thick] (0,0)--(16,0);
            \draw[thick] (2,0)--(2,-1);
            \draw[thick] (7,0)--(7,-1);
            \draw[thick] (13,0)--(13,-1);
            \draw[thick] (-1,1)--(-1,-1)--(1,-1)--(1,-2)--(3,-2)--(3,-1)--(6,-1)--(6,-2)--(8,-2)--(8,-1)--(12,-1)--(12,-2)--(14,-2)--(14,-1)--(17,-1)--(17,1)--(0,1);
            \begin{scope}[xshift=20cm]
            \draw[thick] (0,0)--(16,0);
            \draw[thick] (2,0)--(2,-1);
            \draw[thick] (7,0)--(7,-1);
            \draw[thick] (13,0)--(13,-1);
            \draw[thick] (-1,1)--(-1,-1)--(1,-1)--(1,-2)--(3,-2)--(3,-1)--(6,-1)--(6,-2)--(8,-2)--(8,-1)--(12,-1)--(12,-2)--(14,-2)--(14,-1)--(17,-1)--(17,1)--(0,1);
            \end{scope}
            \begin{scope}[xshift=40cm]
                \draw[thick] (0,0)--(16,0);
                \draw[thick] (2,0)--(2,-1);
                \draw[thick] (7,0)--(7,-1);
                \draw[thick] (13,0)--(13,-1);
                \draw[thick] (-1,1)--(-1,-1)--(1,-1)--(1,-2)--(3,-2)--(3,-1)--(6,-1)--(6,-2)--(8,-2)--(8,-1)--(12,-1)--(12,-2)--(14,-2)--(14,-1)--(17,-1)--(17,1)--(0,1);
            \end{scope}

            \draw[thick] (-1,1)--(-1,5)--(40,5)--(40,1);
            \draw[thick] (0,1)--(0,4)--(19,4)--(19,1);
            \draw[thick] (20,1)--(20,4)--(39,4)--(39,1);
            \end{tikzpicture}
        \caption{Connect components to get a single polyomino.}
        \label{subfig:connectpolyomino}
        \end{subfigure}
    \caption{Creating a polyomino from a link diagram}
    \label{fig:polyominofromdiagram}
\end{figure}
First replace every line in the diagram by a surface strip of width $1$ contained in an $\epsilon$-neighbourhood of the $x$--$z$-plane, and resolve crossings by attaching gadgets to the strip as shown in Figure \ref{subfig:resolvecrossing}. This gives a collection of abstract polyominoes $P_1, \dots, P_k$ (one for each connected component of $L$) together with folded states $F_1, \dots, F_k$ of these abstract polyominoes. Each $P_i$ is homeomorphic to an annulus, and each $F_i$ corresponds to a folding blueprint $\beta_i \colon P_i \to Q$ where $Q$ is a strip in the $x$--$z$-plane. Moreover, $\bigcup_{i} F_i(P_i)$ is ambient isotopic to a side-by-side embedding of the abstract polyominoes if and only if $L$ is an unlink.

Next we obtain a polyomino $P_i'$ from the abstract polyomino $P_i$ by `cutting along a line in $z$-direction' and `reconnecting the sides of the cut by  a frame', see  Figure \ref{subfig:createpolyomino}. It is not hard to see that $P_i'$ admits a folded state $F_i'$ corresponding to a folding blueprint $\beta_i' \colon P_i' \to Q$ such that $\bigcup_{i} F_i'(P_i')$ is ambient isotopic to trivial side-by-side embeddings of the polyominoes $P_i'$ if and only if $L$ is an unlink. For instance, $\beta_i'$ may map all faces of the frame we added in this step to the same unit square in $Q$.

We obtain a polyomino $P$ by connecting the polyominoes $P_i'$ as shown in Figure~\ref{subfig:connectpolyomino}. Once again, it is straightforward to check that there is a folded state $F$ of $P$ which corresponds to a folding blueprint $\beta \colon P \to Q$ such that $F$ is valid if and only if $\bigcup_{i} F_i'(P_i')$ is ambient isotopic to trivial side-by-side embeddings of the polyominoes $P_i'$. Hence $F$ is a folding if and only if $L$ is an unlink.

Finally, observe that starting from $F$ and `folding $180\degree$ along all edges in $x$-direction' and then `folding $180\degree$ along all edges in $z$-direction' yields a folded state $F'$ corresponding to a folding blueprint mapping $P$ to a unit square (and therefore to the unit cube).
\end{proof}

\section{Consistent mappings to the unit cube}
\label{sec:mapping}

Let us now consider polyominoes and the  unit cube as (combinatorial) polygonal complexes. A \emph{consistent mapping} from a polyomino $P$ to the unit cube $\Ccal$ is a homomorphism $\varphi$ of the corresponding polygonal complexes, i.e., it maps faces to faces, edges to edges and vertices to vertices and preserves the incidence relation. We note that consistent mappings can be seen as a combinatorial version of folding blueprints; indeed, they carry precisely the same information. Up to isomorphism there is exactly one surjective consistent mapping from the standard cube net depicted in Figure \ref{fig:cubenet} to the unit cube, which can be used to define a labelling on the unit cube $\Ccal$. This labelling gives us a one-to-one correspondence between consistent mappings from a polyomino $P$ to $\Ccal$ and certain labellings of the unit squares of $P$. 

\begin{figure}[ht]
    \centering
\begin{tikzpicture}[xscale=0.7,yscale=0.7]
\initcube\printcube
\rolldown\printcube
\rollup\rollup\printcube
\rolldown\rollright\printcube
\rollleft\rollleft\printcube
\rollleft\printcube
\begin{scope}[shift={(-.5,-.5)}]
\draw[thick] (-2,0)--(0,0)--(0,-1)--(1,-1)--(1,0)--(2,0)--(2,1)--(1,1)--(1,2)--(0,2)--(0,1)--(-2,1)--cycle;
\draw[dashed] (-1,1)--(-1,0);
\draw[dashed] (0,1)--(0,0);
\draw[dashed] (1,1)--(1,0);
\draw[dashed] (1,0)--(0,0);
\draw[dashed] (1,1)--(0,1);
\end{scope}
\end{tikzpicture}
    \caption{Standard cube net}
    \label{fig:cubenet}
\end{figure}
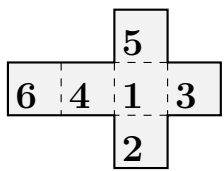

\begin{rmk}
Consistent mappings from a polyomino to $\Ccal$ are exactly the mappings yielded by the algorithm explained in \cite[section 4.3]{aich19}. An implementation in Python is freely available online, a link can be found in the mentioned paper.
\end{rmk}

Clearly every valid folding of a polyomino $P$ into a cube gives a consistent mapping in the obvious way. However, not the complete information about a specific folding can be seen from its respective consistent mapping. In particular it is not possible to distinguish $+180^\circ$ folds from $-180^\circ$ folds. This also means that if a face of $\Ccal$ is covered by several different unit squares of $P$, the consistent mapping does not provide the order in which the unit squares appear on this face. 

\begin{exa}
\label{exa:stampfolding}
    Consider a strip of some integer length $k$, that is a rectangular polyomino of size $1 \times k$.
    Then one possible consistent mapping $\varphi$ maps each unit square of $P$ onto the same face $\one$ of $\Ccal$, see Figure~\ref{fig:strip}. It is not hard to see that there are several valid foldings yielding this specific consistent mapping $\varphi$. Clearly every edge must be folded by either $+180^\circ$ or $-180^\circ$ and for every given sequence of $k-1$ angles in $\{+180^\circ, -180^\circ$\} there is a folding of $P$ such that the vertical edges are folded according to the sequence. Indeed, starting at one side of the strip we can consecutively fold each edge as determined by the sequence such that the next face lies on the outside of the `multi-layered square' containing the squares encountered up to this points. Thus we get at least $2^{k-1}$ different foldings for our strip. Counting the number of possible foldings of the strip of length $k$ into a square is known as the stamp folding problem. 
See Chapter 7 in Martin Gardners book~\cite{gardner1983} for the history of this long standing open problem.    
     Its solutions are currently known up to $k=45$ (see e.g. sequence A000136 in the online encyclopedia of integer sequences at https://oeis.org/A000136); for small $k$ they are given by the sequence
    \[
        1, 2, 6, 16, 50, 144, 462, 1392, 4536, 14060, 46310, 146376, 485914, 1557892, 5202690, \dots.
    \]

    \begin{figure}[ht]
        \centering
        \begin{tikzpicture}[xscale=0.7,yscale=0.7]
        \initcube\printcube
        \flipright\printcube
        \flipright\printcube
        \flipright\printcube
        \flipright\flipright\printcube
        \flipright\printcube
        \flipright\printcube
        \begin{scope}[shift={(-.5,-.5)}]
        \draw (0,1)--(0,0)--(8,-0)--(8,1)--cycle;
        \foreach \i in {1,...,7}{
            \draw[dashed] (\i,1)--(\i,0);}
        \node () at (4.5,0.5) {$\cdots$};
        \end{scope}
        \end{tikzpicture}
        \caption{Strip of size $1 \times k$ for even $k$}
        \label{fig:strip}
        \end{figure}
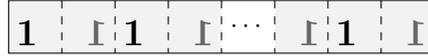
\end{exa}

This example suggests that if we want to uniquely identify a given folding, it is reasonable to also include information about the order in which the unit squares of the polyomino appear on the cube to a consistent mapping. We say that square $s$ of polyomino $P$ lies in layer $l$ of a folding of $P$ into $\Ccal$ if there are exactly $l-1$ squares of $P$ covering the same face as $s$ and lying below $s$. 
A \emph{layer map} of a polyomino $P$ for a given consistent mapping $\varphi$ is a map $\ell$ assigning to each unit square $s$ of $P$ an integer between 1 and the number of unit squares of $P$ mapped to the face $\varphi(s)$ of $\Ccal$ such that its restriction to $\varphi^{-1}(f)$ is a bijection for every face $f$ of $\Ccal$. The triple $(P,\varphi, \ell)$ is called a \emph{pseudo-folding}. It is called \emph{surjective} if $\varphi$ is surjective.

\subsection{Validity of pseudo-foldings}

Recall that by Definitions~\ref{def:folding} and \ref{def:foldinginto} a folding of a polyomino $P$ into $\Ccal$ is a realisation $F$ of a folding blueprint $\beta: P \to \Ccal$ which is equivalent to a trivial folding. Clearly any realisation of a folding blueprint $\beta: P \to \Ccal$ uniquely defines a pseudo-folding $(P,\varphi, \ell)$: The consistent mapping $\varphi$ is given by $\beta$, while the layer map $\ell$ is provided by the order of the images of the squares of $P$ under the embedding $F: P \to \mathbb R^3$ for each of the faces of the cube $\Ccal$. In this case we say that the folded state $F$ \emph{realises} the pseudo-folding $(P,\varphi, \ell)$, even if it is not equivalent to a trivial folding and thus no folding.

\begin{dfn}
    A pseudo-folding is \emph{valid}, if it corresponds to a folding of $P$ into $\Ccal$.
\end{dfn}

For the rest of this section we want to discuss obstacles preventing pseudo-foldings from being valid. Instead of giving rigorous definitions for these obstacles, we decided to present them by discussing simple examples where those occur. According to the above discussion, there are two things that can go wrong: A given pseudo-folding might not have any realisation or none of its realisations is equivalent to a trivial folding. 

Let us start with the first issue and present two examples of consistent mappings which cannot have any realisations. This is due to the fact that any possible layer map forces self-intersections and thus cannot be realised by any embedding $F$.

\begin{exa}\label{exa:non-foldable-unit-square-hole}
Let us consider the consistent mapping in Figure~\ref{fig:notfoldablehole}, which was implicitly already discussed in the proof of \cite[Lem. 9]{aich19}. In any valid folding into $\Ccal$, whenever two unit squares with the same label have neighbours with the same label, then the order of their respective layers must be retained, otherwise there would be self-intersections. In particular, if the $\one$ in the top left corner is folded on top of the $\one$ in the bottom right corner, then the top $\three$ is folded on top of the bottom $3$ and continuing this argument along the cycle $\one-\three-\six-\two-\one-\three-\six-\two-\one$ leads to a contradiction.

\begin{figure}[ht]
    \centering
\begin{tikzpicture}[xscale=0.7,yscale=0.7]
\initcube\printcube
\rolldown\printcube
\rolldown\printcube
\rollright\printcube
\rollright\printcube
\rollup\printcube
\rollright\printcube\rollleft
\rollup\printcube
\rollup\printcube\rolldown
\rollleft\printcube
\begin{scope}[shift={(-.5,-.5)}]
\draw[thick] (0,1)--(2,1)--(2,2)--(3,2)--(3,0)--(4,0)--(4,-1)--(3,-1)--(3,-2)--(0,-2)--cycle;
\draw[thick] (1,0)--(1,-1)--(2,-1)--(2,0)--cycle;
\draw[dashed] (0,0)--(1,0);
\draw[dashed] (2,0)--(3,0);
\draw[dashed] (0,-1)--(1,-1);
\draw[dashed] (2,-1)--(3,-1);

\draw[dashed] (1,1)--(1,0);
\draw[dashed] (2,1)--(2,0);
\draw[dashed] (1,-1)--(1,-2);
\draw[dashed] (2,-1)--(2,-2);

\draw[dashed] (2,1)--(3,1);
\draw[dashed] (3,0)--(3,-1);
\end{scope}
\end{tikzpicture}
    \caption{A surjective consistent mapping that cannot be realised by any folded state}
    \label{fig:notfoldablehole}
\end{figure}
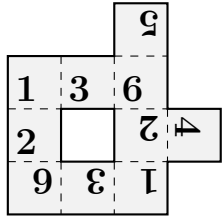
\end{exa}

\begin{exa}\label{exa:self-intersection-fan}
To show that there is no valid folding for the consistent mapping shown in Figure~\ref{fig:notfoldablefan}, consider the order of the five unit squares in the third (and fourth, fifth) column in which they are folded on top of each other. Note that any row ending with a T-shape cannot lie between two rows not having a `T' in the same direction in this order. This is the case because the T-part would lie between those two layers and thus self-intersection would occur. In particular row 1 and 5 cannot be placed between two other rows and the same holds for 2 and 4. Thus without loss of generality rows 1 and 5 are folded on top of all other rows, and row 2 and 4 are folded on bottom. It is now easy to check that none of these permutations give a valid folding. Starting with row 1, row 2 has to be folded downwards, then row 3 upwards, row 4 again downwards and thus ends up between 2 and 3. In particular row 5 cannot be folded upwards to end up next to row 1.

\begin{figure}[ht]
    \centering
\begin{tikzpicture}[xscale=0.7,yscale=0.7]
    \initcube\flipleft\rollleft\rollup \printcube
    \rolldown\printcube
    \rolldown\printcube\rollup
    \rollright\printcube
    \flipright\printcube
    \rollright\printcube
    \rollright\printcube
    \rollright\printcube \rollleft
    \flipdown\printcube

    \flipright\printcube
    \rollright\printcube
    \rollup\printcube\rolldown
    \rolldown\printcube\rollup
    \rollleft\flipleft

    \rollleft\printcube
    \rollleft\printcube
    \rollleft\printcube\rollright
    \flipdown \rollleft \printcube
    \rollright\printcube
    \rollright\printcube
    \rollright\printcube
    \rollright\printcube \rollleft
    \flipdown \printcube
    \flipright\printcube
    \rollright\printcube
    \rollright\printcube
    \rollup\printcube\rolldown
    \rolldown\printcube\rollup
    \rollleft\rollleft\flipleft

    \rollleft\printcube
    \rollleft\printcube
    \rollleft\printcube\rollright

    \flipdown\printcube
    \flipleft\printcube
    \rollleft\printcube
    \rollup\printcube\rolldown
    \rolldown\printcube\rollup
    \rollright\flipright

    \rollright\printcube
    \rollright\printcube
    \rollright\printcube

    \begin{scope}[shift={(-2.5,.5)}]
    \draw[thick] (0,1)--(1,1)--(1,0)--(7,0)--(7,-2)--(8,-2)--(8,-5)--(7,-5)--(7,-4)--(6,-4)--(6,-5)--(1,-5)--(1,-6)--(0,-6)--(0,-3)--(1,-3)--(1,-2)--(0,-2)--cycle;
    \draw[thick] (2,-1)--(1,-1)--(1,-2)--(2,-2);
    \draw[thick] (2,-3)--(1,-3)--(1,-4)--(2,-4);
    \draw[thick] (6,0)--(6,-1)--(5,-1);
    \draw[thick] (5,-2)--(6,-2)--(6,-3);
    \draw[thick] (5,-3)--(7,-3)--(7,-2);
    \draw[thick] (5,-4)--(6,-4);

    \foreach \i in {0,-1,-4,-5}
        \draw[dashed] (0,\i)--(1,\i);
    \draw[dashed] (1,0)--(1,-1);
    \draw[dashed] (1,-4)--(1,-5);
    \foreach \i in {2,3,4,5}
        \draw[dashed] (\i,0)--(\i,-5);
    \foreach \i in {-1,-2,-3,-4}
        \draw[dashed] (2,\i)--(5,\i);
    \draw[dashed] (6,-1)--(6,-2);
    \draw[dashed] (6,-3)--(6,-4);
    \draw[dashed] (6,-1)--(7,-1);
    \draw[dashed] (6,-2)--(7,-2);
    \draw[dashed] (7,-3)--(7,-4);
    \draw[dashed] (7,-3)--(8,-3);
    \draw[dashed] (7,-4)--(8,-4);
    \end{scope}
\end{tikzpicture}
    \caption{A surjective consistent mapping of a simply connected polyomino that cannot be realised by a folded state}
    \label{fig:notfoldablefan}
\end{figure}
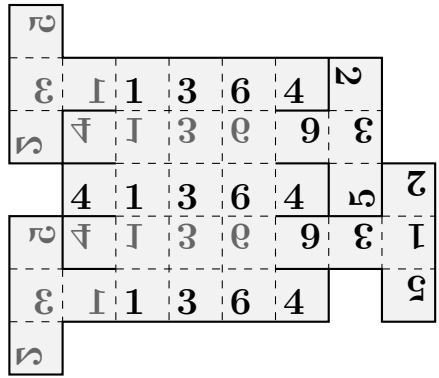
\end{exa}

Self-intersections always occur at pairs of edges of our polyomino mapped to the same edge of the cube by a pseudo-folding. They are caused by layer conflicts of the two faces incident to each of the edges. Each of the edges is folded by either $90\degree$ or $180\degree$. Thus there are three types of intersections, shown in Figure~\ref{fig:intersections}. Firstly, if both edges are folded by $90\degree$, the relative order of their faces must be preserved. Secondly, a $90\degree$ folded edge cannot lie between the two faces of a $180\degree$ fold. Finally, the layers of two $180\degree$ folded edges should be well nested. While the consistent mapping of Example~\ref{exa:non-foldable-unit-square-hole} forces a  self-intersection at two edges folded by $90\degree$, any pseudo-folding obtained from the consistent mapping in Example~\ref{exa:self-intersection-fan} forces an intersection of a $90\degree$ folded edge with a $180\degree$ folded edge.

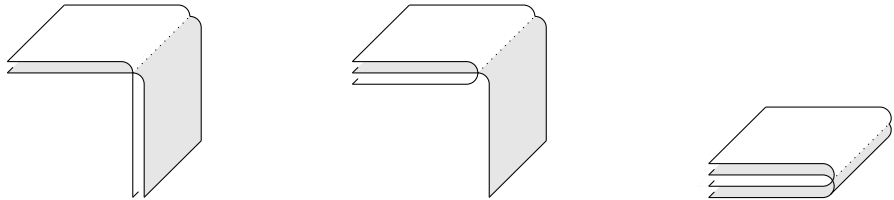
\begin{figure}[ht]
    \centering
\begin{subfigure}{.3\textwidth}
    \centering
    \begin{tikzpicture}[scale=1.5]
    \fill[gray!20] (1.6,1.6) to[out=0,in=90] (1.7,1.5) -- (1.7,0.5) -- (1.2,0) -- (1.2,1) to[out=90,in=0]  (1.1,1.1) ;
    \fill[gray!20] (0,1.1) -- (0.1,1.2) -- (1,1.2) to[out=0,in=90] (1.1,1.1);
    \draw (0,1.2) -- (0.5,1.7);
    \draw (1.2,0) -- (1.7,0.5);
    \draw (0,1.1) -- (0.05,1.15);
    \draw (1.1,0) -- (1.15,0.05);
    \draw[dotted] (1.1,1.1) -- (1.6,1.6);
    \draw (0,1.1) -- (1.1,1.1) to[out=0,in=90] (1.2,1) -- (1.2,0);
    \draw (0,1.2) -- (1,1.2) to[out=0,in=90] (1.1,1.1) -- (1.1,0);
    \begin{scope}[xshift=0.5cm,yshift=0.5cm]
    \draw (1.1,1.1) to[out=0,in=90] (1.2,1) -- (1.2,0);
    \draw (0,1.2) -- (1,1.2) to[out=0,in=90] (1.1,1.1);
    \end{scope}
    \end{tikzpicture}
\end{subfigure}
\begin{subfigure}{.3\textwidth}
    \centering
    \begin{tikzpicture}[scale=1.5]
    \fill[gray!20] (1.6,1.6) to[out=0,in=90] (1.7,1.5) -- (1.7,0.5) -- (1.2,0) -- (1.2,1) to[out=90,in=0]  (1.1,1.1) ;
    \fill[gray!20] (0,1.1) -- (0.1,1.2) -- (1,1.2) to[out=0,in=90] (1.1,1.1);
    \draw (0,1.2) -- (0.5,1.7); 
    \draw (1.2,0) -- (1.7,0.5);
    \draw (0,1.1) -- (0.05,1.15);
    \draw (0,1) -- (0.05,1.05);
    \draw[dotted] (1.1,1.1) -- (1.6,1.6);
    \draw (0,1.1) -- (1.1,1.1) to[out=0,in=90] (1.2,1) -- (1.2,0);
    \draw (0,1.2) -- (1,1.2) to[out=0,in=90] (1.1,1.1) to[out=-90,in=0] (1,1) -- (0,1);
    \begin{scope}[xshift=0.5cm,yshift=0.5cm]
    \draw (1.1,1.1) to[out=0,in=90] (1.2,1) -- (1.2,0);
    \draw (0,1.2) -- (1,1.2) to[out=0,in=90] (1.1,1.1);
    \end{scope}
    \end{tikzpicture}
\end{subfigure}
\begin{subfigure}{.3\textwidth}
    \centering
    \begin{tikzpicture}[scale=1.5]
    \fill[gray!20] (0.1,1.3) -- (1,1.3) to[out=0,in=90] (1.1,1.2) to[out=-90,in=0] (1,1.1) -- (-0.1,1.1);
    \fill[gray!20] (0.2,1.2) -- (1,1.2) to[out=0,in=90] (1.1,1.1) to[out=-90,in=0] (1,1.0) -- (0,1.0);
    \begin{scope}
    \clip (0.1,1.3) -- (1,1.3) to[out=0,in=90] (1.1,1.2) to[out=-90,in=0] (1,1.1) -- (-0.1,1.1)--cycle;
    \fill[white] (-0.1,1.2) -- (1,1.2) to[out=0,in=90] (1.1,1.1) to[out=-90,in=0] (1,1.0) -- (-0.1,1.0);
    \end{scope}
    \fill[gray!20] (1.08,1.15) to[out=-45,in=90] (1.1,1.1) to[out=-90,in=45] (1.07,1.03) -- (1.57,1.53) to[out=45,in=-90] (1.6,1.6) to[out=90,in=-45] (1.58,1.65); 
    \draw (0,1.3) -- (0.5,1.8); 
    \draw (0,1.2) -- (0.05,1.25);
    \draw (0,1.1) -- (0.05,1.15);
    \draw (1.07,1.03) -- (1.57,1.53);
    \draw (0,1) -- (0.05,1.05);
    \draw[dotted] (1.08,1.15) -- (1.58,1.65);
    \draw (0,1.3) -- (1,1.3) to[out=0,in=90] (1.1,1.2) to[out=-90,in=0] (1,1.1) -- (0,1.1);
    \draw (0,1.2) -- (1,1.2) to[out=0,in=90] (1.1,1.1) to[out=-90,in=0] (1,1) -- (0,1);
    \begin{scope}[xshift=0.5cm,yshift=0.5cm]
    \draw (0,1.3) -- (1,1.3) to[out=0,in=90](1.1,1.2) to[out=-90,in=45] (1.08,1.15);
    \draw (1.08,1.15) to[out=-45,in=90] (1.1,1.1) to[out=-90,in=45] (1.07,1.03);
    \end{scope}
    \end{tikzpicture}
\end{subfigure}
    \caption{Different types of self-intersections}
    \label{fig:intersections}
\end{figure}

But even if a given pseudo-folding has realisations, it might still not be valid. This happens if any ($\Leftrightarrow$ each) of its realisation is not equivalent to a trivial folding. Intuitively this means that the realisation $F$ could not have been obtained from the flat polyomino without any self-intersections occurring during the folding process. From Theorem~\ref{thm:foldedstatevalid} we know that this happens exactly if the boundary of $F(P)$ is not the unlink. Obviously this can only happen if the polyomino $P$ contains at least one hole. 

We provide two structurally different examples of two realisable pseudo-foldings which are not valid, starting with an example of a pseudo-folding containing a twist.

\begin{exa}\label{exa:twisted}
    Consider the pseudo-folding $(P,\varphi,\ell)$ given in Figure~\ref{fig:notfoldablewithouttwist}, where layer numbers $\ell$ are drawn as roman numerals in each square. 
    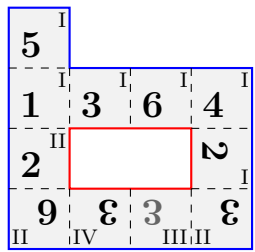
\begin{figure}[ht]
        \centering
    \begin{tikzpicture}[xscale=0.8,yscale=0.8]
        \initcube\printcube
        \rollup\printcube
        \rolldown
        \rolldown\printcube
        \rolldown\printcube
        \rollright\printcube
        \flipright\printcube
        \flipright\printcube
        \rollup\printcube
        \rollup\printcube
        \rollleft\printcube
        \rollleft\printcube
        \begin{scope}[shift={(-.5,-.5)}]
        \draw[blue,thick] (0,1)--(0,2)--(1,2)--(1,1)--(4,1)--(4,-2)--(0,-2)--cycle;
        \draw[red,thick] (1,0)--(3,0)--(3,-1)--(1,-1)--cycle;
        \foreach \i in {1,...,3}{
            \draw[dashed] (\i,1) -- (\i,0);
            \draw[dashed] (\i,-1) -- (\i,-2);
            \draw[dashed] (0,\i-2) -- (1,\i-2);
        }
        \node[anchor=east,inner sep=1pt] at (1,0.8) {\RNum{1}};
        \node[anchor=east,inner sep=1pt] at (1,1.8) {\RNum{1}};
        \node[anchor=east,inner sep=1pt] at (2,0.8) {\RNum{1}};
        \node[anchor=east,inner sep=1pt] at (3,0.8) {\RNum{1}};
        \node[anchor=east,inner sep=1pt] at (4,0.8) {\RNum{1}};
        \node[anchor=east,inner sep=1pt] at (4,-0.8) {\RNum{1}};
        \node[anchor=west,inner sep=1pt] at (3,-1.8) {\RNum{2}};
        \node[anchor=east,inner sep=1pt] at (3,-1.8) {\RNum{3}};
        \node[anchor=west,inner sep=1pt] at (1,-1.8) {\RNum{4}};
        \node[anchor=west,inner sep=1pt] at (0,-1.8) {\RNum{2}};
        \node[anchor=east,inner sep=1pt] at (1,-0.2) {\RNum{2}};
        \draw[dashed] (3,0) -- (4,0) (3,-1) -- (4,-1); 
        \end{scope}
    \end{tikzpicture}
        \caption{A pseudo-folding containing a twist}
        \label{fig:notfoldablewithouttwist}
    \end{figure}

    One can easily check that there are no self-intersection. However, it is still not possible to realise this pseudo-folding. In fact, we could `cheat' by cutting along one edge of the polyomino, twisting one of the two resulting ends by $360\degree$ and glue them back together to obtain the requested result. In other words, the pseudo-folding is not ambient isotopic to the flat polyomino, but to a `doubly twisted Möbius polyomino'.
    A good way to see this is by looking at the linking diagram of the two boundaries, which is drawn on the left of Figure~\ref{fig:linkingdiagramtwist}. It is not hard to see that this diagram is equivalent (via Reidemeister moves) to the two linked rings shown on the right.

    \begin{figure}[ht]
        \centering
    \begin{tikzpicture}[x=1.9cm,y=1.9cm]
        \draw (-2,0)--(0,0)--(0,-1)--(1,-1)--(1,0)--(2,0)--(2,1)--(1,1)--(1,2)--(0,2)--(0,1)--(-2,1)--cycle;
        \draw[dashed] (-1,1)--(-1,0);
        \draw[dashed] (0,1)--(0,0);
        \draw[dashed] (1,1)--(1,0);
        \draw[dashed] (1,0)--(0,0);
        \draw[dashed] (1,1)--(0,1);
        \node at (0.5,0.5) {\Large $\one$};
        \node at (0.5,1.5) {\Large $\five$};
        \node at (1.5,0.5) {\Large $\three$};
        \node at (0.5,-0.5) {\Large $\two$};
        \node at (-0.5,0.5) {\Large $\four$};
        \node at (-1.5,0.5) {\Large $\six$};
        \draw[blue, thick] (0.1,-1.1) -- (0.1,1.9) -- (0.9,1.9) -- (0.9,0.9) -- (2.1,0.9) -- (2.1,2.1) -- (-2.1,2.1) -- (-2.1,0.9) -- (-0.1,0.9) -- (-0.1,-0.1) -- (0.05,-0.1) (0.15,-0.1) -- (0.85,-0.1) (0.95,-0.1) -- (1.1,-0.1) -- (1.1,0.8) -- (1.9, 0.8) -- (1.9, 0.7) -- (1.2, 0.7) -- (1.2, 0.6) -- (2.2, 0.6) -- (2.2, 2.2) -- (-2.2, 2.2) -- (-2.2, 0.8) -- (-1.1,0.8) -- (-1.1,-1.1) -- (0.1,-1.1);
        \draw[red, thick] (0.9,0.3) -- (1.05,0.3) (1.15,0.3) -- (2.2,0.3) -- (2.2,-1.4) -- (-2.2,-1.4) -- (-2.2,0.2) -- (-1.15,0.2) (-1.05,0.2) -- (-0.9,0.2) -- (-0.9,-0.9) -- (0.05,-0.9) (0.15,-0.9) -- (0.85,-0.9) (0.95,-0.9) -- (1.9,-0.9) -- (1.9,0.1) -- (1.2,0.1) -- (1.2,0.2) -- (2.1,0.2) -- (2.1,-1.3) -- (-2.1,-1.3) -- (-2.1,0.1) -- (-1.9,0.1) -- (-1.9,-1.2) -- (0.9,-1.2) -- (0.9,0.3);
        \draw[->] (2.7,0.5) -- (3.2,0.5);
        \begin{scope}[xshift = 8.2cm, yshift = 1cm]
        \begin{knot}[
                flip crossing=1,
                clip width=8,
            ]
            \strand[red,thick] (0,-0.5) circle[radius=1.3cm];
            \strand[blue,thick] (0,0.5) circle[radius=1.3cm];
        \end{knot}
        \end{scope}
    \end{tikzpicture}
        \caption{Linking diagram of the two boundaries.}
        \label{fig:linkingdiagramtwist}
    \end{figure}
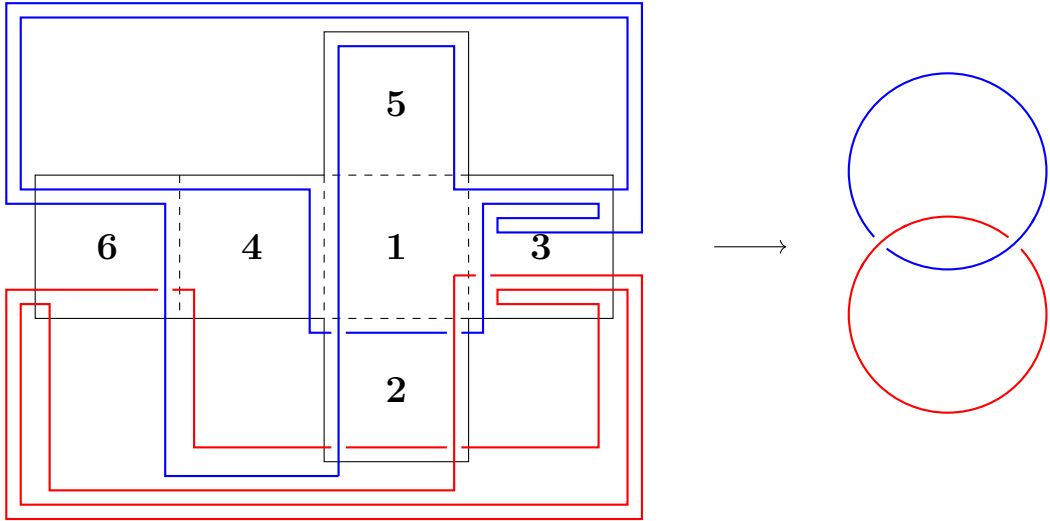
\end{exa}

Twistedness only occurs by multiples of $360\degree$, as realisations of a given consistent mapping must always be orientable surfaces. In particular, non-orientable surfaces such as Möbius strips cannot appear. 

For the polyomino $P$ and the consistent mapping $\varphi$ of Example~\ref{exa:twisted} there are in total 16 different layer maps not leading to any self-intersections. Out of these 16 maps, 4 provide valid pseudo-foldings, 8 include a $360\degree$ twist, and 4 include a $720\degree$ twist. In particular in some cases a realisable pseudo-folding can be made valid by replacing only the layer map. The following example shows that this is not always the case.

\begin{exa}
    Consider the polyomino $P$ and the consistent mapping $\varphi$ depicted in Figure~\ref{fig:realisable-not-valid}. It is not hard to find a layer map $\ell$ such that the pseudo-folding $(P,\varphi,\ell)$ is realisable, one example can be seen in the figure. However, a brute force case-by-case analysis shows that none of the 60 layer maps which do not lead to self-intersections provides a valid pseudo-folding.
    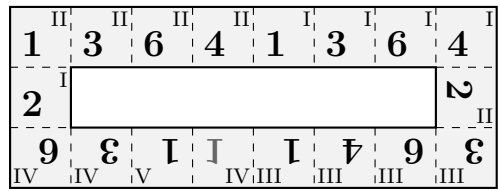
\begin{figure}[ht]
        \centering
    \begin{tikzpicture}[xscale=0.8,yscale=0.8]
        \initcube\printcube
        \rolldown\printcube
        \rolldown\printcube
        \rollright\printcube
        \rollright\printcube
        \flipright\printcube
        \flipright\printcube
        \rollright\printcube
        \rollright\printcube
        \rollright\printcube
        \rollup\printcube
        \rollup\printcube
        \rollleft\printcube
        \rollleft\printcube
        \rollleft\printcube
        \rollleft\printcube
        \rollleft\printcube
        \rollleft\printcube
        \begin{scope}[shift={(-.5,-.5)}]
        \draw[thick] (0,1)--(8,1)--(8,-2)--(0,-2)--cycle;
        \draw[thick] (1,0)--(7,0)--(7,-1)--(1,-1)--cycle;
        \foreach \i in {1,...,7}{
            \draw[dashed] (\i,1) -- (\i,0);
            \draw[dashed] (\i,-1) -- (\i,-2);
        }
        \foreach \i in {0,-1}{
            \draw[dashed] (0,\i) -- (1,\i) (7,\i) -- (8,\i);  
        }
        \node[anchor=east,inner sep=1pt] at (1,0.8) {\RNum{2}};
        \node[anchor=east,inner sep=1pt] at (2,0.8) {\RNum{2}};
        \node[anchor=east,inner sep=1pt] at (3,0.8) {\RNum{2}};
        \node[anchor=east,inner sep=1pt] at (4,0.8) {\RNum{2}};
        \node[anchor=east,inner sep=1pt] at (5,0.8) {\RNum{1}};
        \node[anchor=east,inner sep=1pt] at (6,0.8) {\RNum{1}};
        \node[anchor=east,inner sep=1pt] at (7,0.8) {\RNum{1}};
        \node[anchor=east,inner sep=1pt] at (8,0.8) {\RNum{1}};
        \node[anchor=east,inner sep=1pt] at (8,-0.8) {\RNum{2}};
        \node[anchor=west,inner sep=1pt] at (7,-1.8) {\RNum{3}};
        \node[anchor=west,inner sep=1pt] at (6,-1.8) {\RNum{3}};
        \node[anchor=west,inner sep=1pt] at (5,-1.8) {\RNum{3}};
        \node[anchor=west,inner sep=1pt] at (4,-1.8) {\RNum{3}};
        \node[anchor=east,inner sep=1pt] at (4,-1.8) {\RNum{4}};
        \node[anchor=west,inner sep=1pt] at (2,-1.8) {\RNum{5}};
        \node[anchor=west,inner sep=1pt] at (1,-1.8) {\RNum{4}};
        \node[anchor=west,inner sep=1pt] at (0,-1.8) {\RNum{4}};
        \node[anchor=east,inner sep=1pt] at (1,-0.2) {\RNum{1}};
        \end{scope}
    \end{tikzpicture}
        \caption{A consistent mapping that cannot be realised without a twist.}
        \label{fig:realisable-not-valid}
    \end{figure}
\end{exa}

Finally, we provide an example of a pseudo-folding with a single hole, where the frame of the hole is non-trivially knotted.

\begin{exa}
    Consider the pseudo-folding $(P,\varphi,\ell)$ given in Figure~\ref{fig:notfoldablewithoutknot}. 
    \begin{figure}[ht]
        \centering
    \begin{tikzpicture}[xscale=0.8,yscale=0.8]
        \initcube\printcube
        \flipdown\printcube
        \rollright\printcube
        \flipup\printcube
        \rollright\printcube
        \flipdown\printcube
        \rollright\printcube
        \flipup\printcube
        \rollright\printcube
        \flipup\printcube
        \flipleft\printcube
        \rollleft\printcube
        \flipup\printcube
        \rollleft\printcube
        \flipdown\printcube
        \rollleft\printcube
        \rollleft\printcube
        \flipdown\printcube
        \begin{scope}[shift={(-.5,-.5)}]
        \foreach \i in {0,1,3,4}{\draw[dashed] (\i,1) -- (\i,2);}
        \foreach \i in {0,2,4}{\draw[dashed] (\i,0) -- (\i,1);}
        \draw[dashed] (0,0)--(4,0) (-1,1)--(0,1) (4,1)--(5,1) (1,2)--(3,2);
        \draw[dashed] (1,0)--(1,-1) (3,0)--(3,-1) (2,2)--(2,3);
        \draw[blue,thick] (0,0)--(0,-1)--(4,-1)--(4,0)--(5,0)--(5,2)--(3,2)--(3,3)--(1,3)--(1,2)--(-1,2)--(-1,0)--cycle;
        \draw[blue,thick] (2,0)--(2,-1);
        \draw[thick] (0,1)--(4,1) (1,0)--(1,1) (2,1)--(2,2) (3,0)--(3,1);
        \node[anchor=east,inner sep=1pt] at (1,0.8) {\RNum{2}};
        \node[anchor=east,inner sep=1pt] at (1,-0.8) {\RNum{3}};
        \node[anchor=east,inner sep=1pt] at (2,-0.8) {\RNum{4}};
        \node[anchor=east,inner sep=1pt] at (2,0.8) {\RNum{1}};
        \node[anchor=east,inner sep=1pt] at (3,-0.8) {\RNum{2}};
        \node[anchor=east,inner sep=1pt] at (3,0.8) {\RNum{3}};
        \node[anchor=east,inner sep=1pt] at (4,-0.8) {\RNum{3}};
        \node[anchor=east,inner sep=1pt] at (4,0.8) {\RNum{1}};
        \node[anchor=east,inner sep=1pt] at (5,0.8) {\RNum{6}};
        \node[anchor=east,inner sep=1pt] at (5,1.2) {\RNum{5}};
        \node[anchor=west,inner sep=1pt] at (3,1.2) {\RNum{1}};
        \node[anchor=west,inner sep=1pt] at (2,1.2) {\RNum{2}};
        \node[anchor=west,inner sep=1pt] at (2,2.8) {\RNum{3}};
        \node[anchor=west,inner sep=1pt] at (1,2.8) {\RNum{4}};
        \node[anchor=west,inner sep=1pt] at (1,1.2) {\RNum{1}};
        \node[anchor=west,inner sep=1pt] at (0,1.2) {\RNum{2}};
        \node[anchor=west,inner sep=1pt] at (-1,1.2) {\RNum{7}};
        \node[anchor=west,inner sep=1pt] at (-1,0.8) {\RNum{4}};
        \end{scope}
    \end{tikzpicture}
        \caption{A pseudo-folding containing a knot.}
        \label{fig:notfoldablewithoutknot}
    \end{figure}
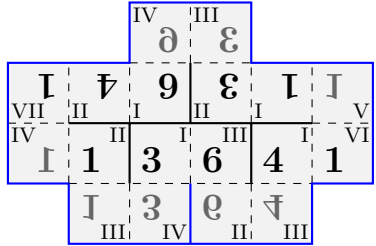
    Again, it is not possible to realise this pseudo-folding. Again, we can `cheat' by cutting one edge of the polyomino, thus ending up with a polyomino $P'$ without any holes. Then one can just realise the pseudo-folding for $P'$ and in the end glue the two sides of the cut edge back together. When trying to unfold the result, it will be clear relatively quickly, that the folded state is knotted. To make this clear, we look at the knot diagram of the outer boundary of $P$ depicted on the left of Figure~\ref{fig:linkingdiagramknot}. As our consistent mapping only maps onto the faces $\one$, $\three$, $\four$ and $\six$, it is reasonable to draw the frontal view of the cube, looking directly at the face $\two$ so that the layers of the four mentioned faces become visible. It is not hard to see that the knot of our boundary is the trefoil knot on the right of Figure~\ref{fig:linkingdiagramknot}, which is the simplest non-trivial knot.

    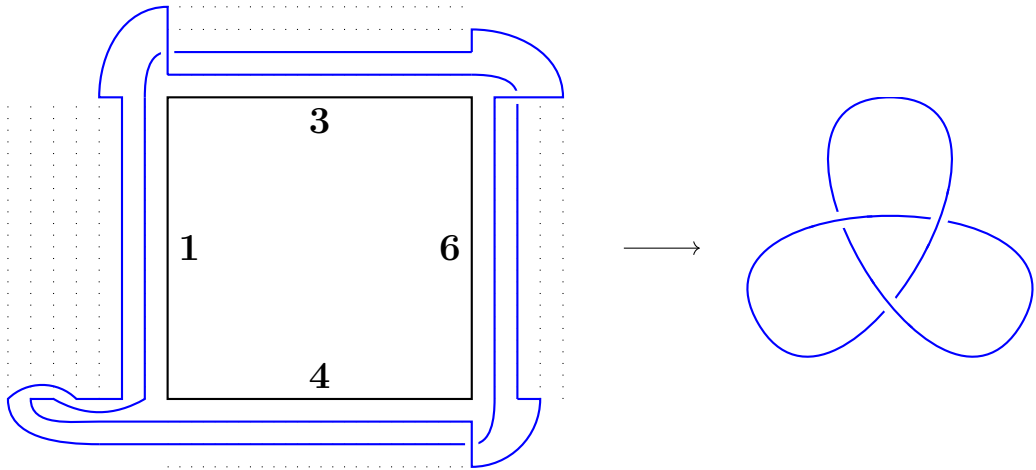
\begin{figure}[ht]
        \centering
    \begin{tikzpicture}[x=1cm,y=1cm]
        \draw[thick] (0,0)-- node[right] {\Large $\one$} (0,4)-- node[below] {\Large $\three$} (4,4)-- node[left] {\Large $\six$} (4,0)-- node[above] {\Large $\four$} cycle;
        \foreach \i in {-2.1,-1.8,-1.5,-1.2,-0.9,-0.6,-0.3,4.3,4.6,4.9,5.2}{
            \draw[loosely dotted] (\i,0) -- (\i,4);}
        \foreach \i in {-0.9,-0.6,-0.3,4.3,4.6,4.9,5.2}{
            \draw[loosely dotted] (0,\i) -- (4,\i);}
        \draw[blue,thick] (-0.3,4) to[out=90,in=180] (0,4.6) -- (4,4.6);
        \draw[white, line width=5pt] (0,4.3) -- (0,4.9);
        \draw[blue,thick] (4,4.3) to[out=0,in=90] (4.6,4) -- (4.6,0);
        \draw[white, line width=5pt] (4.3,4) -- (4.9,4);
        \draw[blue,thick] (4.3,0) to[out=-90,in=0] (4,-0.6) -- (-0.9,-0.6);
        \draw[white, line width=5pt] (4,-0.3) -- (4,-0.9);
        \draw[blue,thick] (0,4.3) -- (4,4.3) (4.6,0) -- (4.9,0) to [out=-90,in=0] (4,-0.9) -- (4,-0.3) -- (-0.9,-0.3) to[out=180,in=-90] (-1.8,0) -- (-1.5,0) to[out=-30,in=210] (-0.3,0) -- (-0.3,4) (4,4.6) -- (4,4.9) to[out=0,in=90] (5.2,4) -- (4.3,4) -- (4.3,0) (-0.9,-0.6) to[out=180,in=-90] (-2.1,0) to[out=45,in=135] (-1.2,0) -- (-0.6,0) -- (-0.6,4) -- (-0.9,4) to[out=90,in=180] (0,5.2) -- (0,4.3);
        \draw[->] (6,2) -- (7,2);
        \begin{scope}[xshift = 9.5cm, yshift = 2cm]
            \clip (2,-2)--(-2,-2)--(-2,2)--(2,2)--cycle;
        \begin{knot}[
            consider self intersections=true,
            flip crossing=2,
            clip width=8,
            ]
        \strand[blue,thick] (0,2) .. controls +(2.2,0) and +(120:-2.2) .. (210:2) .. controls +(120:2.2) and +(60:2.2) .. (-30:2) .. controls +(60:-2.2) and +(-2.2,0) .. (0,2);
        \end{knot}
        \end{scope}
    \end{tikzpicture}
        \caption{Linking diagram of the the outer boundary.}
        \label{fig:linkingdiagramknot}
    \end{figure}
\end{exa}

In the proof of Theorem~\ref{thm:folding-eq-unlink} we have provided a way to translate a given link diagram $L$ consisting of $k>0$ closed curves to a polyomino $P$ with $k$ holes and a folded state $F$ corresponding to a folding blueprint mapping $P$ to a unit square.  The main idea is to first realise every curve in $L$ as the inner boundary of a polyomino with a single hole and then connecting the resulting $k$ polyominoes to obtain a single polyomino $P$, see also Figure~\ref{fig:polyominofromdiagram} for a reminder. In fact, the inner boundaries (the boundaries of the $k$ holes) of the polyomino are knotted and linked in precisely the way given by $L$. From this we immediately obtain that every possible knot can be realised as a folding of some polyomino $P$ with a single hole into (and also onto) $\Ccal$.


We have seen four examples of obstacles preventing pseudo-foldings from being valid.  Obviously every polyomino can be trivially folded onto a single square and thus into $\Ccal$ using only $180\degree$ folds. In fact every example polyomino in this subsection admits a valid pseudo-folding whose consistent mapping onto $\Ccal$ is surjective and thus defines a folding onto $\Ccal$. In this context, the following important question remains open.

\begin{quest}
    Is there any polyomino which admits surjective pseudo-foldings, but none that are valid and define foldings?
\end{quest}

We finish this subsection with some algorithmic considerations. By Theorem \ref{thm:folding-eq-unlink}, deciding whether a given pseudo-folding is realisable is polynomially equivalent to \textsc{Unlink}. Since \textsc{Unlink} is known to be in NP \cite{HLP99}, this shows that the problem \textsc{CubeFolding} of deciding whether a given polyomino can be folded onto $\Ccal$ is in NP. In particular, there is an algorithm to solve this decision problem. 

However, even if there is an efficient algorithm for \textsc{Unlink}, the naive algorithm for \textsc{CubeFolding} has exponential runtime due to the fact that there are exponentially many pseudo-foldings. We conjecture that there is an efficient algorithm.

\begin{conj}
    \textsc{CubeFolding} can be solved in polynomial time. 
\end{conj}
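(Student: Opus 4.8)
The plan is to separate the two independent difficulties already isolated in this section: the \emph{combinatorial} problem of choosing a layer map without self-intersections, and the \emph{topological} problem of guaranteeing that some such choice yields a folded state whose boundary is an unlink. First I would reduce \textsc{CubeFolding} to the question of whether a fixed surjective consistent mapping $\varphi$ can be completed to a valid pseudo-folding, and then try to argue that it suffices to examine only polynomially many candidate mappings $\varphi$. The most appealing route is to prove the stronger statement that the mere \emph{existence} of a surjective consistent mapping already guarantees foldability — equivalently, to answer the open question raised just above in the negative. This collapse is attractive because a layer map always exists formally (one may order the squares on each face arbitrarily), so ``admits a surjective pseudo-folding'' is the same as ``admits a surjective consistent mapping''; the existence of the latter is a constraint-satisfaction problem on the square complex that one can hope to decide in polynomial time by exploiting the rolling-group structure of $\Ccal$ together with the planarity of $P$.

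Second, for the sub-case of simply connected polyominoes I would argue that validity reduces exactly to realisability. Since such a polyomino is a disk, the boundary of any folded state is a single circle and hence automatically an unlink, so by Theorem~\ref{thm:foldedstatevalid} every realisable pseudo-folding of it is already valid. It then remains only to decide whether \emph{some} layer map avoids self-intersections. The self-intersection constraints are local: they live on pairs of edges of $P$ identified to a common edge of $\Ccal$ and come in the three types of Figure~\ref{fig:intersections}. I would model these as ordering constraints and attempt to show the resulting feasibility problem is polynomial, phrasing the well-nestedness of $180\degree$ folds and the order-preservation of $90\degree$ folds as a consistency condition on a partial order, in the spirit of the cases where flat-foldability of a restricted crease pattern becomes tractable.

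Third, for polyominoes with holes one must additionally rule out the twist and knot phenomena of Example~\ref{exa:twisted} and of the knotted instance in Figure~\ref{fig:notfoldablewithoutknot}: even a self-intersection-free realisation can fail to be a folding because its boundary components are knotted or linked. Here I would try to show that whenever a surjective consistent mapping exists, the layer map and the fold signs ($+180\degree$ versus $-180\degree$) can be chosen so that each boundary component bounds an embedded disk among its own layers while distinct components are separated into disjoint balls, thereby forcing the unlink by the argument of Theorem~\ref{thm:foldedstatevalid}. The hoped-for mechanism is that this freedom is always sufficient to ``untwist'' and ``unknot'' any instance that is combinatorially realisable at all.

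The main obstacle, and the reason this remains a conjecture rather than a theorem, is precisely this third step. Theorem~\ref{thm:folding-eq-unlink} shows that \emph{checking a fixed} folded state is polynomially equivalent to \textsc{Unlink}, which is not known to lie in $\mathrm P$; the entire hope for \textsc{CubeFolding} rests on the idea that \emph{optimising over} layer maps is strictly easier than verifying a single one, because we are free to select the most favourable layering. Converting that intuition into a proof — either through a structural dichotomy showing every combinatorially realisable instance can be untangled, or through a direct polynomial algorithm that constructs an unlinked layering — is exactly the gap I do not expect to close by routine means, and it is where any genuine progress toward the conjecture must be made.
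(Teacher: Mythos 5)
The statement you were asked to prove is not a theorem of the paper: it appears there as a \emph{conjecture}, stated immediately after the authors observe that the naive algorithm for \textsc{CubeFolding} runs in exponential time because there are exponentially many pseudo-foldings. The paper offers no proof of it, so there is no proof to compare yours against --- and your submission, by its own final paragraph, is a research plan rather than a proof. That self-assessment is accurate: the proposal does not establish the conjecture.

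On the substance of the plan, two of its three pillars are themselves open problems, so the argument is circular or incomplete exactly where it would need to do work. First, the ``collapse'' in your opening step --- that the existence of a surjective consistent mapping already implies foldability --- is precisely the paper's open Question (whether some polyomino admits surjective pseudo-foldings but none that are valid); you may not assume a negative answer, and even granting it you would still need a polynomial-time decision procedure for the existence of a surjective consistent mapping, which you do not supply: every edge of the dual carries an independent roll-or-fold choice, so consistent mappings can be exponentially numerous and ``examine only polynomially many candidates'' requires an argument you do not give. Second, your step for simply connected polyominoes contains one sound observation --- the boundary of any embedded folded state of a disk bounds that embedded disk and is therefore unknotted, so realisability and validity coincide, which matches the paper's remark that non-valid realisable pseudo-foldings can only occur in the presence of holes --- but the residual combinatorial problem (does \emph{some} layer map avoid the three self-intersection types?) is a map-folding-style ordering feasibility question whose complexity is notoriously unresolved, and no algorithm is proposed, only the hope that it resembles tractable special cases. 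Third, the untwisting/unknotting step for polyominoes with holes is, as you say yourself, the gap you cannot close. In short: your proposal correctly identifies and organises the difficulties, consistently with the paper's discussion, but it proves nothing beyond what the paper already records as open, and it should not be presented as a proof attempt so much as a statement of why the conjecture is hard.
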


\section{Rectangular polyominoes with one simple hole do not fold}
\label{sec:one_hole}

{\scshape Aichholzer et al.} showed in \cite[Thm. 1]{aich19} that the presence of a non-simple hole in any polyomino guarantees foldability onto $\Ccal$. Additionally, in \cite[Thm. 11]{aich19} and \cite[Thm. 15]{aich19} the authors showed that rectangular polyominoes with only a single square hole or a slit of size 1 and no other holes cannot fold onto $\Ccal$. The goal of this section is to extend this second statement also to the three other classes of simple holes shown in Figure~\ref{fig:simpleholes} to obtain the following main result of this section. 

\begin{thm}\label{thm:one-simple-hole}
    A rectangular polyomino $P$ with a simple hole $h$ and no other holes does not fold onto $\Ccal$.
\end{thm}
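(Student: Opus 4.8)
The plan is to assume, for contradiction, that $P$ folds onto $\Ccal$ and to locate a forced self-intersection in a neighbourhood of the hole $h$, thereby generalising the obstructions of Examples~\ref{exa:non-foldable-unit-square-hole} and~\ref{exa:self-intersection-fan}. A folding onto $\Ccal$ yields a surjective consistent mapping $\varphi$ together with a valid layer map $\ell$; in particular the associated pseudo-folding is realisable by some folded state $F$. First I would fix such a pair $(\varphi,\ell)$ and record that, by Theorem~\ref{thm:foldedstatevalid}, $F$ has no self-intersections, so none of the three conflict types of Figure~\ref{fig:intersections} may occur along any two edges of $P$ that $\varphi$ identifies. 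The whole argument then amounts to showing that surjectivity, combined with the rigidity of $\varphi$ near $h$, makes one of these conflicts unavoidable.

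The second step is a structural analysis of $\varphi$ in a neighbourhood of $h$. Since the cases of a unit square hole and of a slit of size $1$ are already settled in \cite[Thms.~11 and~15]{aich19}, it remains to treat the I-slit, the L-slit and the U-slit of Figure~\ref{fig:simpleholes}. For each of these the complement of the hole is a full sub-rectangle of squares, so $\varphi$ restricted to the squares surrounding $h$ is a rolling-type labelling, determined up to the symmetries of $\Ccal$ by very few choices. I would use this to describe the cyclic sequence of labels read off along the frame of $h$, and show that it necessarily repeats a full equatorial $4$-cycle of cube faces (the analogue of the cycle $\one\,\three\,\six\,\two$ appearing in Example~\ref{exa:non-foldable-unit-square-hole}); that is, the image walk winds at least once around an equator of $\Ccal$. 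Because surjectivity rules out the degenerate low-coverage labellings (such as the all-$\one$ mapping of Example~\ref{exa:stampfolding}), this winding cannot be avoided. This is most transparent for the symmetric I-slit, whose frame reduces the analysis to a single pattern, and I would carry it out per hole type.

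With the label pattern in hand, the third step is to propagate a layer inequality around the frame. Consider the four cube faces of the equator singled out above and the squares of $P$ mapped onto them along the frame. Where two consecutive frame edges are both folded by $90\degree$, the first conflict type of Figure~\ref{fig:intersections} forces the relative order of the corresponding layers to be preserved; following this rule once around the closed frame yields a layer value that is strictly monotone along a cycle, which is impossible---exactly the contradiction used in Example~\ref{exa:non-foldable-unit-square-hole}. Where the frame instead uses a $180\degree$ fold, I would invoke the second conflict type (a $90\degree$ fold cannot be nested inside a $180\degree$ fold), which blocks the relevant layer from taking the position demanded by the monotonicity, just as in Example~\ref{exa:self-intersection-fan}. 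Either way a conflict of Figure~\ref{fig:intersections} is forced, so $(\varphi,\ell)$ is not realisable; as $\varphi$ and $\ell$ were arbitrary, $P$ does not fold onto $\Ccal$.

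The main obstacle I expect is the second step: establishing the forced equatorial winding for \emph{every} surjective consistent mapping rather than for a single convenient one. The asymmetric L-slit is the delicate case, since its frame is irregular and admits more local labellings than the I- and U-slits, so the case analysis must be organised carefully---ideally by bounding, up to the symmetries of $\Ccal$ and of the rectangle, the number of labellings of the squares bordering $h$ to a short finite list. A secondary difficulty is the bookkeeping needed for frames that mix $90\degree$ and $180\degree$ folds, where the two conflict types of Figure~\ref{fig:intersections} must be combined along a single cycle; keeping track of which fold occurs at each frame edge, and verifying that surjectivity forces at least one $90\degree$ fold on the equator, is where the tedious details lie.
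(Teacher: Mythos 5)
Your proposal targets the wrong obstruction, and its key intermediate claim is false for the holes you actually need to handle. For the three types remaining after citing \cite{aich19} (I-slit, U-slit, L-slit), the paper proves something strictly stronger than non-realisability: \emph{no surjective consistent mapping from $P$ to $\Ccal$ exists at all} (Propositions~\ref{pro:I-slit-size-2} and~\ref{pro:U-slit}). Its proof enumerates the finitely many non-trivial consistent mappings on the support of the hole and propagates each of them through the ambient rectangle using Corollary~\ref{cor:extension} (on a hole-free rectangular sub-polyomino, $\varphi$ is constant on rows or constant on columns); in every case some face of $\Ccal$ is never covered. Layer maps, pseudo-foldings and the conflict types of Figure~\ref{fig:intersections} never enter the argument: they are the right tool for the unit square hole of Example~\ref{exa:non-foldable-unit-square-hole}, where surjective consistent mappings do exist and fail only at the realisability level, but not here. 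Incidentally, the paper never performs the delicate L-slit case analysis you anticipate: it simply cuts one more edge to turn the L-slit into a U-slit and observes that any folding of the former induces a folding of the latter.

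The concrete gap is your second step. The dichotomy you assert --- every surjective consistent mapping either forces an equatorial $4$-cycle of labels around the frame of $h$, or is a low-coverage labelling excluded by surjectivity --- is exactly what needs proof, and the first horn is in fact empty. The non-trivial local patterns around a slit do not wind: for the I-slit there is, for example, a non-trivial consistent mapping whose two support columns read $\one,\two,\two,\one$ and $\three,\two,\two,\three$, so all four squares flanking the slit land on the single face $\two$ (the two sides of the slit merely occupy different edges of that face); no equatorial cycle appears and no layer monotonicity can even be started. The only way to exclude such patterns from a hypothetical surjective mapping is to propagate them across the whole rectangle via Corollary~\ref{cor:extension} --- and once you carry out that propagation you discover that \emph{no} local pattern extends surjectively, which finishes the proof while rendering your first and third steps vacuous. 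As planned, your argument either stalls at step two or silently collapses into the paper's argument, with the self-intersection machinery as dead weight.
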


We will proof the statement separately first for I-slits of size 2 in Proposition~\ref{pro:I-slit-size-2} and then for U-slits of size 3 and L-slits of size 3 in Proposition~\ref{pro:U-slit}. But first we start with some preparations.

\begin{lem}[{\cite[Lem. 7]{aich19}}] \label{lem:extension}
In any folding of a polyomino $P$ into $\Ccal$, the four faces around a vertex not in the boundary of $P$ cannot cover more than 2 faces of $\Ccal$. In particular either all four faces cover the same face of $\Ccal$ or there are two adjacent faces of $P$ covering one face of $\Ccal$ and the other two faces cover a second face of $\Ccal$.
\end{lem}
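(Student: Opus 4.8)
The plan is to analyse the local picture of the folding at the interior vertex $v$ by studying how a small loop around $v$ is mapped to a small loop around its image vertex on $\Ccal$. Since $v$ lies in the interior of $P$, it is surrounded by exactly four faces $A_1,A_2,A_3,A_4$ in cyclic order, forming a $2\times 2$ block, where consecutive faces $A_i$ and $A_{i+1}$ (indices mod $4$) share an edge $e_i$ incident to $v$. In the folding, $v$ is mapped to a vertex $w$ of $\Ccal$, each $A_i$ is embedded near a face $\varphi(A_i)$ incident to $w$, and each $e_i$ is folded by $\pm 90\degree$ or $\pm 180\degree$. I would first record the two consequences of the fold angle at $e_i$: a $\pm 180\degree$ fold forces $\varphi(A_i)=\varphi(A_{i+1})$ (the two faces land on the same face of $\Ccal$), while a $\pm 90\degree$ fold forces $\varphi(A_i)$ and $\varphi(A_{i+1})$ to be distinct but adjacent faces of $\Ccal$ sharing the edge $\varphi(e_i)$.

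The key step is a winding/cone-angle argument. Fix an orientation of $P$ and, for each face $A_i$, let $\sigma_i\in\{+1,-1\}$ record whether the embedding preserves or reverses this orientation relative to the outward orientation of $\Ccal$ on $\varphi(A_i)$. A $\pm 180\degree$ fold flips the sign ($\sigma_{i+1}=-\sigma_i$), whereas a $\pm 90\degree$ fold preserves it ($\sigma_{i+1}=\sigma_i$), since wrapping the paper over a convex cube edge keeps the same side facing outward. Now consider the image of a small circle around $v$: it is a closed curve lying within $\epsilon$ of $w$ on the surface of $\Ccal$, and I would measure the total signed angle it sweeps around $w$. Each face $A_i$ contributes $\sigma_i\cdot 90\degree$, so the total equals $90\degree(\sigma_1+\sigma_2+\sigma_3+\sigma_4)$. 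On the other hand, this closed curve winds some integer number $d$ of times around $w$, and the cone angle at a cube vertex is $3\cdot 90\degree=270\degree$, so the total signed angle also equals $d\cdot 270\degree$. Since $\sigma_1+\sigma_2+\sigma_3+\sigma_4$ is even and lies in $\{-4,\dots,4\}$, the quantity $90\degree(\sigma_1+\cdots+\sigma_4)$ ranges over $\{0,\pm 180\degree,\pm 360\degree\}$, and the only one of these that is a multiple of $270\degree$ is $0\degree$. Hence $\sum_i\sigma_i=0$: exactly two of the faces preserve and two reverse the orientation.

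It then remains to read off the conclusion from the sign pattern. Up to cyclic rotation, $(\sigma_1,\sigma_2,\sigma_3,\sigma_4)$ is either $(+,-,+,-)$ or $(+,+,-,-)$, and sign changes occur exactly at the $180\degree$-folded edges. In the first case all four edges are $180\degree$ folds, so $\varphi(A_1)=\varphi(A_2)=\varphi(A_3)=\varphi(A_4)$ and all four faces cover a single face of $\Ccal$. In the second case exactly two opposite edges (say $e_2$ and $e_4$) are $180\degree$ folds and the other two are $90\degree$ folds; this yields $\varphi(A_1)=\varphi(A_4)=:X$ and $\varphi(A_2)=\varphi(A_3)=:Y$ with $X,Y$ distinct and adjacent, so the two adjacent faces $A_1,A_4$ cover one face of $\Ccal$ and the two adjacent faces $A_2,A_3$ cover a second. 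In either case at most two faces of $\Ccal$ are covered, which is the claim.

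I expect the main obstacle to be making the winding/cone-angle step fully rigorous: one must justify that the image of the small loop around $v$ is a closed curve that stays near $w$ (which follows from $\epsilon<\tfrac12$), that its signed turning around $w$ is a well-defined integer multiple of the $270\degree$ cone angle, and that the fold angle at each edge has precisely the stated effect on both the target face and the orientation sign $\sigma_i$. Once this local bookkeeping is set up correctly, the remaining case analysis is immediate.
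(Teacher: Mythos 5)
You should first note that the paper does not prove this lemma at all: it is imported verbatim from the prior work as \cite[Lem.~7]{aich19}, so there is no in-paper proof to compare against. Judged on its own, your proof is correct and self-contained, and its structure is sound: (i) at each of the four edges at $v$ the two incident squares must map either to the same face of $\Ccal$ (a $\pm 180\degree$ fold) or to the two distinct faces sharing the image edge (a $\pm 90\degree$ fold), since an edge of $\Ccal$ lies in exactly two faces; (ii) a $180\degree$ fold flips the outward/inward orientation sign while a $90\degree$ fold preserves it; (iii) the image of a small loop around $v$ winds around the image vertex $w$, whose cone angle is $270\degree$, forcing $90\degree\sum_i\sigma_i \in 270\degree\,\mathbb{Z}$, hence $\sum_i \sigma_i = 0$ by parity; (iv) the two sign patterns $(+,-,+,-)$ and $(+,+,-,-)$ then yield exactly the two configurations in the statement. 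It is worth emphasizing that the full winding-number argument, not merely the parity consequence ``the number of $180\degree$ folds around $v$ is even,'' is genuinely needed: parity alone does not exclude, say, four $90\degree$ folds whose image walk visits three faces and returns (e.g.\ $f_1\to f_2\to f_3\to f_2\to f_1$), whereas your condition $\sum_i\sigma_i=3d$ kills this case since all four signs would be equal. (Alternatively one can exclude such walks combinatorially by observing that the isometry on a single square maps its two edges at $v$ to two \emph{distinct} edges at $w$, so consecutive $90\degree$ folds cannot backtrack; your route avoids this bookkeeping.)

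Two small points of rigor: the orientation signs $\sigma_i$ and the loop image should be defined via the folding blueprint $\beta$ (which maps each square isometrically \emph{onto} a face of $\Ccal$, so the loop lies exactly on the surface and vertices of $P$ land on vertices of $\Ccal$), rather than via the $\epsilon$-close embedded state $F$; since $\epsilon<\tfrac12$, the blueprint is recoverable from $F$, so nothing is lost. With that adjustment your argument in fact proves the statement for arbitrary folding blueprints (equivalently, consistent mappings), which is the generality in which the paper actually uses it via Corollary~\ref{cor:extension}.
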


\begin{cor} \label{cor:extension}
Let $P$ be a polyomino and $P'$ be a rectangular sub-polyomino of $P$ whose interior does not contain any boundary of $P$. Then any consistent mapping $\varphi$ of $P$ to $\Ccal$ must be constant on each row of $P'$ or constant on each column of $P'$. 
\end{cor}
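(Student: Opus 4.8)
The plan is to isolate Lemma~\ref{lem:extension} as the only geometric input and then run a purely combinatorial propagation argument on the $2\times 2$ sub-blocks of $P'$. First I would note that every interior grid vertex $v$ of $P'$ is an interior vertex of $P$: since the interior of $P'$ contains no boundary of $P$, the point $v$ is not on $\partial P$, the four unit squares of $P'$ around $v$ are all faces of $P$, and they are glued around $v$ without slits. Hence the local conclusion of Lemma~\ref{lem:extension} applies at $v$. Although that lemma is phrased for foldings, its conclusion concerns only which face of $\Ccal$ each face of $P$ covers, i.e. the induced consistent mapping, and it therefore holds for an arbitrary consistent mapping $\varphi$: because $\varphi$ restricts to an isomorphism on each square face, it sends the two edges of a face meeting at $v$ to the two \emph{distinct} edges of its image face meeting at $\varphi(v)$, which is exactly what forbids a third covered face.

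Next I would put the constraint in matrix form. Labelling the squares of $P'$ by row and column as $(r,c)$ and recording $\varphi(r,c)$ as an $m\times n$ matrix $M$, the four squares around the interior vertex between rows $r,r+1$ and columns $c,c+1$ form the contiguous block $\left(\begin{smallmatrix} M_{r,c} & M_{r,c+1}\\ M_{r+1,c} & M_{r+1,c+1}\end{smallmatrix}\right)$. The two options of Lemma~\ref{lem:extension} (all four faces equal, or two adjacent faces covering one face of $\Ccal$ and the other two covering a second) say precisely that this block either has both rows constant or both columns constant; equivalently, around each interior vertex one cannot have a horizontal disagreement and a vertical disagreement at once. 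So the claim reduces to the clean statement: if every contiguous $2\times 2$ block of $M$ has constant rows or constant columns, then all rows of $M$ are constant or all columns are constant.

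The heart of the argument is a propagation claim: if $M_{r,c}\neq M_{r,c+1}$ for a single pair of entries, then columns $c$ and $c+1$ are constant throughout $M$, and in fact $M_{r',c}\neq M_{r',c+1}$ for every row $r'$. I would prove this by induction through the rows: the block at rows $r,r+1$ has a non-constant top row, so the local constraint forces it to have constant columns, giving $M_{r+1,c}=M_{r,c}$ and $M_{r+1,c+1}=M_{r,c+1}$; in particular row $r+1$ again disagrees across columns $c,c+1$, and the step repeats upward and downward through all rows. By the symmetric statement, a single vertical disagreement $M_{s,d}\neq M_{s+1,d}$ forces rows $s$ and $s+1$ to be constant.

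Finally I would combine the two. Suppose for contradiction that $M$ has both a horizontal disagreement (at some columns $c,c+1$) and a vertical disagreement (at some rows $s,s+1$). The first propagation claim yields $M_{s,c}\neq M_{s,c+1}$, so row $s$ is not constant; but the second claim forces row $s$ to be constant, a contradiction. Hence $M$ admits no horizontal disagreement or no vertical disagreement, i.e. $\varphi$ is constant on every row of $P'$ or constant on every column of $P'$. The degenerate cases in which $P'$ is a single row or column are trivial, as there are then no interior vertices and the columns (respectively rows) are singletons. I expect the main obstacle to be the global step of excluding a genuine mixture of row-constant and column-constant blocks; the propagation claim is exactly the device that rules such a mixture out, and recognising it is the key idea.
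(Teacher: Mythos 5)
Your proof is correct and follows exactly the route the paper intends: the paper states this result as an immediate consequence of Lemma~\ref{lem:extension} without writing out a proof, and your argument---applying the lemma's dichotomy at each interior vertex of $P'$ (which is legitimate, since the interior of $P'$ contains no boundary of $P$) and propagating a single horizontal or vertical disagreement through adjacent $2\times 2$ blocks---is precisely that missing derivation. Your explicit bridge from foldings to arbitrary consistent mappings, namely that the lemma's conclusion at a vertex is purely combinatorial because $\varphi$ maps each square isomorphically and a cube vertex carries only three edges and three faces, is also the correct justification for the fact that the corollary is stated for consistent mappings rather than foldings.
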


Let $P$ be a rectangular polyomino. The \emph{support} of a set $H$ of holes of $P$ is the smallest rectangular sub-polyomino $P'$ of $P$ such that each $h \in H$ is a valid hole of $P'$, i.e., $h$ does not lie on the boundary of $P'$.

Let $\varphi$ be a consistent mapping from $P$ to $\Ccal$. We say that $\varphi$ is \emph{trivial} for a simple hole $h$ of $P$ if $\varphi$ is a valid consistent mapping for the polyomino obtained from $P$ by removing $h$. In the case where $h$ is a unit square hole $\varphi$ is extended to the additional square as given by its boundary edges. Clearly it is sufficient to look at the support of $h$ when checking whether $\varphi$ is trivial for $h$.

\begin{pro}\label{pro:I-slit-size-2}
A rectangular polyomino $P$ with an I-slit of size 2 and no other holes does not fold onto $\Ccal$.
\end{pro}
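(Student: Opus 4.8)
The plan is to show that no \emph{surjective} consistent mapping of $P$ can underlie a valid folding, which suffices since a folding onto $\Ccal$ requires surjectivity. Fix coordinates so that the I-slit lies on the vertical line $x=0$ between $y=0$ and $y=-2$, and label the eight unit squares bordering it by $a,b$ (the row just above the slit, left and right of $x=0$), then $c,d$ and $e,f$ (the two rows crossed by the slit), and $g,h$ (the row just below); the slit deletes exactly the edges $c\mid d$ and $e\mid f$, while $a\mid b$ and $g\mid h$ are present. These eight squares form an $8$-cycle $c,a,b,d,f,h,g,e,c$ enclosing the slit, structurally identical to the frame of a unit-square hole. The argument then has two stages: first, use rigidity to force the labels on this frame (showing in particular that $\varphi$ is trivial for $h$); second, derive a layer contradiction around the $8$-cycle exactly as in Example~\ref{exa:non-foldable-unit-square-hole}.

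For the first stage I would apply Corollary~\ref{cor:extension} to the two solid two-column blocks flanking the slit, $B_L$ on columns $x\in[-2,0]$ and $B_R$ on columns $x\in[0,2]$; each is a genuine rectangular sub-polyomino because the slit lies on their common boundary rather than in their interior. Corollary~\ref{cor:extension} forces each block to be constant on columns or constant on rows, and Lemma~\ref{lem:extension} applied at the four interior vertices $(-1,0),(1,0),(-1,-2),(1,-2)$ flanking the slit endpoints controls how these regimes match across $x=0$. A block that is constant on columns makes the two columns bordering the slit monochromatic; propagating this rigidity shows $\varphi$ would depend only on the column and hence cover at most the four equatorial faces of $\Ccal$, contradicting surjectivity. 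Discarding this regime (and its row-analogue) leaves the case where the frame labels $a,\dots,h$ depend only on their row and run through a consecutive $4$-cycle of cube faces as one descends. In every surviving case $\varphi$ is \emph{trivial} for $h$ and the frame labels settle into a periodic pattern such as $F_1,F_1,F_2,F_2,F_3,F_3,F_4,F_4$ around the $8$-cycle.

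For the second stage I would invoke the self-intersection rule recorded after Example~\ref{exa:self-intersection-fan}: whenever two equally labelled squares have equally labelled neighbours, their layer order must be preserved. Traversing the frame $8$-cycle, each consecutive pair of equally labelled squares transmits the layer order one step further, so the layer number $\ell$ is forced to change monotonically around the cycle. Since the cycle closes up, this yields a strict inequality of $\ell$ with itself, a contradiction — precisely the obstruction exhibited for the unit-square hole. Consequently no layer map avoids a self-intersection, so the pseudo-folding has no realisation (equivalently, by Theorem~\ref{thm:foldedstatevalid}, the boundary of any realisation fails to be an unlink), and $P$ does not fold onto $\Ccal$.

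The main obstacle is the first stage. The difficulty is bookkeeping the dichotomy of Corollary~\ref{cor:extension} simultaneously for $B_L$ and $B_R$ (four combinations), together with the two orientations of the slit relative to the longer side of the rectangle and the boundary cases where the slit sits close to an edge of $P$; in each combination one must check either that surjectivity fails or that the frame labels settle into the periodic pattern needed for the layer argument. I expect Lemma~\ref{lem:extension} at the four vertices flanking the slit to be the tool that rules out the mixed regimes and forces triviality, but verifying this uniformly across all positions of the slit is the delicate part.
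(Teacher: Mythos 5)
Your proposal has a genuine gap, and it is located exactly where the real work of the proposition lies. In your first stage you claim that, after discarding the column-constant regimes of Corollary~\ref{cor:extension}, every surviving consistent mapping is \emph{trivial} for the slit and carries a periodic frame pattern. This is false. Each flanking block can be row-constant while the two sides of the slit carry \emph{different} face sequences: for instance, left column $\one,\two,\two,\one$ and right column $\three,\two,\two,\three$ reading downward. Such a mapping passes your rigidity test (both blocks row-constant), yet it is non-trivial for the slit, since the two sides of the slit map to different edges of the face $\two$. These mappings are precisely the six non-trivial patterns enumerated in Figure~\ref{fig:mapsIslit}, and handling them is the heart of the paper's proof: each is ruled out not by any layer argument but by showing, again via Corollary~\ref{cor:extension}, that every extension to the full rectangle misses at least one face of $\Ccal$, so no surjective consistent mapping exists at all. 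Your proposal has no mechanism for these cases. Worse, your claimed surviving pattern $F_1,F_1,F_2,F_2,F_3,F_3,F_4,F_4$ is not even a consistent mapping around an I-slit frame: rolling the cube around the $8$-cycle with rolls at $a|b$, $d|f$, $h|g$, $e|c$ and flips elsewhere has non-trivial holonomy (one returns to the starting square with a different face down), which is exactly why it does not appear among the six cases.

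Your second stage is both unnecessary and unavailable. If $\varphi$ is trivial for the slit, it extends to a consistent mapping of the un-slitted rectangle, which by Corollary~\ref{cor:extension} is constant on rows or on columns and therefore covers at most four faces of $\Ccal$; non-surjectivity is immediate, and no layer contradiction can possibly exist, because trivial foldings of a slitted rectangle are realisable (fold the rectangle; the slit is simply ignored). Moreover, for a frame that wraps only \emph{once} around a band of four faces, the propagation of layer orders closes up consistently --- think of a $1\times 8$ strip folded around the cube with alternating $90\degree$ and $180\degree$ folds --- so the monotone-propagation argument yields no contradiction. The contradiction in Example~\ref{exa:non-foldable-unit-square-hole} depends crucially on the frame wrapping around the band \emph{twice}, along the cycle $\one-\three-\six-\two-\one-\three-\six-\two$, which is possible for a unit square hole but never occurs for an I-slit. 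This is the structural difference your opening analogy misses: the unit square hole admits a surjective consistent mapping (Figure~\ref{fig:squareholemap-1}), so proving non-foldability there genuinely requires a realisability argument, whereas the I-slit admits no surjective consistent mapping whatsoever, and the correct proof stays entirely at the combinatorial level. A smaller but real further gap: your blocks $B_L$ and $B_R$ require two full columns on each side of the slit, so rectangles of width two or three, which your setup cannot even express, would still need separate treatment.
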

\begin{proof}
We will show that no surjective consistent mapping $\varphi$ from $P$ onto the unit cube $\Ccal$ can exist.

Note that $\varphi$ is trivial for the I-slit $h$, if and only if it maps both `sides' of the slit to the same edges of $\Ccal$. By definition any trivial mapping is also a consistent mapping after removing $h$ and it is clear that such a mapping never covers all faces of $\Ccal$. Denote by $P'$ the support of $h$, which is a $4 \times 2$-polyomino. Using the algorithm from \cite[section 4.3]{aich19}, we deduce that up to isomorphism and up to symmetry of $P'$ there are 6 possible patterns a non-trivial $\varphi$ can take on $P'$; all of them are shown in Figure~\ref{fig:mapsIslit}.

\begin{figure}[ht]
    \centering
\begin{subfigure}{.15\textwidth}
    \centering
    \begin{tikzpicture}[xscale=0.7,yscale=0.7]
        \initcube\printcube
        \rolldown\printcube
        \flipdown\printcube
        \rolldown\printcube
        \rollright\printcube
        \rollup\printcube
        \flipup\printcube
        \rollup\printcube
        \begin{scope}[shift={(-.5,-.5)}]
        \draw[thick] (0,1)--(0,-3)--(2,-3)--(2,1)--cycle;
        \draw[thick] (1,0)--(1,-2);
        \draw[dashed] (0,0)--(2,0);
        \draw[dashed] (0,-1)--(2,-1);
        \draw[dashed] (0,-2)--(2,-2);
        \draw[dashed] (1,1)--(1,0);
        \draw[dashed] (1,-2)--(1,-3);
        \end{scope}
    \end{tikzpicture}
    \caption{\label{case:sl2f}}
\end{subfigure}
\begin{subfigure}{.15\textwidth}
    \centering
    \begin{tikzpicture}[xscale=0.7,yscale=0.7]
        \initcube\printcube
        \flipdown\printcube
        \flipdown\printcube
        \flipdown\printcube
        \rollright\printcube
        \rollup\printcube
        \flipup\printcube
        \rollup\printcube
        \begin{scope}[shift={(-.5,-.5)}]
        \draw[thick] (0,1)--(0,-3)--(2,-3)--(2,1)--cycle;
        \draw[thick] (1,0)--(1,-2);
        \draw[dashed] (0,0)--(2,0);
        \draw[dashed] (0,-1)--(2,-1);
        \draw[dashed] (0,-2)--(2,-2);
        \draw[dashed] (1,1)--(1,0);
        \draw[dashed] (1,-2)--(1,-3);
        \end{scope}
    \end{tikzpicture}
    \caption{\label{case:sl2e}}
\end{subfigure}
\begin{subfigure}{.15\textwidth}
    \centering
    \begin{tikzpicture}[xscale=0.7,yscale=0.7]
        \initcube\printcube
        \flipdown\printcube
        \rolldown\printcube
        \rolldown\printcube
        \flipright\printcube
        \flipup\printcube
        \rollup\printcube
        \rollup\printcube
        \begin{scope}[shift={(-.5,-.5)}]
        \draw[thick] (0,1)--(0,-3)--(2,-3)--(2,1)--cycle;
        \draw[thick] (1,0)--(1,-2);
        \draw[dashed] (0,0)--(2,0);
        \draw[dashed] (0,-1)--(2,-1);
        \draw[dashed] (0,-2)--(2,-2);
        \draw[dashed] (1,1)--(1,0);
        \draw[dashed] (1,-2)--(1,-3);
        \end{scope}
    \end{tikzpicture}
    \caption{\label{case:sl2d}}
\end{subfigure}
\begin{subfigure}{.15\textwidth}
    \centering
    \begin{tikzpicture}[xscale=0.7,yscale=0.7]
        \initcube\printcube
        \flipdown\printcube
        \rolldown\printcube
        \flipdown\printcube
        \flipright\printcube
        \rollup\printcube
        \rollup\printcube
        \rollup\printcube
        \begin{scope}[shift={(-.5,-.5)}]
        \draw[thick] (0,1)--(0,-3)--(2,-3)--(2,1)--cycle;
        \draw[thick] (1,0)--(1,-2);
        \draw[dashed] (0,0)--(2,0);
        \draw[dashed] (0,-1)--(2,-1);
        \draw[dashed] (0,-2)--(2,-2);
        \draw[dashed] (1,1)--(1,0);
        \draw[dashed] (1,-2)--(1,-3);
        \end{scope}
    \end{tikzpicture}
    \caption{\label{case:sl2c}}
\end{subfigure}
\begin{subfigure}{.15\textwidth}
    \centering
    \begin{tikzpicture}[xscale=0.7,yscale=0.7]
        \initcube\printcube
        \flipdown\printcube
        \flipdown\printcube
        \rolldown\printcube
        \flipright\printcube
        \flipup\printcube
        \flipup\printcube
        \rollup\printcube
        \begin{scope}[shift={(-.5,-.5)}]
        \draw[thick] (0,1)--(0,-3)--(2,-3)--(2,1)--cycle;
        \draw[thick] (1,0)--(1,-2);
        \draw[dashed] (0,0)--(2,0);
        \draw[dashed] (0,-1)--(2,-1);
        \draw[dashed] (0,-2)--(2,-2);
        \draw[dashed] (1,1)--(1,0);
        \draw[dashed] (1,-2)--(1,-3);
        \end{scope}
    \end{tikzpicture}
    \caption{\label{case:sl2b}}
\end{subfigure}
\begin{subfigure}{.15\textwidth}
    \centering
    \begin{tikzpicture}[xscale=0.7,yscale=0.7]
        \initcube\printcube
        \flipdown\printcube
        \flipdown\printcube
        \flipdown\printcube
        \flipright\printcube
        \rollup\printcube
        \flipup\printcube
        \rollup\printcube
        \begin{scope}[shift={(-.5,-.5)}]
        \draw[thick] (0,1)--(0,-3)--(2,-3)--(2,1)--cycle;
        \draw[thick] (1,0)--(1,-2);
        \draw[dashed] (0,0)--(2,0);
        \draw[dashed] (0,-1)--(2,-1);
        \draw[dashed] (0,-2)--(2,-2);
        \draw[dashed] (1,1)--(1,0);
        \draw[dashed] (1,-2)--(1,-3);
        \end{scope}
    \end{tikzpicture}
    \caption{\label{case:sl2a}}
\end{subfigure}
\caption{All non-trivial consistent mappings from the support of an I-slit of length $2$ to $\Ccal$.}
\label{fig:mapsIslit}
\end{figure}
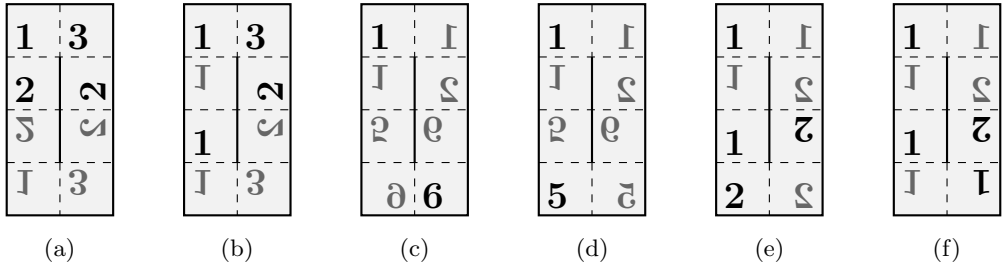

We use Corollary~\ref{cor:extension} to show that none of these cases can extend to a surjective consistent mapping from $P$ to $\Ccal$. First note that it is enough to show that the restriction of $\varphi$ to all unit squares of $P$ sharing a row or a column with $P'$ cannot map surjective onto $\Ccal$. Then by Corollary~\ref{cor:extension} the image of $P$ under $\varphi$ contains the same faces as this restriction and thus not all faces of $\Ccal$.

We make the following observations:
\begin{itemize}
    \item If at least two unit squares in the first row of $P'$ cover different faces of $\Ccal$, any unit square of $P$ lying above $P'$ maps to the same face of $\Ccal$ as the unit square below it and the same holds for other sides of the boundary of $P'$. Therefore $\varphi$ only maps to faces $\one$, $\two$ and $\three$ of $\Ccal$ in case \subref{case:sl2f}.
    \item In case \subref{case:sl2e} only the faces $\one$, $\three$, $\four$ and $\six$ of $\Ccal$ can be covered by unit squares to the left of $P'$, so $\varphi$ does not map to face $\five$.
    \item In cases \subref{case:sl2d}, \subref{case:sl2c} and \subref{case:sl2b} only faces $\one$, $\two$, $\five$ and $\six$ of $\Ccal$ can be covered by unit squares above and below $P'$, so $\varphi$ does not map to face $\four$.
    \item In case \subref{case:sl2a} either the column of $P$ containing the first column of $P'$ or the row of $P$ containing the first row of $P'$ is mapped onto face $\one$ of $\Ccal$. Thus either all unit squares of $P$ above and below $P'$ are mapped to face $\one$ and $\varphi$ does not map to face $\five$ or all unit squares to the left of $P'$ are mapped to face $\one$ and $\varphi$ does not map to face $\four$.
\end{itemize}
Summing up, in none of the cases $\varphi$ can be surjective, so a rectangular polyomino with a single straight slit of size 2 does not fold onto $\Ccal$.
\end{proof}

\begin{pro}\label{pro:U-slit}
    A rectangular polyomino $P$ with an U-slit of size 3 or an L-slit of size 2 and no other holes does not fold onto $\Ccal$.
\end{pro}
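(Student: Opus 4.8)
The plan is to mirror the proof of Proposition~\ref{pro:I-slit-size-2}: I will show that neither slit admits a surjective consistent mapping $\varphi\colon P\to\Ccal$, which by the discussion of consistent mappings rules out any folding \emph{onto} $\Ccal$. The first thing to record is the common structural fact that the support $P'$ of an L-slit of size $2$ and of a U-slit of size $3$ is in both cases a $3\times 3$ square: the two endpoints of the slit must be interior vertices of $P'$, and this forces one extra row and one extra column around the two-unit bend of the slit. For the U-slit the central square of $P'$ is attached to the remaining eight squares only along its top edge, while for the L-slit it is attached along two edges; this is the only difference that the two cases will see.

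Next I would dispose of the \emph{trivial} mappings exactly as before: $\varphi$ is trivial for the slit $h$ precisely when the two sides of the slit are sent to the same edges of $\Ccal$, and then $\varphi$ descends to a consistent mapping of the solid $3\times 3$ rectangle obtained by healing the slit (and hence of the solid rectangle $P$). By Corollary~\ref{cor:extension} such a mapping is constant on every row or constant on every column, so its image is a single cyclic band of at most four faces of $\Ccal$ and in particular is not surjective. It therefore remains only to treat the non-trivial mappings, and here I would invoke the algorithm of \cite[section 4.3]{aich19} to enumerate, up to isomorphism of $\Ccal$ and up to the symmetries of $P'$ (the reflection fixing the U-slit, respectively the diagonal reflection swapping the two arms of the L-slit), all non-trivial consistent mappings of the $3\times 3$ support, displaying them in a figure analogous to Figure~\ref{fig:mapsIslit}.

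For each pattern I would then argue non-surjectivity by the same reduction used in the I-slit case. It suffices to examine the restriction of $\varphi$ to the squares of $P$ sharing a row or a column with $P'$: applying Corollary~\ref{cor:extension} to the maximal hole-free rectangles spanned by $P'$ together with each ``corner'' square of $P$ shows that the image of $\varphi$ on all of $P$ coincides with its image on this cross. Whenever two squares in a boundary row (respectively column) of $P'$ receive different labels, Corollary~\ref{cor:extension} forces every square of $P$ directly above or below (respectively left or right) of them to repeat the label of its neighbour in $P'$; consequently the outer rows and columns can only contribute faces already adjacent to the band realised on $P'$, and a short check of each individual pattern exhibits a face of $\Ccal$ that is never covered. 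Summing over the finitely many inequivalent patterns for each slit type then shows that $\varphi$ is never surjective, which proves the proposition.

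I expect the main obstacle to be the completeness and correctness of this case enumeration rather than any single clever step. The $3\times 3$ support admits more non-trivial consistent mappings than the $4\times 2$ support of the I-slit, and the weaker symmetry of the L-slit together with the singly-attached central square of the U-slit means that more inequivalent patterns must be generated and checked. The delicate part of each case is verifying, via Corollary~\ref{cor:extension}, that the labels \emph{forced} on the rows and columns extending out of $P'$ cannot jointly realise all six faces of $\Ccal$, i.e.\ that at least one face is always missed.
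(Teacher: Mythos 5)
Your core strategy coincides with the paper's: rule out every surjective consistent mapping $\varphi\colon P\to\Ccal$ by enumerating the non-trivial patterns on the $3\times 3$ support and extending each one via Corollary~\ref{cor:extension}, exactly as in Proposition~\ref{pro:I-slit-size-2}. The statement you aim for (no surjective consistent mapping exists at all) is true, and your outline can be completed; where you diverge from the paper is in how the case analysis is kept small, and you miss the two reductions that do this. For the U-slit, besides trivial mappings the paper also discards every mapping that sends the square inside the slit to the same face as its \emph{unique} neighbour: deleting that square leaves a rectangular polyomino with a unit square hole carrying a consistent mapping with the same image, so this whole family is settled by the already-known unit-square-hole result (\cite[Thm. 11]{aich19}, i.e.\ the first case of Theorem~\ref{thm:one-simple-hole}); only after this second exclusion does the enumeration shrink to the seven patterns of Figure~\ref{fig:mapsUslit}. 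In your enumeration these ``doubled'' mappings survive, and among them sits the one whose outer ring is the pattern of Figure~\ref{fig:squareholemap-1} --- precisely the pattern that \emph{can} be extended to a surjective consistent mapping when the polyomino is not rectangular (Example~\ref{exa:non-foldable-unit-square-hole}, where non-foldability needs a self-intersection argument rather than a counting argument). Your repetition mechanism does dispose of it for rectangular $P$ (all four boundary rows and columns of the support carry distinct labels, so Corollary~\ref{cor:extension} forces pure repetition outward and only four faces are ever covered), but this is the one place where rectangularity is doing real work, and the check must be made explicitly rather than waved at.

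Two further points. First, not every pattern dies by your ``two distinct labels force repetition'' mechanism: for patterns whose first row and column of the support are constant (the analogue of case \subref{case:usla} of Figure~\ref{fig:mapsUslit}), the paper instead uses a dichotomy --- Corollary~\ref{cor:extension} forces either all relevant rows or all relevant columns of $P$ to be constant, and each branch misses a different face ($\five$ or $\four$). Your sketch should acknowledge this second mechanism. Second, for the L-slit the paper performs no enumeration at all: cutting one additional edge converts an L-slit of size $2$ into a U-slit of size $3$, and every folding (indeed every consistent mapping) of the L-slit polyomino induces one of the U-slit polyomino with the same image, so the L-slit case follows in one line from the U-slit case. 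Your plan of a separate enumeration over the L-slit support is valid in principle but strictly more work, with a larger and more error-prone case list; the reduction is worth adopting.
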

\begin{proof}
Let us start with the case where $P$ contains a single U-slit $h$ of size 3 and no other holes. Observe that $\varphi$ trivial, if both sides of the slit map to the same edge of $\Ccal$. Additionally, if $\varphi$ maps the unit square inside of the slits to the same face as the only unit square connected to it, $P$ reduces to the case where $h$ is a unit square hole and thus does not fold onto $\Ccal$. Excluding these two types of consistent mappings, up to isomorphism and symmetry, only seven possible consistent mappings $\varphi$ from the support $P'$ of $h$ to $\Ccal$ remain. They can be seen in Figure~\ref{fig:mapsUslit}.

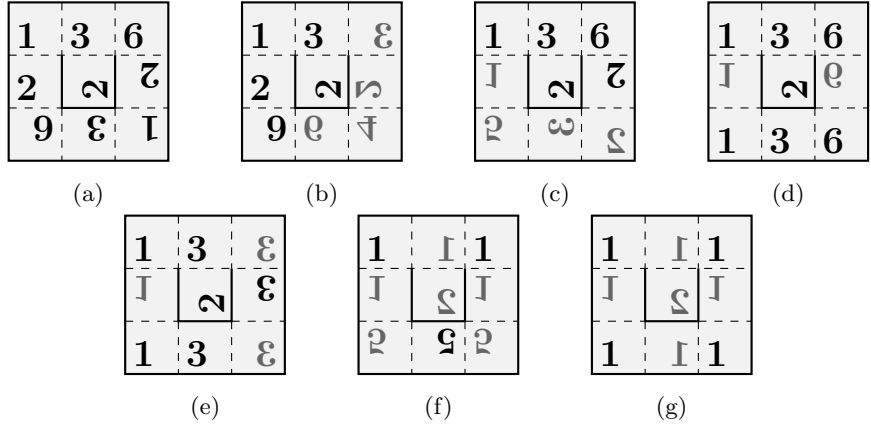
\begin{figure}[ht]
    \centering
\begin{subfigure}{.2\textwidth}
    \centering
    \begin{tikzpicture}[xscale=0.7,yscale=0.7]
        \initcube\printcube
        \rolldown\printcube
        \rolldown\printcube
        \rollright\printcube
        \rollright\printcube
        \rollup\printcube
        \rollup\printcube
        \rollleft\printcube
        \rolldown\printcube
        \begin{scope}[shift={(-.5,-.5)}]
        \draw[thick] (0,1)--(3,1)--(3,-2)--(0,-2)--cycle;
        \draw[thick] (1,0)--(1,-1)--(2,-1)--(2,0);
        \draw[dashed] (0,0)--(3,0);
        \draw[dashed] (0,-1)--(1,-1);
        \draw[dashed] (2,-1)--(3,-1);
        \draw[dashed] (1,1)--(1,0);
        \draw[dashed] (2,1)--(2,0);
        \draw[dashed] (1,-1)--(1,-2);
        \draw[dashed] (2,-1)--(2,-2);
        \end{scope}
    \end{tikzpicture}
    \caption{\label{case:uslg}}
\end{subfigure}
\begin{subfigure}{.2\textwidth}
    \centering
    \begin{tikzpicture}[xscale=0.7,yscale=0.7]
        \initcube\printcube
        \rolldown\printcube
        \rolldown\printcube
        \flipright\printcube
        \rollright\printcube
        \rollup\printcube
        \rollup\printcube
        \flipleft\printcube
        \rolldown\printcube
        \begin{scope}[shift={(-.5,-.5)}]
        \draw[thick] (0,1)--(3,1)--(3,-2)--(0,-2)--cycle;
        \draw[thick] (1,0)--(1,-1)--(2,-1)--(2,0);
        \draw[dashed] (0,0)--(3,0);
        \draw[dashed] (0,-1)--(1,-1);
        \draw[dashed] (2,-1)--(3,-1);
        \draw[dashed] (1,1)--(1,0);
        \draw[dashed] (2,1)--(2,0);
        \draw[dashed] (1,-1)--(1,-2);
        \draw[dashed] (2,-1)--(2,-2);
        \end{scope}
    \end{tikzpicture}
    \caption{\label{case:uslf}}
\end{subfigure}
\begin{subfigure}{.2\textwidth}
    \centering
    \begin{tikzpicture}[xscale=0.7,yscale=0.7]
        \initcube\printcube
        \flipdown\printcube
        \rolldown\printcube
        \rollright\printcube
        \rollright\printcube
        \flipup\printcube
        \rollup\printcube
        \rollleft\printcube
        \rolldown\printcube
        \begin{scope}[shift={(-.5,-.5)}]
        \draw[thick] (0,1)--(3,1)--(3,-2)--(0,-2)--cycle;
        \draw[thick] (1,0)--(1,-1)--(2,-1)--(2,0);
        \draw[dashed] (0,0)--(3,0);
        \draw[dashed] (0,-1)--(1,-1);
        \draw[dashed] (2,-1)--(3,-1);
        \draw[dashed] (1,1)--(1,0);
        \draw[dashed] (2,1)--(2,0);
        \draw[dashed] (1,-1)--(1,-2);
        \draw[dashed] (2,-1)--(2,-2);
        \end{scope}
    \end{tikzpicture}
    \caption{\label{case:usle}}
\end{subfigure}
\begin{subfigure}{.2\textwidth}
    \centering
    \begin{tikzpicture}[xscale=0.7,yscale=0.7]
        \initcube\printcube
        \flipdown\printcube
        \flipdown\printcube
        \rollright\printcube
        \rollright\printcube
        \flipup\printcube
        \flipup\printcube
        \rollleft\printcube
        \rolldown\printcube
        \begin{scope}[shift={(-.5,-.5)}]
        \draw[thick] (0,1)--(3,1)--(3,-2)--(0,-2)--cycle;
        \draw[thick] (1,0)--(1,-1)--(2,-1)--(2,0);
        \draw[dashed] (0,0)--(3,0);
        \draw[dashed] (0,-1)--(1,-1);
        \draw[dashed] (2,-1)--(3,-1);
        \draw[dashed] (1,1)--(1,0);
        \draw[dashed] (2,1)--(2,0);
        \draw[dashed] (1,-1)--(1,-2);
        \draw[dashed] (2,-1)--(2,-2);
        \end{scope}
    \end{tikzpicture}
    \caption{\label{case:usld}}
\end{subfigure}
\begin{subfigure}{.2\textwidth}
    \centering
    \begin{tikzpicture}[xscale=0.7,yscale=0.7]
        \initcube\printcube
        \flipdown\printcube
        \flipdown\printcube
        \rollright\printcube
        \flipright\printcube
        \flipup\printcube
        \flipup\printcube
        \flipleft\printcube
        \rolldown\printcube
        \begin{scope}[shift={(-.5,-.5)}]
        \draw[thick] (0,1)--(3,1)--(3,-2)--(0,-2)--cycle;
        \draw[thick] (1,0)--(1,-1)--(2,-1)--(2,0);
        \draw[dashed] (0,0)--(3,0);
        \draw[dashed] (0,-1)--(1,-1);
        \draw[dashed] (2,-1)--(3,-1);
        \draw[dashed] (1,1)--(1,0);
        \draw[dashed] (2,1)--(2,0);
        \draw[dashed] (1,-1)--(1,-2);
        \draw[dashed] (2,-1)--(2,-2);
        \end{scope}
    \end{tikzpicture}
    \caption{\label{case:uslc}}
\end{subfigure}
\begin{subfigure}{.2\textwidth}
    \centering
    \begin{tikzpicture}[xscale=0.7,yscale=0.7]
        \initcube\printcube
        \flipdown\printcube
        \rolldown\printcube
        \flipright\printcube
        \flipright\printcube
        \rollup\printcube
        \flipup\printcube
        \flipleft\printcube
        \rolldown\printcube
        \begin{scope}[shift={(-.5,-.5)}]
        \draw[thick] (0,1)--(3,1)--(3,-2)--(0,-2)--cycle;
        \draw[thick] (1,0)--(1,-1)--(2,-1)--(2,0);
        \draw[dashed] (0,0)--(3,0);
        \draw[dashed] (0,-1)--(1,-1);
        \draw[dashed] (2,-1)--(3,-1);
        \draw[dashed] (1,1)--(1,0);
        \draw[dashed] (2,1)--(2,0);
        \draw[dashed] (1,-1)--(1,-2);
        \draw[dashed] (2,-1)--(2,-2);
        \end{scope}
    \end{tikzpicture}
    \caption{\label{case:uslb}}
\end{subfigure}
\begin{subfigure}{.2\textwidth}
    \centering
    \begin{tikzpicture}[xscale=0.7,yscale=0.7]
        \initcube\printcube
        \flipdown\printcube
        \flipdown\printcube
        \flipright\printcube
        \flipright\printcube
        \flipup\printcube
        \flipup\printcube
        \flipleft\printcube
        \rolldown\printcube
        \begin{scope}[shift={(-.5,-.5)}]
        \draw[thick] (0,1)--(3,1)--(3,-2)--(0,-2)--cycle;
        \draw[thick] (1,0)--(1,-1)--(2,-1)--(2,0);
        \draw[dashed] (0,0)--(3,0);
        \draw[dashed] (0,-1)--(1,-1);
        \draw[dashed] (2,-1)--(3,-1);
        \draw[dashed] (1,1)--(1,0);
        \draw[dashed] (2,1)--(2,0);
        \draw[dashed] (1,-1)--(1,-2);
        \draw[dashed] (2,-1)--(2,-2);
        \end{scope}
    \end{tikzpicture}
    \caption{\label{case:usla}}
\end{subfigure}
\caption{All non-trivial consistent mappings of the support of a U-slit to $\Ccal$ such that the face `inside' the slit does not map to the same face as the face `above' the slit.}
\label{fig:mapsUslit}
\end{figure}

Similar as in the proof of the previous proposition, we make the following observations.
\begin{itemize}
    \item If at least two unit squares in the first row of $P'$ cover different faces of $\Ccal$, any unit square of $P$ lying above $P'$ maps to the same face of $\Ccal$ as the unit square below it and the same holds for other sides of the boundary of $P'$. This handles cases \subref{case:uslg}, \subref{case:uslf}, and \subref{case:usle}, where exactly one face of $\Ccal$ does not occur.
    \item In cases \subref{case:usld} and \subref{case:uslc} only the faces $\one$, $\three$, $\four$ and $\six$ of $\Ccal$ can be covered by unit squares to the left and right of $P'$, so $\varphi$ does not map any unit square to face $\five$.
    \item In case \subref{case:uslb} only faces $\one$, $\two$, $\five$ and $\six$ of $\Ccal$ can be covered by unit squares above and below $P'$, so $\varphi$ does not map to face $\four$.
    \item In case \subref{case:usla} either the column of $P$ containing the first column of $P'$ or the row of $P$ containing the first row of $P'$ is mapped to face $\one$ of $\Ccal$. Thus either all unit squares of $P$ above and below $P'$ are mapped to face $\one$ and $\varphi$ does not map to face $\five$ or all unit squares to the left and right of $P'$ are mapped to face $\one$ and $\varphi$ does not map to face $\four$.
\end{itemize}
As before $\varphi$ cannot be surjective in any of these cases, thus $P$ does not fold onto $\Ccal$.

Finally note that if any $P$ containing an L-slit of size 2 and no other holes folds onto $\Ccal$, the polyomino with a U-slit of size 3 obtained by cutting an additional edge has to fold in the same way. We conclude that no such polyomino $P$ can fold onto $\Ccal$. 
\end{proof}

\section{Folding polyominoes with two unit square holes}
\label{sec:two_holes}

We have seen in Theorem~\ref{thm:one-simple-hole} that a single hole in a rectangular polyomino $P$ is not sufficient for $P$ to become foldable. But what about two or more holes? In this section we fully characterise which rectangular polyominoes containing only unit square holes fold onto the cube. In particular we will show that certain pairs of unit square holes are necessary and sufficient for foldability.

\begin{dfn}
Let $P$ be a rectangular polyomino. We say that two unit square holes of $P$ \emph{cooperate} if the holes lie in the same or in adjacent rows (resp. columns) and the number of columns (resp. rows) between them is odd. 
\end{dfn}

Observe first that by \cite[Thm. 5]{aich19} two cooperating unit square holes guarantee foldability. The main idea is to use $180^\circ$ folds to reduce the corresponding polyomino to one of the two situations depicted in Figure~\ref{fig:coop-unitsquareholes}. It is left to the reader to check that the consistent mappings shown in the figure provide a valid folding.

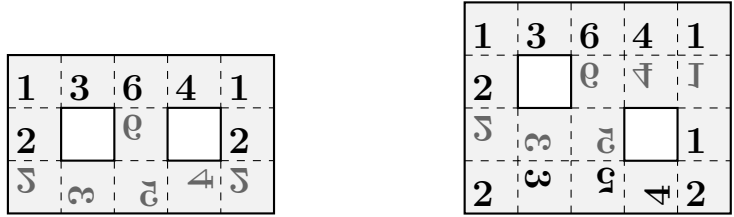
\begin{figure}[ht]
\centering
\begin{subfigure}{.4\textwidth}
\centering
\begin{tikzpicture}[xscale=0.7,yscale=0.7]
\initcube\printcube
\rolldown\printcube
\flipdown\printcube
\rollright\printcube
\rollright\printcube
\rollright\printcube
\rollright\printcube
\flipup\printcube
\rollup\printcube
\rollleft\printcube
\rollleft\printcube
\rollleft\printcube
\rollright\flipdown\printcube

\begin{scope}[shift={(-.5,-.5)}]
\draw[thick] (0,1)--(0,-2)--(5,-2)--(5,1)--cycle;
\foreach \i in {0,-1}{
    \draw[dashed] (0,\i)--(1,\i);
    \draw[dashed] (2,\i)--(3,\i);
    \draw[dashed] (4,\i)--(5,\i);}
\foreach \i in {1,...,4}{
    \draw[dashed] (\i,1)--(\i,0);
    \draw[dashed] (\i,-1)--(\i,-2);}
\draw[thick] (1,0)--(1,-1)--(2,-1)--(2,0)--cycle;
\draw[thick] (3,0)--(3,-1)--(4,-1)--(4,0)--cycle;
\end{scope}
\end{tikzpicture}
\end{subfigure}
\begin{subfigure}{.4\textwidth}
\centering
\begin{tikzpicture}[xscale=0.7,yscale=0.7]
\initcube\printcube
\rolldown\printcube
\flipdown\printcube
\flipdown\printcube
\rollright\printcube
\flipup\printcube
\rollright\printcube
\flipdown\printcube
\rollright\printcube
\rollright\printcube
\rollup\printcube
\flipup\printcube
\flipup\printcube
\rollleft\printcube
\flipdown\printcube
\rollleft\printcube
\flipup\printcube
\rollleft\printcube

\begin{scope}[shift={(-.5,-.5)}]
\draw[thick] (0,1)--(0,-3)--(5,-3)--(5,1)--cycle;
\foreach \i in {0,-1,-2}{
    \draw[dashed] (0,\i)--(1,\i);
    \draw[dashed] (2,\i)--(3,\i);
    \draw[dashed] (4,\i)--(5,\i);}
\foreach \i in {1,...,4}{
    \draw[dashed] (\i,1)--(\i,0);
    \draw[dashed] (\i,-3)--(\i,-2);}
\draw[dashed] (1,-1) -- (1,-2) -- (2,-2) -- (2,-1);
\draw[dashed] (3,-1) -- (3,0) -- (4,0) -- (4,-1);
\draw[thick] (1,0)--(1,-1)--(2,-1)--(2,0)--cycle;
\draw[thick] (3,-2)--(3,-1)--(4,-1)--(4,-2)--cycle;
\end{scope}
\end{tikzpicture}
\end{subfigure}
\caption{Consistent mappings for cooperating unit square holes}
\label{fig:coop-unitsquareholes}
\end{figure}

Let $h$ be a unit square hole of a polyomino $P$ and let $P'$ be the support of $h$. For the upcoming Theorem it is essential to know all consistent mappings from $P'$ to $\Ccal$ which are non-trivial for $h$. Following the discussion in \cite[Sec. 4.1]{aich19}, there are two essentially different mappings: Either the four boundary edges of $h$ are mapped to the same edge of $\Ccal$ or they are mapped to two edges of $\Ccal$ forming an L-shape.

It is shown in \cite[Lem. 9]{aich19} that any mapping which maps the boundary of $h$ to a single edge of $\Ccal$ does not yield a valid folding. The strategy in the proof is to show that in this case up to isomorphism one always ends up with the consistent mapping $\varphi$ shown in Figure~\ref{fig:squareholemap-1}. But the discussion in Example~\ref{exa:non-foldable-unit-square-hole} shows that $\varphi$ does not provide a valid folding without self-intersections. In the upcoming theorem we want to exclude consistent mappings which can never yield valid foldings. Thus we call a consistent mapping $\varphi$ on $P$ \emph{\good}, if for each unit square hole $h$ of $P$, the restriction of $\varphi$ to the support of $h$ is not isomorphic to the consistent mapping in Figure~\ref{fig:squareholemap-1}.

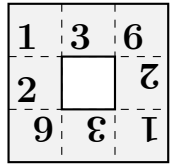
\begin{figure}[ht]
\centering
\begin{tikzpicture}[xscale=0.7,yscale=0.7]
\initcube\printcube
\rolldown\printcube
\rolldown\printcube
\rollright\printcube
\rollright\printcube
\rollup\printcube
\rollup\printcube
\rollleft\printcube
\begin{scope}[shift={(-.5,-.5)}]
\draw[thick] (0,1)--(0,-2)--(3,-2)--(3,1)--cycle;
\foreach \i in {0,-1}{
    \draw[dashed] (0,\i)--(3,\i);}
\foreach \i in {1,...,2}{
    \draw[dashed] (\i,1)--(\i,-2);}
\draw[thick] (1,0)--(1,-1)--(2,-1)--(2,0)--cycle;
\end{scope}
\end{tikzpicture}
\caption{Non-trivial consistent mapping of a unit square hole onto $\Ccal$ of non-foldable type}
\label{fig:squareholemap-1}
\end{figure}

In \cite[Thm. 10]{aich19} the authors show that up to isomorphisms of the cube and up to rotation and reflection of $P'$, there exists only a single {\good} consistent mapping $\varphi$ from $P'$ onto $\Ccal$ which is non-trivial for $h$. This mapping arises from the second case and can be found in Figure~\ref{fig:squareholemap-2}. There is a unique face $f$ of $\Ccal$ whose boundary contains the two edges covered by the 4 boundary edges of $h$. Observe that $f$ is covered by two squares lying on different sides of the hole $h$ and that the face opposite to $f$ in $\Ccal$ cannot be covered by a square in $P'$. We say that the \emph{type of $h$ under $\varphi$} is $f$.

\begin{figure}[ht]
\centering
\begin{tikzpicture}[xscale=0.7,yscale=0.7]
\initcube\printcube
\rolldown\printcube
\flipdown\printcube
\rollright\printcube
\rollright\printcube
\rollup\printcube
\flipup\printcube
\rollleft\printcube
\begin{scope}[shift={(-.5,-.5)}]
\draw[thick] (0,1)--(0,-2)--(3,-2)--(3,1)--cycle;
\foreach \i in {0,-1}{
    \draw[dashed] (0,\i)--(3,\i);}
\foreach \i in {1,...,2}{
    \draw[dashed] (\i,1)--(\i,-2);}
\draw[thick] (1,0)--(1,-1)--(2,-1)--(2,0)--cycle;
\end{scope}
\end{tikzpicture}
\caption{Non-trivial consistent mapping $\varphi$ of the support of a unit square hole $h$ to $\Ccal$. The type of $h$ under $\varphi$ is $\three$.}
\label{fig:squareholemap-2}
\end{figure}
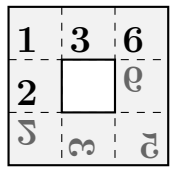

\begin{thm}
\label{thm:two_holes}
    Let $P$ be a rectangular polyomino with unit square holes and no other holes. Then $P$ folds onto $\Ccal$ if and only if two of the holes cooperate.
\end{thm}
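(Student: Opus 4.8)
The statement is an equivalence, so I would prove the two implications separately, spending almost all of the effort on necessity.

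\emph{Sufficiency.} Suppose two holes cooperate. This is essentially \cite[Thm.~5]{aich19}: using $180\degree$ folds one collapses all rows and columns that do not meet the two cooperating holes, reducing $P$ to one of the two small configurations in Figure~\ref{fig:coop-unitsquareholes} (the two cases according to whether the holes lie in the same row/column or in adjacent ones). For these it suffices to exhibit one valid folding, and the consistent mappings drawn in the figure do the job; I would just verify that the indicated layer maps produce no self-intersection and, by Theorem~\ref{thm:foldedstatevalid}, have unlinked boundary. Since the $180\degree$-fold reduction preserves foldability, $P$ folds onto $\Ccal$.

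\emph{Necessity; setup.} Assume $P$ folds onto $\Ccal$. By Definition~\ref{def:foldinginto} and Theorem~\ref{thm:foldedstatevalid} this yields a \emph{surjective} consistent mapping $\varphi$ coming from an actual folded state, hence one that is realisable. Since the pattern of Figure~\ref{fig:squareholemap-1} forces a self-intersection (Example~\ref{exa:non-foldable-unit-square-hole}), $\varphi$ must be \good. For each hole $h$ I would invoke the dichotomy recalled before Figure~\ref{fig:squareholemap-2}: either $\varphi$ is \emph{trivial} for $h$, or its restriction to the support of $h$ is, up to symmetry, the unique \good{} non-trivial map of Figure~\ref{fig:squareholemap-2}, assigning $h$ a \emph{type} $f(h)$. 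Filling in every trivial hole keeps $\varphi$ consistent, surjective and \good, and does not move the remaining holes; so I may assume every hole is \emph{active}, i.e.\ carries a type. Arguing by contraposition, I would then assume that no two (active) holes cooperate and show that $\varphi$ cannot be surjective.

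\emph{Necessity; the forbidden-pair invariant.} The three pairs of opposite faces are $\{\one,\six\}$, $\{\two,\five\}$, $\{\three,\four\}$, and surjectivity means covering both faces of each pair. The main tool is Corollary~\ref{cor:extension}: on any hole-free rectangular sub-polyomino $\varphi$ is constant along rows or constant along columns. A maximal such \emph{grain region} is therefore folded by rolling the cube about a single (horizontal or vertical) axis, and rolling about a fixed axis never brings the two faces lying on that axis to the covered position. Hence each grain region misses an entire opposite pair — its \emph{forbidden pair} — and covers at most four faces; in particular no single grain region is surjective. Moreover the forbidden pair is an invariant of the region, and by Corollary~\ref{cor:extension} it can change only where the grain switches from row-constant to column-constant, which happens only across the pivot of an active hole.

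\emph{Necessity; the core count, and the main obstacle.} It remains to show that, absent a cooperating pair, the active holes cannot make all three opposite pairs appear. I would analyse a single active hole of type $f$ via Figure~\ref{fig:squareholemap-2}: the two squares covering $f$ sit on opposite sides of the hole, the hole turns a row-constant grain (forbidden pair $X$) into a column-constant grain (forbidden pair $Y$, $X\neq Y$), and the face opposite $f$ stays uncovered along the four arms emanating from the hole unless an arm is interrupted by a second hole. Propagating the labels along these arms with Corollary~\ref{cor:extension} and tracking the $4$-cycle of faces visited while rolling, the parity of the roll-distance to the next hole governs which face of the relevant pair can appear. I expect this to show that the pair opposite $f(h)$ becomes covered only when a second active hole lies in the same or an adjacent row (resp.\ column) as $h$ at a distance of the correct parity — exactly the cooperation condition — and that otherwise some opposite pair, hence some face of $\Ccal$, is never in the image of $\varphi$, contradicting surjectivity. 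The reduction and the forbidden-pair invariant are routine; the real work, and the step I expect to be most delicate, is this last one: classifying precisely the local label pattern around an active hole (which arm is row- versus column-constant, and which face each arm rolls through), propagating it globally when \emph{several} holes are present, and extracting the exact odd-gap parity that separates cooperating from non-cooperating placements.
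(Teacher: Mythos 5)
Your sufficiency argument and the setup of necessity (pass to a \good{} surjective consistent mapping, fill in holes for which $\varphi$ is trivial, give every remaining hole a type via Figure~\ref{fig:squareholemap-2}) match the paper. But the necessity direction has a genuine gap, and you have located it yourself: everything from ``Propagating the labels along these arms'' onwards is stated as an expectation (``I expect this to show\dots''), not proved. That propagation \emph{is} the content of the theorem; what you have actually established is only that each hole-free rectangular block misses an opposite pair of faces, which by itself says nothing about whether several blocks together can cover all six.

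Concretely, the paper needs three pieces of machinery that are absent from your sketch. First, an induction on the number of holes; this subsumes your ``fill in all trivial holes'' reduction but is used again at the very end. Second, a connectivity statement: the paper defines a graph $G$ on the holes (two holes adjacent when they lie in the same or adjacent rows or columns and their joint support contains no other hole) and proves $G$ is connected, via a minimal-support argument combined with Corollary~\ref{cor:extension}. Without this, type propagation between far-apart holes is impossible --- it is exactly the ``propagating it globally when several holes are present'' step you defer. Third, the paper proves that $G$-adjacent non-cooperating holes have the \emph{same} type (this is where the even-gap parity enters: an even number of columns between the holes forces the pattern around one hole to be the reflection of the pattern around the other), and then closes the argument as follows: all holes share a type $f$; by the induction hypothesis the restriction of $\varphi$ to any rectangular sub-polyomino containing exactly one hole is not surjective, and since the support of that hole already covers every face except the face $f'$ opposite $f$, such a sub-polyomino avoids $f'$ entirely; every square of $P$ lies in such a sub-polyomino, so $f'$ is never covered and $\varphi$ is not surjective. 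Your forbidden-pair invariant is a reasonable reformulation of this same-type phenomenon, but as written neither the invariance claim (``the forbidden pair changes only across the pivot of an active hole''), nor the parity classification around a hole, nor the global assembly across several holes is established, so the proposal does not constitute a proof.
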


\begin{proof}
We have just seen that $P$ folds if it contains two cooperating unit square holes.

To prove the other direction, we will show that for any polyomino $P$ with unit square holes and no other holes such that no two of them cooperate, there cannot exist a {\good} surjective consistent mapping. We use induction on the number of holes $k$.

The base cases $k=0$ and $k=1$ follow from Theorem~\ref{thm:one-simple-hole}. Consider now $k > 1$ and assume that the induction hypothesis holds for all polyominoes with less than $k$ unit square holes. Let $P$ be a polyomino with $k$ unit square holes and no other holes and assume for a contradiction that there is a {\good} surjective consistent mapping $\varphi$. Then $\varphi$ is non-trivial for every hole $h$ of $P$, otherwise the natural extension of $\varphi$ to the polyomino obtained from $P$ by removing $h$ would be a {\good} surjective consistent mapping from a polyomino with $k-1$ holes to $\Ccal$, contradicting our assumption.

Let us define a graph $G$ as follows. The vertex set is the set of holes $H$ of $P$ and two holes $h_1, h_2 \in H$ are connected by an edge if they lie in the same or adjacent rows or columns of $P$ and the support of $\{h_1,h_2\}$ in $P$ does not contain any other holes in $H$. 
\begin{clm} \label{clm:graph-connected}
    The graph $G$ is connected.
\end{clm}
To prove this, suppose for a contradiction that $G$ is not connected. Let $h_1$ and $h_2$ be two holes lying in different components of $G$ such that the support $P'$ of $\{h_1,h_2\}$ is of minimal size, that is, the sum of its rows and columns is minimal. Then $P'$ cannot contain any other hole. Indeed, such a hole $h$ cannot lie in the support of $h_1$ or $h_2$ and thus both the support of $\{h,h_1\}$ and $\{h,h_2\}$ have smaller size than $P'$. This contradicts the minimality of $P'$.

By construction of $G$ the holes $h_1$ and $h_2$ cannot lie in the same or adjacent rows or columns of $P$, otherwise they would either cooperate or be connected by an edge in $G$. 
Assume without loss of generality that $h_1$ lies in row 2 and column 2 and $h_2$ lies in row $i \geq 4$ and column $j \geq 4$ of $P'$. The map $\varphi$ is non-trivial for both $h_1$ and $h_2$, thus there are two adjacent unit squares in the first three rows of column 3 mapped to different faces of $\Ccal$. Using Corollary~\ref{cor:extension} we see that $\varphi$ must be constant on each row of the sub-polyomino consisting of rows $1$ to $i-1$ and columns $3$ to $j+1$, see Figure~\ref{fig:twononcooperativeholes}. But this contradicts our assumption that $\varphi$ is non-trivial for $h_2$ and thus proves Claim~\ref{clm:graph-connected}. 

\begin{figure}[ht]
\centering
\begin{tikzpicture}[xscale=0.7,yscale=0.7]
\begin{scope}[shift={(-.5,-.5)}]
\fill[gray!10] (0,1)--(0,-6)--(7,-6)--(7,1);
\end{scope}
\initcube\printcube
\rolldown\printcube
\flipdown\printcube
\rollright\printcube
\rollright\printcube
\rollup\printcube
\flipup\printcube
\rollleft\printcube
\rollright\flipright\flipright\printcube
\flipright\printcube
\flipright\printcube
\flipdown\printcube
\flipleft\printcube
\flipleft\printcube
\rolldown\printcube
\flipright\printcube
\flipright\printcube
\flipdown\flipdown\printcube
\flipleft\printcube
\flipleft\printcube
\flipleft\flipleft\printcube
\begin{scope}[shift={(-.5,-.5)}]
\foreach \i in {0,...,-5}{
    \draw[dashed] (0,\i)--(3,\i) (7,\i)--(4,\i);}
\foreach \i in {1,...,6}{
    \draw[dashed] (\i,1)--(\i,-2) (\i,-6)--(\i,-3);}
\draw[loosely dotted, thick] (3.3,-2.3) -- (3.7,-2.7);
\draw[thick,fill=white] (1,0)--(1,-1)--(2,-1)--(2,0)--cycle;
\draw[thick,fill=white] (5,-4)--(5,-5)--(6,-5)--(6,-4)--cycle;
\draw[thick] (0,1)--(0,-6)--(7,-6)--(7,1)--cycle;
\draw[thick, red] (2,1) -- (2,-4) -- (7,-4) -- (7,1) -- cycle;
\end{scope}
\end{tikzpicture}
\caption{The holes $h_1$ and $h_2$ cannot both be non-trivial.}
\label{fig:twononcooperativeholes}
\end{figure}
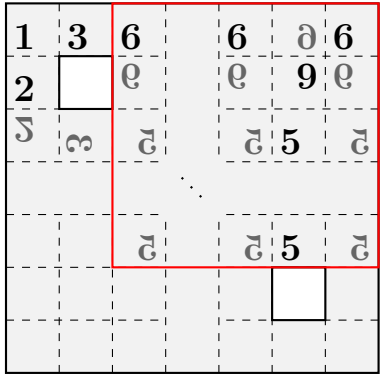

\begin{clm}\label{clm:type-coincides}
Whenever two holes $h_1,h_2 \in H$ are neighbours in $G$, their type coincides.
\end{clm}
To prove this claim, assume first that the two holes lie in the same row or column of $P$. This case is already treated in the proof of \cite[Thm. 12]{aich19}, we will give a brief summary. Assume without loss of generality that $h_1$ lies in column $2$ and $h_2$ in column $j$ of the support $P'$. The map $\varphi$ is non-trivial for both $h_1$ and $h_2$, thus there must be two adjacent unit squares in column 3 mapping to different faces of $\Ccal$. Thus Corollary~\ref{cor:extension} yields that $\varphi$ is constant on each row of the sub-polyomino consisting of columns $3$ to $j-1$. As $h_1$ and $h_2$ do not cooperate by assumption, the number of columns between them is even. We conclude that the pattern around $h_2$ is the reflection of the pattern around $h_1$, so in particular they are of the same type. Up to isomorphism and symmetry there are are only two possible situations, see Figure~\ref{fig:ush-same}.

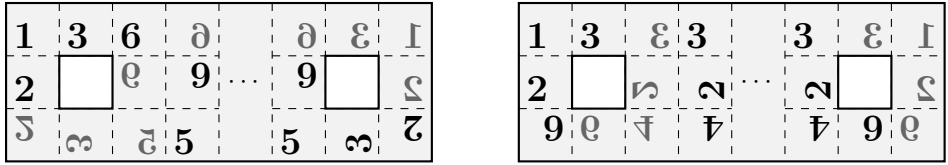
\begin{figure}[ht]
\centering
\begin{subfigure}{.45\textwidth}
\centering
\begin{tikzpicture}[xscale=0.7,yscale=0.7]
\initcube\printcube
\rolldown\printcube
\flipdown\printcube
\rollright\printcube
\rollright\printcube
\rollup\printcube
\flipup\printcube
\rollleft\printcube
\rollright
\flipright\printcube
\flipdown\printcube
\rolldown\printcube
\rollup
\flipup
\flipright
\flipright\printcube
\flipdown\printcube
\rolldown\printcube
\rollright\printcube
\rollright\printcube
\flipup\printcube
\rollup\printcube
\rollleft\printcube
\begin{scope}[shift={(-.5,-.5)}]
\fill[gray!10] (4,1)--(4,-2)--(5,-2)--(5,1);
\draw[thick] (0,1)--(0,-2)--(8,-2)--(8,1)--cycle;
\foreach \i in {0,-1}{
    \draw[dashed] (0,\i)--(4,\i);
    \draw[dashed] (5,\i)--(8,\i);}
\foreach \i in {1,...,7}{
    \draw[dashed] (\i,1)--(\i,-2);}
\draw[thick] (1,0)--(1,-1)--(2,-1)--(2,0)--cycle;
\draw[thick] (7,-1)--(7,0)--(6,0)--(6,-1)--cycle;
\node () at (4.5,-0.5) {$\cdots$};
\end{scope}
\end{tikzpicture}
\end{subfigure}
\begin{subfigure}{.45\textwidth}
\centering
\begin{tikzpicture}[xscale=0.7,yscale=0.7]
\initcube\printcube
\rolldown\printcube
\rolldown\printcube
\flipright\printcube
\rollright\printcube
\rollup\printcube
\rollup\printcube
\flipleft\printcube
\flipright
\flipright\printcube
\rolldown\printcube
\rolldown\printcube
\rollup
\rollup
\flipright
\flipright\printcube
\rolldown\printcube
\rolldown\printcube
\rollright\printcube
\flipright\printcube
\rollup\printcube
\rollup\printcube
\rollleft\printcube
\begin{scope}[shift={(-.5,-.5)}]
\fill[gray!10] (4,1)--(4,-2)--(5,-2)--(5,1);
\draw[thick] (0,1)--(0,-2)--(8,-2)--(8,1)--cycle;
\foreach \i in {0,-1}{
    \draw[dashed] (0,\i)--(4,\i);
    \draw[dashed] (5,\i)--(8,\i);}
\foreach \i in {1,...,7}{
    \draw[dashed] (\i,1)--(\i,-2);}
\draw[thick] (1,0)--(1,-1)--(2,-1)--(2,0)--cycle;
\draw[thick] (7,-1)--(7,0)--(6,0)--(6,-1)--cycle;
\node () at (4.5,-0.5) {$\cdots$};
\end{scope}
\end{tikzpicture}
\end{subfigure}
\caption{Possible patterns for holes in the same row}
\label{fig:ush-same}
\end{figure}

Next we consider holes lying in adjacent rows or columns of $P$. Assume without loss of generality that $h_1$ lies in row $2$ and column $2$ and $h_2$ in row $3$ in column $j$ of the support $P'$. By assumption $h_1$ and $h_2$ do not cooperate, so $j$ must be odd and at least $5$. Additionally $\varphi$ is non-trivial for $h_1$ and $h_2$. An application of Corollary~\ref{cor:extension} to the sub-polyomino consisting of rows $1$ and $2$ and columns $3$ to $j+1$ yields that $\varphi$ is constant on its columns and thus in particular the squares in row $1$ and $2$ of column $3$ are mapped to the same face of $\Ccal$. Analogously we obtain that the squares in rows 3 and 4 of column $j-1$ are mapped to the same face of $\Ccal$. Finally, applying Corollary~\ref{cor:extension} to the rectangular sub-polyomino consisting of columns $3$ to $j-1$ of $P'$ provides that the situation is up to isomorphism as shown in Figure~\ref{fig:ush-adjacent}. In particular the type of $h_1$ and $h_2$ under $\varphi$ coincides. This concludes the proof of Claim~\ref{clm:type-coincides}.

\begin{figure}[ht]
\centering
\begin{tikzpicture}[xscale=0.7,yscale=0.7]
\initcube\printcube
\rolldown\printcube
\flipdown\printcube
\flipdown\printcube
\rollright\printcube
\flipup\printcube
\rollright\printcube
\flipdown\printcube
\flipup
\rollup\printcube
\flipup\printcube
\rollleft\printcube
\rollright
\flipright\printcube
\flipdown\printcube
\rolldown\printcube
\flipdown\printcube
\flipup
\rollup
\flipup
\flipright
\flipright\printcube
\flipdown\printcube
\rolldown\printcube
\flipdown\printcube
\rollright\printcube
\rollright\printcube
\rollup\printcube
\flipup\printcube
\flipup\printcube
\rollleft\printcube
\flipdown\printcube
\begin{scope}[shift={(-.5,-.5)}]
\fill[gray!10] (4,1)--(4,-3)--(5,-3)--(5,1);
\draw[thick] (0,1)--(0,-3)--(8,-3)--(8,1)--cycle;
\foreach \i in {0,-1,-2}{
    \draw[dashed] (0,\i)--(4,\i);
    \draw[dashed] (5,\i)--(8,\i);}
\foreach \i in {1,...,7}{
    \draw[dashed] (\i,1)--(\i,-3);}
\draw[thick] (1,0)--(1,-1)--(2,-1)--(2,0)--cycle;
\draw[thick] (7,-2)--(7,-1)--(6,-1)--(6,-2)--cycle;
\node () at (4.5,-1) {$\cdots$};
\end{scope}
\end{tikzpicture}
\caption{Possible pattern for holes in adjacent rows}
\label{fig:ush-adjacent}
\end{figure}

Connectedness of $G$ implies that the type of all unit square holes of the polyomino $P$ coincides. Let $f$ be this type. Then for any hole $h$ no square of the support of $h$ is mapped to the face $f'$ opposite to $f$ in $\Ccal$. Furthermore, the restriction of $\varphi$ to a rectangular sub-polyomino $P''$ of $P$ containing only a single hole $h$ cannot be surjective by the induction basis. Using that the support of $h$ already covers all faces but $f'$, we conclude that no square of $P''$ is mapped to $f'$ by $\varphi$. Finally, it is not hard to see that any square of the polyomino $P$ is contained in some rectangular sub-polyomino containing precisely one hole of $P$, so that $\varphi$ cannot be surjective on $P$.
\end{proof}

\section{Folding tree-shaped polyominoes}
\label{sec:tree-shaped}

A polyomino $P$ is called \emph{tree-shaped} if its dual $D(P)$ is a tree. Note that this both forbids holes and also any rectangular sub-polyominoes of size at least $2 \times 2$. A tree-shaped polyomino is called \emph{path-shaped}, if additionally every square is connected to at most two other squares. The goal of this section is to fully characterise all tree-shaped polyominoes which can be folded onto the cube. The \emph{bounding box} of a polyomino $P$ is the smallest rectangle containing $P$. Note that by definition $P$ touches each side of the rectangle. The size of the bounding box of $P$ is called the \emph{bounding size} of $P$.

Our main goal is to extend results of {\scshape Aichholzer et al.} in \cite{aich18}, who provided a characterisation of foldable tree-shaped polyominoes with a bounding box of height at most $3$. However, both their statement and proof appear to be incomplete; for instance, their claimed results would imply that the polyomino sketched in Figure \ref{fig:forb-attach-b} below would not be foldable. Hence we decided to additionally come up with our own formulation and proof of their statement. To do this, we use similar methods.

We start by providing some useful language. Let $P$ and $Q$ be two tree-shaped polyominoes. We say that $Q$ is a \emph{minor} of $P$ if $P$ contains a sub-polyomino $P'$ which can be reduced to $Q$ by $180\degree$ folds. Note that because $P$ is tree-shaped, containing a foldable sub-polyomino implies that $P$ is foldable. In particular, whenever $P$ contains a foldable minor, $P$ must be foldable.

A \emph{leaf square} of $P$ is a square of $P$ which is only connected at one side to the rest of $P$. In other words a leaf-square is a square whose dual is a leaf in $D(P)$. A \emph{leaf-fold} is the $180\degree$ fold of a leaf square of $P$.  Note that a leaf-fold transforms $P$ into another tree-shaped polyomino $P'$ which has one square less.

In the upcoming proofs, we often show foldability of a polyomino by proving that it contains a cube net as a minor. Up to symmetry, there are eleven different cube nets; all of them are depicted in Figure~\ref{fig:cubenets}.  Due to the number of squares in each row, the first six cube nets are said to be of type 1-4-1, the following three are of type 2-3-1, then there is one of type 2-2-2 and the last one is of type 3-3.

\begin{figure}[ht]
    \tikzstyle{every node}=[circle, draw, fill,
                        inner sep=0pt, minimum width=4pt]
\centering
\begin{subfigure}[b]{.16\textwidth}
    \centering
    \begin{tikzpicture}[xscale=0.5,yscale=0.5]
        \fill[gray!10] (0,-1)--(1,-1)--(1,0)--(4,0)--(4,1)--(1,1)--(1,2)--(0,2);
        \foreach \i in {1,2,3}{\draw[dashed] (\i,0)--(\i,1);}
        \draw[dashed] (0,1)--(1,1) (0,0)--(1,0);
        \draw[thick] (0.5,-0.5) node {} -- (0.5,0.5) node {} -- (0.5,1.5) node {} (0.5,0.5) --  (1.5,0.5) node {} --  (2.5,0.5) node {} --  (3.5,0.5) node {};
        \draw[thick] (0,-1)--(1,-1)--(1,0)--(4,0)--(4,1)--(1,1)--(1,2)--(0,2)--cycle;
    \end{tikzpicture}
    \caption{}
    \label{cubenet-a}
\end{subfigure}
\begin{subfigure}[b]{.16\textwidth}
    \centering
    \begin{tikzpicture}[xscale=0.5,yscale=0.5]
        \fill[gray!10] (0,0)--(1,0)--(1,-1)--(2,-1)--(2,0)--(4,0)--(4,1)--(1,1)--(1,2)--(0,2);
        \foreach \i in {1,2,3}{\draw[dashed] (\i,0)--(\i,1);}
        \draw[dashed] (0,1)--(1,1) (1,0)--(2,0);
        \draw[thick] (0.5,1.5) node {} -- (0.5,0.5) node {} -- (1.5,0.5) node {}  -- (1.5,-0.5) node {} (1.5,0.5) --  (2.5,0.5) node {} --  (3.5,0.5) node {};
        \draw[thick] (0,0)--(1,0)--(1,-1)--(2,-1)--(2,0)--(4,0)--(4,1)--(1,1)--(1,2)--(0,2)--cycle;
    \end{tikzpicture}
    \caption{}
    \label{cubenet-b}
\end{subfigure}
\begin{subfigure}[b]{.16\textwidth}
    \centering
    \begin{tikzpicture}[xscale=0.5,yscale=0.5]
        \fill[gray!10] (0,0)--(2,0)--(2,-1)--(3,-1)--(3,0)--(4,0)--(4,1)--(1,1)--(1,2)--(0,2);
        \foreach \i in {1,2,3}{\draw[dashed] (\i,0)--(\i,1);}
        \draw[dashed] (0,1)--(1,1) (2,0)--(3,0);
        \draw[thick] (0.5,1.5) node {} -- (0.5,0.5) node {} -- (1.5,0.5) node {} -- (2.5,0.5) node {}  -- (2.5,-0.5) node {} (2.5,0.5) --  (3.5,0.5) node {};
        \draw[thick] (0,0)--(2,0)--(2,-1)--(3,-1)--(3,0)--(4,0)--(4,1)--(1,1)--(1,2)--(0,2)--cycle;
    \end{tikzpicture}
    \caption{}
    \label{cubenet-c}
\end{subfigure}
\begin{subfigure}[b]{.16\textwidth}
    \centering
    \begin{tikzpicture}[xscale=0.5,yscale=0.5]
        \fill[gray!10] (0,0)--(3,0)--(3,-1)--(4,-1)--(4,1)--(1,1)--(1,2)--(0,2);
        \foreach \i in {1,2,3}{\draw[dashed] (\i,0)--(\i,1);}
        \draw[dashed] (0,1)--(1,1) (3,0)--(4,0);
        \draw[thick] (0.5,1.5) node {} -- (0.5,0.5) node {} -- (1.5,0.5) node {} -- (2.5,0.5) node {} --  (3.5,0.5) node {} -- (3.5,-0.5) node {};
        \draw[thick] (0,0)--(3,0)--(3,-1)--(4,-1)--(4,1)--(1,1)--(1,2)--(0,2)--cycle;
    \end{tikzpicture}
    \caption{}
    \label{cubenet-d}
\end{subfigure}
\begin{subfigure}[b]{.16\textwidth}
    \centering
    \begin{tikzpicture}[xscale=0.5,yscale=0.5]
        \fill[gray!10] (0,0)--(1,0)--(1,-1)--(2,-1)--(2,0)--(4,0)--(4,1)--(2,1)--(2,2)--(1,2)--(1,1)--(0,1);
        \foreach \i in {1,2,3}{\draw[dashed] (\i,0)--(\i,1);}
        \draw[dashed] (1,1)--(2,1) (1,0)--(2,0);
        \draw[thick] (0.5,0.5) node {} -- (1.5,0.5) node {} -- (2.5,0.5) node {} --  (3.5,0.5) node {}  (1.5,0.5) --  (1.5,-0.5) node {} (1.5,0.5) --  (1.5,1.5) node {};
        \draw[thick] (0,0)--(1,0)--(1,-1)--(2,-1)--(2,0)--(4,0)--(4,1)--(2,1)--(2,2)--(1,2)--(1,1)--(0,1)--cycle;
    \end{tikzpicture}
    \caption{}
    \label{cubenet-e}
\end{subfigure}
\begin{subfigure}[b]{.16\textwidth}
    \centering
    \begin{tikzpicture}[xscale=0.5,yscale=0.5]
        \fill[gray!10] (0,0)--(2,0)--(2,-1)--(3,-1)--(3,0)--(4,0)--(4,1)--(2,1)--(2,2)--(1,2)--(1,1)--(0,1);
        \foreach \i in {1,2,3}{\draw[dashed] (\i,0)--(\i,1);}
        \draw[dashed] (1,1)--(2,1) (2,0)--(3,0);
        \draw[thick] (0.5,0.5) node {} -- (1.5,0.5) node {} -- (2.5,0.5) node {} --  (3.5,0.5) node {}  (2.5,0.5) --  (2.5,-0.5) node {} (1.5,0.5) --  (1.5,1.5) node {};
        \draw[thick] (0,0)--(2,0)--(2,-1)--(3,-1)--(3,0)--(4,0)--(4,1)--(2,1)--(2,2)--(1,2)--(1,1)--(0,1)--cycle;
    \end{tikzpicture}
    \caption{}
    \label{cubenet-f}
\end{subfigure}
\par\bigskip
\begin{subfigure}[b]{.16\textwidth}
    \centering
    \begin{tikzpicture}[xscale=0.5,yscale=0.5]
        \fill[gray!10] (0,-1)--(1,-1)--(1,0)--(3,0)--(3,1)--(4,1)--(4,2)--(2,2)--(2,1)--(0,1);
        \draw[dashed] (0,0)--(1,0) (1,1)--(1,0) (2,0)--(2,1) (2,1)--(3,1) (3,1)--(3,2);
        \draw[thick] (0.5,0.5) node {} -- (1.5,0.5) node {} --  (2.5,0.5) node {} --  (2.5,1.5) node {} -- (3.5,1.5) node {} (0.5,0.5) -- (0.5,-0.5) node {};
        \draw[thick] (0,-1)--(1,-1)--(1,0)--(3,0)--(3,1)--(4,1)--(4,2)--(2,2)--(2,1)--(0,1)--cycle;
    \end{tikzpicture}
    \caption{}
    \label{cubenet-g}
\end{subfigure}
\begin{subfigure}[b]{.16\textwidth}
    \centering
    \begin{tikzpicture}[xscale=0.5,yscale=0.5]
        \fill[gray!10] (0,0)--(1,0)--(1,-1)--(2,-1)--(2,0)--(3,0)--(3,1)--(4,1)--(4,2)--(2,2)--(2,1)--(0,1);
        \draw[dashed] (1,0)--(2,0) (1,1)--(1,0) (2,0)--(2,1) (2,1)--(3,1) (3,1)--(3,2);
        \draw[thick] (0.5,0.5) node {} -- (1.5,0.5) node {} --  (2.5,0.5) node {} --  (2.5,1.5) node {} -- (3.5,1.5) node {} (1.5,0.5) -- (1.5,-0.5) node {};
        \draw[thick] (0,0)--(1,0)--(1,-1)--(2,-1)--(2,0)--(3,0)--(3,1)--(4,1)--(4,2)--(2,2)--(2,1)--(0,1)--cycle;
    \end{tikzpicture}
    \caption{}
    \label{cubenet-h}
\end{subfigure}
\begin{subfigure}[b]{.16\textwidth}
    \centering
    \begin{tikzpicture}[xscale=0.5,yscale=0.5]
        \fill[gray!10] (0,0)--(2,0)--(2,-1)--(3,-1)--(3,1)--(4,1)--(4,2)--(2,2)--(2,1)--(0,1);
        \draw[dashed] (2,0)--(3,0) (1,1)--(1,0) (2,0)--(2,1) (2,1)--(3,1) (3,1)--(3,2);
        \draw[thick] (0.5,0.5) node {} -- (1.5,0.5) node {} --  (2.5,0.5) node {} --  (2.5,1.5) node {} -- (3.5,1.5) node {} (2.5,0.5) -- (2.5,-0.5) node {};
        \draw[thick] (0,0)--(2,0)--(2,-1)--(3,-1)--(3,1)--(4,1)--(4,2)--(2,2)--(2,1)--(0,1)--cycle;
    \end{tikzpicture}
    \caption{}
    \label{cubenet-i}
\end{subfigure}
\quad\quad
\begin{subfigure}[b]{.16\textwidth}
    \centering
    \begin{tikzpicture}[xscale=0.5,yscale=0.5]
        \fill[gray!10] (0,0)--(2,0)--(2,1)--(3,1)--(3,2)--(4,2)--(4,3)--(2,3)--(2,2)--(1,2)--(1,1)--(0,1);
        \foreach \i in {1,2}{\draw[dashed] (\i,\i-1)--(\i,\i) (\i,\i) -- (\i+1,\i);}
        \draw[dashed] (3,2) -- (3,3);
        \draw[thick] (0.5,0.5) node {} -- (1.5,0.5) node {} --  (1.5,1.5) node {} --  (2.5,1.5) node {} -- (2.5,2.5) node {} -- (3.5,2.5) node {};
        \draw[thick] (0,0)--(2,0)--(2,1)--(3,1)--(3,2)--(4,2)--(4,3)--(2,3)--(2,2)--(1,2)--(1,1)--(0,1)--cycle;
    \end{tikzpicture}
    \caption{}
    \label{cubenet-j}
\end{subfigure}
\quad\quad
\begin{subfigure}[b]{.195\textwidth}
    \centering
    \begin{tikzpicture}[xscale=0.5,yscale=0.5]
        \fill[gray!10] (0,0)--(3,0)--(3,1)--(5,1)--(5,2)--(2,2)--(2,1)--(0,1);
        \foreach \i in {1,2}{\draw[dashed] (\i,0)--(\i,1) (\i+2,1) -- (\i+2,2);}
        \draw[dashed] (2,1) -- (3,1);
        \draw[thick] (0.5,0.5) node {} -- (1.5,0.5) node {} --  (2.5,0.5) node {} --  (2.5,1.5) node {} -- (3.5,1.5) node {} -- (4.5,1.5) node {};
        \draw[thick] (0,0)--(3,0)--(3,1)--(5,1)--(5,2)--(2,2)--(2,1)--(0,1)--cycle;
    \end{tikzpicture}
    \caption{}
    \label{cubenet-k}
\end{subfigure}
\caption{The eleven different cube nets.}
\label{fig:cubenets}
\end{figure}
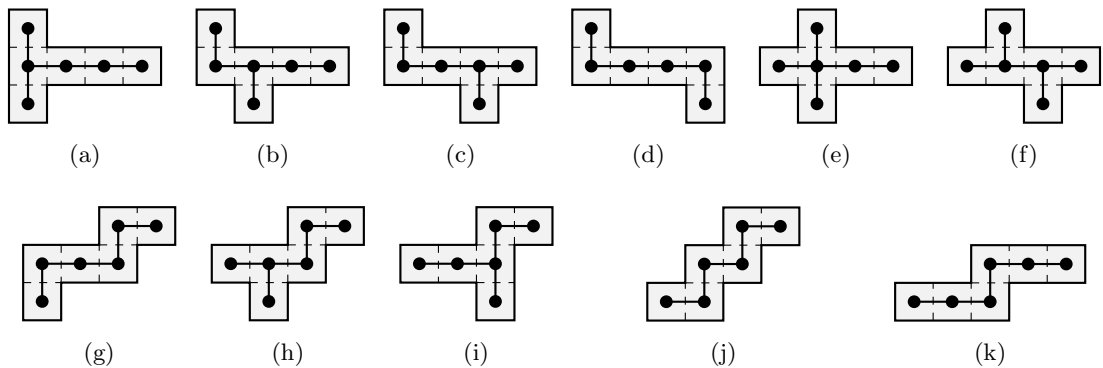

A first trivial observation is that any folding of a polyomino of bounding
size $1 \times n$ can cover at most 4 faces of the cube $\Ccal$.

\subsection[Folding tree-shaped polyomino of bounding size 2 x n]{Folding tree-shaped polyomino of bounding size \protect{$2 \times n$}}

Let us start by defining an infinite class of tree-shaped polyominoes of bounding size $2 \times n$. Class A includes all polyominoes consisting of a $1 \times n$ strip, possibly with several attachments containing only squares in columns of distance at most one to the column where it is attached to the strip. Note that each of these valid attachments is contained in the attachment shown in Figure~\ref{fig:attachment-a}. In this figure, the square at the center of the bottom row is part of the strip, which possibly continues to the left and right. In particular the squares of the attachment directly left and right of this square can only be present if the attachment is attached to the square at the respective end of the strip.

\begin{figure}[ht]
    \tikzstyle{every node}=[circle, draw, fill,
                        inner sep=0pt, minimum width=4pt]
\centering
\begin{tikzpicture}[xscale=0.7,yscale=0.7]
    \fill[gray!10] (0,1)--(1,1)--(1,2)--(0,2);
    \fill[gray!10] (0,0)--(0,2)--(-1,2)--(-1,0) (1,0)--(1,2)--(2,2)--(2,0);
    \foreach \i in {0,1}{\draw[dashed] (\i,2)--(\i,1);}
    \draw[dashed] (-1,1)--(0,1) (0,1)--(1,1) (1,1) -- (2,1);
    \draw[thick, densely dotted] (-2,0.5) -- (-1.05,0.5) (0.05,0.5) -- (0.95,0.5) (2.05,0.5) -- (3,0.5);
    \draw[thick] (0.5,0.5) -- (0.5,1.5) node {} -- (-0.5,1.5) node {} -- (-0.5,0.5) node {} (0.5,1.5) --  (1.5,1.5) node {} --  (1.5,0.5) node {};
    \draw[thick] (0,1)--(0,0)--(-1,0)--(-1,2)--(2,2)--(2,0)--(1,0)--(1,1);
    \foreach \i in {-1.5,0.5,2.5}{\node at (\i,0.5) [diamond,minimum width=6pt,minimum height=6pt,gray!80] {};}	
\end{tikzpicture}
\caption{Valid Class A attachments are part of this attachment.}
\label{fig:attachment-a}
\end{figure}
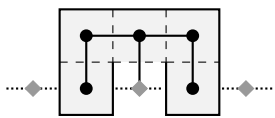

Observe that tree-shaped polyominoes of bounding size $2 \times n$ for $n \leq 3$ cannot fold onto $\Ccal$. Indeed, any foldable polyomino has to contain at least 6 squares and none of the 11 different nets of the cube of Figure~\ref{fig:cubenets} fits into a $2 \times 3$ rectangle. Thus the following theorem fully classifies the set of foldable tree-shaped polyominoes of bounding size $2 \times n$.

\begin{thm}
\label{thm:2_times_n}
Let $P$ be a tree-shaped polyomino of bounding size $2 \times n$ for some $n \geq 4$. Then $P$ folds onto $\Ccal$ if and only if it is not contained in Class A.
\end{thm}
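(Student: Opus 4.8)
The plan is to prove the two implications separately, relying on two tools established above. First, for a tree-shaped polyomino, containing a foldable minor already implies foldability, so the backward direction reduces to exhibiting a suitable foldable sub-configuration. Second, every folding onto $\Ccal$ induces a \emph{surjective} consistent mapping, so non-foldability can be certified by ruling out such mappings; this drives the forward direction.

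For the direction that a Class~A polyomino $P$ does not fold onto $\Ccal$, I would show that no consistent mapping $\varphi$ of $P$ can be surjective. The key is an orientation invariant along the strip: since all creases internal to the $1\times n$ strip are parallel, every consistent mapping ``rolls'' the cube about a single fixed axis, so the strip squares cover at most the four faces forming one band around that axis, leaving a fixed pair of opposite pole faces uncovered by the strip. Moreover, for every strip square the edge pointing toward the attachment row is mapped to the edge shared with one \emph{fixed} pole face $T$. I would verify, distinguishing the $90\degree$ and $180\degree$ cases, that this ``the attachment-side edge meets $T$'' property is preserved by every fold along a strip crease. Because in a $2\times n$ box all attachments lie on a single side of the strip and span only columns within distance one (as in Figure~\ref{fig:attachment-a}), folding up along these edges and then moving horizontally within the attachment row can only reach $T$ and band faces, never the opposite pole. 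Hence that pole is never covered and $\varphi$ cannot be surjective; the base cases of very short strips reduce to the $1\times n$ remark and Theorem~\ref{thm:one-simple-hole}.

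For the converse, that a tree-shaped $2\times n$ polyomino $P\notin$~Class~A folds onto $\Ccal$, I would show that $P$ contains a cube net as a minor and invoke the reduction above. The decisive structural point is that, up to symmetry, the only one of the eleven cube nets of Figure~\ref{fig:cubenets} fitting into a height-$2$ bounding box is the $3$--$3$ net of Figure~\ref{cubenet-k}; note that this net is itself a strip carrying a \emph{non-local} attachment and hence lies outside Class~A, consistent with the statement. The work is then to convert the hypothesis $P\notin$~Class~A into such a minor: a $2\times n$ polyomino fails to be a strip with one-sided local attachments precisely when it contains either a non-local attachment (one reaching a column at distance at least two from its point of attachment) or a staircase step forcing the two filled rows to overlap in a way no single horizontal spine can absorb. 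I would run a short case analysis on the first column where the spine-plus-local-attachment description breaks down, and in each case exhibit a sub-polyomino that reduces by $180\degree$ leaf-folds to the net of Figure~\ref{cubenet-k}.

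The main obstacle I anticipate is the backward direction: precisely characterising the complement of Class~A and, for each way in which $P$ can fail to be in Class~A, locating the $3$--$3$ net minor. This requires a careful and somewhat tedious enumeration of local configurations, with attention to slits, which the model permits, and to the fact that attachments may wrap around at the two ends of the strip. A secondary technical point is making the orientation invariant of the forward direction fully rigorous across the interaction of $90\degree$ and $180\degree$ folds with those end-of-strip ``wrap-around'' attachment squares that lie in the strip's own row, where the naive ``one pole only'' heuristic must be checked rather than assumed.
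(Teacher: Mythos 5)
Your forward direction is sound and is essentially the paper's own argument: a consistent mapping rolls the strip around a single band of four faces, every square of a Class~A attachment (including the wrap-around squares at the strip's ends) shares a vertex with the attachment side of the strip, and all those vertices land on corners of one fixed pole face, so the opposite pole can never be covered.

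The backward direction, however, has a genuine gap: it is \emph{not} true that every tree-shaped $2\times n$ polyomino outside Class~A contains a cube net as a minor, so your plan of always exhibiting the $3$--$3$ net~\ref{cubenet-k} of Figure~\ref{fig:cubenets} cannot be carried out. Concretely, let $P$ have bottom row in columns $1,2,3,4$, top row in columns $2,3,4$, and exactly one vertical adjacency, in column $2$ (slits between the rows in columns $3$ and $4$). This $P$ is tree-shaped of bounding size $2\times 4$ and lies outside Class~A, since the top row is an attachment reaching column $4$, at distance two from its attachment column; and $P$ does fold onto $\Ccal$: wrap the bottom $1\times 4$ strip around four faces as a band, then fold the top strip perpendicularly over the cube, so that its three squares cover the top pole, one band face a second time, and the bottom pole. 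But $P$ has no cube net as a minor. The only net of height two is the $3$--$3$ net, two rows of three cells overlapping in exactly one column; since $180\degree$ folds only reflect subtrees across crease lines (they can stack squares but never slide one row relative to the other), anchoring the two squares of column $2$ one checks that the top row can only ever occupy the column sets $\{2,3,4\}$, $\{2,3\}$, $\{1,2\}$, $\{2\}$, the bottom row only $\{1,2,3,4\}$, $\{1,2,3\}$, $\{2,3,4\}$, $\{2,3\}$, $\{1,2\}$, $\{2\}$, and folding the unique horizontal crease collapses everything into a single row; no combination produces two rows of three cells meeting in exactly one column. This is precisely the configuration the paper treats by a separate, non-minor argument: when the two components obtained by cutting the connecting vertical edge reach distance-two columns on \emph{opposite} sides, a $3$--$3$ net minor does exist, but when they reach the \emph{same} side (as here) the paper replaces the minor argument by the direct $90\degree$-folding construction described above. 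Your outline contains no counterpart of that construction, so your case analysis cannot be completed as stated. (A small additional slip: Theorem~\ref{thm:one-simple-hole} concerns rectangular polyominoes with a hole and plays no role in any base case of this theorem.)
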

\begin{proof}
We start by showing that a polyomino $P$ in Class A does not fold onto $\Ccal$. Clearly in any folding of $P$ into $\Ccal$, the squares contained in the strip can cover at most 4 faces of~$\Ccal$. In particular all vertices at the same (say top) side of the strip are mapped onto the corners of the same face $f$ of $\Ccal$. Observe that each square of the possible attachments shares one of these vertices with the strip. This implies that none of the squares of the attachments can cover the face of $\Ccal$ opposite to $f$.

To prove the converse direction, let $P$ be a polyomino of bounding size $2 \times n$ for some $n \geq 4$ which does not belong to Class A. In other words, it is not possible to find a strip in $P$ such that $P$ consists of this strip and only valid attachments of Class A as depicted in Figure~\ref{fig:attachment-a}. We conclude that there must be two adjacent squares lying in the same column $i$ of the bounding box such that after removing the edge connecting these squares, both components are not valid attachments. Valid attachments are characterised by the fact that they do not visit any column at distance at least 2 from their root square. In particular we can assume that each of our two components visits at least one of the columns $i-2$ and $i+2$. In the case where one component visits column $i-2$ and the other component visits column $i+2$, it is not hard to see that $P$ contains a cube net of type 3-3 as a minor and thus must be foldable. Otherwise assume that both components visit the same column at distance $2$ from $i$, say $i+2$. Then $P$ contains two strips of length 3 on top of each other, which are connected only in column $i$. But by our assumption, the bounding box has length 4, so at least one of the strips is connected to one more square and thus must have length at least 4. But in this case  we can fold this strip onto four faces of the cube and cover the other two faces with squares of the second strip in column $i$ and $i+2$.
\end{proof}

\subsection[Folding tree-shaped polyomino of bounding size 3 x n]{Folding tree-shaped polyomino of bounding size \protect{$3 \times n$}}

Again, we start by defining an infinite class of polyominoes of bounding size $3 \times n$.
Class B includes all polyominoes consisting of a $1 \times n$ strip with exactly one attachment of height two satisfying the following condition. Assume that the strip lies in the lower row of the bounding box (the other case is a horizontal reflection). Then all squares of the attachment lie in the two rows above the strip and in columns of distance at most one to the column where it is attached to the strip, see Figure~\ref{fig:attachment-b}. Moreover, if the square in the center of the top row is connected to its left and right neighbour, it must also be connected to its bottom neighbour. In particular any attachment containing the attachment in Figure~\ref{fig:non-attachment-b} is forbidden because the upper central square is not connected to its bottom neighbour.

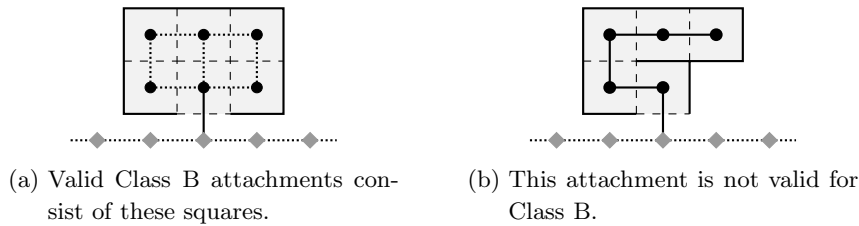
\begin{figure}[ht]
    \tikzstyle{every node}=[circle, draw, fill,
                        inner sep=0pt, minimum width=4pt]
\centering
\begin{subfigure}[b]{.35\textwidth}
    \centering
    \begin{tikzpicture}[xscale=0.7,yscale=0.7]
        \fill[gray!10] (-1,1)--(2,1)--(2,3)--(-1,3);
        \foreach \i in {0,1}{
        	\draw[dashed] (\i,3)--(\i,1);
        	\draw[thick, densely dotted] (-0.5,\i+1.5)--(0.5,\i+1.5) (0.5,\i+1.5)--(1.5,\i+1.5);
        	}
        \draw[dashed] (-1,2)--(0,2) (0,2)--(1,2) (1,2) -- (2,2) (0,1)--(1,1);
        \draw[thick, densely dotted] (-2,0.5) -- (3,0.5);
        \foreach \i in {0,1,2}{
        	\draw[thick, densely dotted] (\i-0.5,1.5)--(\i-0.5,2.5);
        	\foreach \j in {0,1}{\node at (\i-0.5,\j+1.5) {};}}
        \draw[thick] (0.5,0.5) -- (0.5,1.5);
        \draw[thick] (0,1)--(-1,1)--(-1,3)--(2,3)--(2,1)--(1,1);
        \foreach \i in {-1.5,-0.5,0.5,1.5,2.5}{\node at (\i,0.5) [diamond,minimum width=6pt,minimum height=6pt,gray!80] {};}
    \end{tikzpicture}
    \caption{Valid Class B attachments consist of these squares.}
    \label{fig:attachment-b}
\end{subfigure}
\quad\quad
\begin{subfigure}[b]{.35\textwidth}
    \centering
    \begin{tikzpicture}[xscale=0.7,yscale=0.7]
        \fill[gray!10] (-1,1)--(1,1)--(1,2)--(2,2)--(2,3)--(-1,3);
        \draw[dashed] (-1,2)--(0,2) (0,1)--(1,1) (0,1) -- (0,2) -- (0,3) (1,2) -- (1,3);
        \draw[thick, densely dotted] (-2,0.5) -- (3,0.5);
        \draw[thick] (0.5,0.5) -- (0.5,1.5) node {} --(-0.5,1.5) node {} -- (-0.5,2.5) node {} -- (0.5,2.5) node {} -- (1.5,2.5) node {};
        \draw[thick] (0,1)--(-1,1)--(-1,3)--(2,3)--(2,2)--(1,2)--(1,1) (1,2)--(0,2);
        \foreach \i in {-1.5,-0.5,0.5,1.5,2.5}{\node at (\i,0.5) [diamond,minimum width=6pt,minimum height=6pt,gray!80] {};}
    \end{tikzpicture}
    \caption{This attachment is not valid for Class B.}
    \label{fig:non-attachment-b}
\end{subfigure}
\caption{Valid and non-valid attachments for Class B.}
\label{fig:valid-attachments}
\end{figure}

The following theorem characterises all foldable tree-shaped polyominoes of bounding size $3 \times n$ for $n \leq 5$, thus leaving only finitely many polyominoes to be treated separately.

\begin{thm}
\label{thm:3_times_n}
Let $P$ be a tree-shaped polyomino of bounding size $3 \times n$ for some $n \geq 5$. Then $P$ folds onto $\Ccal$ if and only if it is not contained in Class B.
\end{thm}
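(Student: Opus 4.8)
The plan is to prove the two directions separately, paralleling the proof of Theorem~\ref{thm:2_times_n}: first that membership in Class~B obstructs foldability, and then that every tree-shaped polyomino of bounding size $3\times n$ with $n\ge 5$ lying outside Class~B contains a cube net as a minor and is therefore foldable.

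For the forward direction (Class~B implies non-foldable onto $\Ccal$) I would fix, without loss of generality, the $1\times n$ strip in the bottom row and argue that some face of $\Ccal$ is never covered. As in the Class~A argument, the strip covers at most $4$ faces; for a folding onto $\Ccal$ it must cover exactly four, and a straight strip covering four faces necessarily wraps into an equatorial band, so the two opposite ``pole'' faces $T$ and $D$ are left uncovered by the strip, with the strip's top edge mapping into the boundary of $T$ and its bottom edge into the boundary of $D$ (covering fewer than four faces only leaves more to recover and is handled the same way). The heart of this direction is then to show that the single height-two attachment, lying entirely above the strip, covers at most one of $T$ and $D$. For this I would attach to each attachment square an orientation bit recording whether its planar upward direction maps toward $T$ or toward $D$: a $180\degree$ fold across a horizontal edge flips the bit, a $90\degree$ horizontal fold lands on the pole currently indicated by the bit, and folds across vertical edges preserve it. Propagating the bit along the attachment—a subtree of the $2\times 3$ block above the root square—and invoking the Class~B restriction on the centre square of the top row (exactly the condition ruling out Figure~\ref{fig:non-attachment-b}), one checks over the admissible shapes that the reachable poles never include both $T$ and $D$; hence a pole stays uncovered. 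The same bookkeeping explains why the central-square condition is needed: in the forbidden shape the top row is reached through a side square, giving a path long enough to flip and then unflip the bit, so that one end of the top row reaches $T$ and the other $D$.

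For the converse I would assume $P\notin$ Class~B and exhibit one of the eleven cube nets of Figure~\ref{fig:cubenets} as a minor, which suffices since a tree-shaped polyomino containing a foldable minor is foldable. After possibly simplifying $P$ by leaf-folds, I would fix a longest horizontal strip as a spine and split into cases according to how $P$ fails the Class~B description: squares protruding on opposite sides of the spine yield a net of type $1$-$4$-$1$; two length-three rows stacked with the correct offset yield the $3$-$3$ net; a height-two arm in the forbidden central configuration yields a net of type $2$-$3$-$1$, realised by the very both-poles folding isolated above; and the presence of two separate attachments, or of an attachment reaching a column at distance at least two, likewise forces one of these nets. The hypothesis $n\ge 5$ is used to guarantee the spine is long enough to host the required net once the attachments are accounted for; note that Corollary~\ref{cor:extension} is unavailable here since the polyomino is full of slits, so each minor must be produced by explicit $180\degree$ folds.

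The main obstacle I anticipate is the completeness of this converse case analysis: one must verify that every way of violating the Class~B conditions—extra attachments, over-wide attachments, and the forbidden top-row configuration, in all their combinations along a spine of length at least $5$—genuinely exposes one of the eleven nets as a minor. By contrast the forward direction is comparatively routine once the orientation bit is in place, the only delicate point being the verification that a straight strip covering four faces always wraps into a band, so that the two uncovered faces are truly opposite poles rather than an arbitrary pair.
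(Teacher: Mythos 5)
Your converse direction follows the paper's strategy in spirit (a case analysis over the ways $P$ can fail the Class~B description, producing cube nets as minors via $180\degree$ folds and falling back on Theorem~\ref{thm:2_times_n}), and its completeness is indeed the burden you anticipate. The genuine gap is in your forward direction: the orientation-bit bookkeeping is not a correct model of grid foldings. A square of the attachment can land on a pole face (for instance the root square $s$ after a $90\degree$ fold), and a square attached to such a pole square across a \emph{vertical} planar edge lands on a band face rotated by $90\degree$, so that its planar upward direction maps to a cube-horizontal direction; for such squares your bit is undefined, and the rule ``folds across vertical edges preserve it'' breaks down. This is not a removable technicality: rotated band squares are exactly the mechanism by which the attachment of Figure~\ref{fig:non-attachment-b} covers both poles. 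Concretely, fold $s$ by $90\degree$ onto $T$; fold the square left of $s$ by $90\degree$ onto a side face (it is now rotated); fold the square above it by $90\degree$ onto the next side face and the top-centre square by $180\degree$ onto that same face (both rotated); then the rightmost top square, attached across a vertical planar edge whose image lies on the boundary of $D$, folds $90\degree$ onto $D$. Under your stated rules this folding cannot exist, since a top-row square attached by a vertical edge never reaches a pole; so your bookkeeping would ``prove'' that the non-Class-B polyomino of Figure~\ref{fig:forb-attach-b} does not fold onto $\Ccal$, contradicting the converse direction of the very theorem you are proving. Your explanation of the forbidden shape (the bit flips and unflips so that the two ends of the top row reach opposite poles) likewise does not follow from your rules.

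The paper replaces this bookkeeping with a short vertex argument that is both simpler and correct: every square of a valid Class~B attachment shares, \emph{in the abstract polyomino} (i.e., respecting slits), one of the two vertices of the root square $s$ not contained in the strip; in any folding these two vertices are either both corners of $f$ or both corners of $f'$ (the two opposite faces missed by the strip), and a face of $\Ccal$ containing a corner of $f$ cannot be the antipodal face $f'$. The centre-square condition of Class~B is precisely what guarantees this vertex-sharing: if the top-centre square is severed from $s$ as in Figure~\ref{fig:non-attachment-b}, the slit splits the geometric corner of $s$ into two abstract vertices, the far top square shares no vertex of $s$, and both the argument and (as shown above) its conclusion fail. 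If you wish to keep a propagation-style proof, you must track the full orientation of each square --- which face it lies on and which of the four cube directions its planar-up maps to --- rather than a single bit.
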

\begin{proof}
To prove the first direction, take any folding of a Class B polyomino $P$ into $\Ccal$. In this folding the squares of the strip of $P$ can cover at most 4 faces of $\Ccal$, leaving two opposite faces $f$ and $f'$ of $\Ccal$ unoccupied. Let $s$ be the square of the attachment adjacent to the strip and observe that by the definition of valid attachments each square of the attachment contains one of the two vertices of $s$ not contained in the strip. In the folding into $\Ccal$, these two vertices are either both contained in $f$ or in $f'$. We conclude that the respective second face cannot be covered by the attachment.

For the converse direction, assume that $P$ does not fold onto $\Ccal$. By definition of the bounding box, $P$ contains a path-shaped sub-polyomino starting in the first column and ending in the last column of the bounding box. Assume that this path-shaped polyomino contains two adjacent squares in column $i$ for some $3 \leq i \leq n-2$. Then clearly it contains a cube net of type 3-3 and thus folds onto $\Ccal$. We conclude that $P$ contains a strip of length at least $n-2$ which visits column 2 and column $n-1$. Let $S$ be a strip in $P$ of maximal length satisfying these conditions. Observe that each attachment attached to $S$ in column $i$ can visit only columns $i-1$ to $i+1$. Indeed, if this is not the case we can apply $180\degree$ folds to horizontal edges to reduce $P$ to a polyomino of bounding size $2 \times n$ having an attachment not contained in these three columns, which is foldable by the previous theorem.


Assume first that $S$ lies in the central row of the bounding box. If $S$ has length at least 4, we fold all vertical edges of $P$ not contained in the strip by $180\degree$ and the resulting structure contains a cube net of type 1-4-1. If there is an attachment visiting both the top and bottom row, we can apply $180\degree$ folds to each horizontal edge of all other attachments and end up in the previous case, where $S$ has length 4. We are left with the case where $S$ has length 3 and we have at least one attachment at each side of the strip. In this case column 1 must be visited by an attachment, which by the earlier observation must be attached in column 2. But then $P$ contains a cube net of type 2-3-1 as a minor.

We conclude that $S$ lies in an extremal row, say the bottom row and has one attachment of height 2 reaching the top row. It cannot have a second attachment, otherwise we could apply a $180\degree$ fold to the attachment of height 2 and obtain a polyomino with the same bounding size as $P$ and $S$ lying in the central row, which is foldable by the previous case. Recall that if the attachment is attached in column $i$, it is contained in columns $i-1$ to $i+1$. Furthermore, it is not hard to see that no square of the attachment lies in the bottom row of the bounding box. Indeed, such a square has to be connected to its top neighbour. Figure~\ref{fig:forb-attach-a} shows one possible polyomino satisfying these conditions. A $180\degree$ fold of the horizontal edge connecting the strip and the attachment (drawn red in the figure) yields a polyomino with the same bounding size as $P$ having $S$ in the central row, which is foldable by the earlier discussion.

Finally, assume for a contradiction that the square $s'$ in the upper row of column $i$ is connected to both its left and right neighbour, but not to its bottom neighbour. Then the shortest path-shaped sub-polyomino of $P$ connecting $s'$ with the strip must either visit column $i-1$ or column $i+1$, the first case is depicted in Figure~\ref{fig:forb-attach-b}. But then a $180\degree$ fold applied to the vertical edge at the respective side of $s'$ yields an attachment which is not contained in columns $i-1$ to $i+1$ any longer, thus by the previous discussion this attachment guarantees foldability, contradicting our assumption.

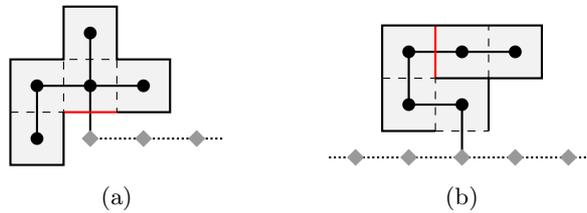
\begin{figure}[ht]
    \tikzstyle{every node}=[circle, draw, fill,
        inner sep=0pt, minimum width=4pt]
    \centering
    \begin{subfigure}{.30\textwidth}
        \centering
        \begin{tikzpicture}[xscale=0.7,yscale=0.7]
            \fill[gray!10] (-1,0)--(0,0)--(0,1)--(2,1)--(2,2)--(1,2)--(1,3)--(0,3)--(0,2)--(-1,2);
            \draw[dashed] (-1,1)--(0,1)--(0,2)--(1,2)--(1,1);
            \draw[thick, densely dotted] (0.5,0.5) -- (3,0.5);
            \draw[thick] (1.5,1.5) node {} -- (0.5,1.5) node {} -- (-0.5,1.5) node {} -- (-0.5,0.5) node {};
            \draw[thick] (0.5,0.5) -- (0.5,1.5) -- (0.5,2.5) node {};
            \draw[red,thick] (0,1) -- (1,1);
            \draw[thick] (1,1)--(2,1)--(2,2)--(1,2)--(1,3)--(0,3)--(0,2)--(-1,2)--(-1,0)--(0,0)--(0,1);
            \foreach \i in {0.5,1.5,2.5}{\node at (\i,0.5) [diamond,minimum width=6pt,minimum height=6pt,gray!80] {};}
        \end{tikzpicture}
        \caption{}
        \label{fig:forb-attach-a}
    \end{subfigure}
    \begin{subfigure}{.30\textwidth}
        \centering
        \begin{tikzpicture}[xscale=0.7,yscale=0.7]
            \fill[gray!10] (-1,1)--(1,1)--(1,2)--(2,2)--(2,3)--(-1,3);
            \draw[dashed] (-1,2)--(0,2) (0,1)--(1,1) (0,1) -- (0,2) (1,2) -- (1,3);
            \draw[thick, densely dotted] (-2,0.5) -- (3,0.5);
            \draw[thick] (0.5,0.5) -- (0.5,1.5) node {} --(-0.5,1.5) node {} -- (-0.5,2.5) node {} -- (0.5,2.5) node {} -- (1.5,2.5) node {};
            \draw[red,thick] (0,2) -- (0,3);
            \draw[thick] (0,1)--(-1,1)--(-1,3)--(2,3)--(2,2)--(1,2)--(1,1) (1,2)--(0,2);
            \foreach \i in {-1.5,-0.5,0.5,1.5,2.5}{\node at (\i,0.5) [diamond,minimum width=6pt,minimum height=6pt,gray!80] {};}
        \end{tikzpicture}
        \caption{}
        \label{fig:forb-attach-b}
    \end{subfigure}
    \caption{Attachments guaranteeing foldability}
    \label{fig:forbidden-attachments-height-two}
    \end{figure}
\end{proof}

To provide a rough idea, there are (up to symmetry) 124 tree-shaped polyominoes of bounding size $3 \times 3$ and 3942 tree-shaped polyominoes of bounding size $3 \times 4$. Unfortunately, there seems to be no `nice' characterisation for those. To keep things as simple as possible, we provide in Figures \ref{fig:3times3-foldable} and \ref{fig:3times4-foldable} all minimal foldable polyominoes (up to symmetry) of the respective bounding sizes. Here minimal foldable means that removing any leaf square either reduces the bounding size or destroys foldability. These sets were determined by computer. 

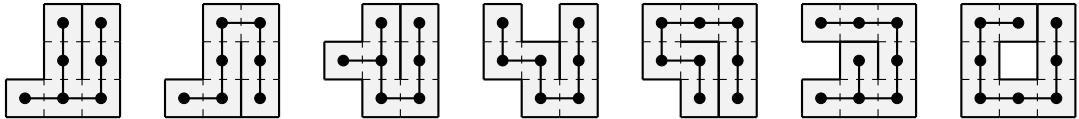
\begin{figure}[ht]
    \tikzstyle{every node}=[circle, draw, fill, inner sep=0pt, minimum width=4pt]
    \centering
    \loaddata{draw_new_3x3.txt}
    \foreach \cubenet in \loadeddata {
        \begin{subfigure}{0.125\textwidth}
        \centering
        \begin{tikzpicture}[xscale=0.5,yscale=0.5]
            \foreach \a/\b/\c/\d/\e/\f in \cubenet {
                \fill[gray!10] (\a,\b) -- (\a+1,\b) -- (\a+1,\b+1) -- (\a,\b+1);
                }
            \foreach \a/\b/\c/\d/\e/\f in \cubenet {
                \ifthenelse{\equal{\c}{0}}
                    {
                        \draw[thick] (\a,\b+1)--(\a+1,\b+1);
                    }
                    {
                        \draw[thick] (\a+0.5,\b+0.5)--(\a+0.5,\b+1.5);
                        \draw[dashed] (\a,\b+1)--(\a+1,\b+1); 
                    }
                \ifthenelse{\equal{\d}{0}}
                    {
                        \draw[thick] (\a,\b)--(\a,\b+1);
                    }
                    {
                        \draw[dashed] (\a,\b)--(\a,\b+1);
                        \draw[thick] (\a+0.5,\b+0.5)--(\a-0.5,\b+0.5); 
                    }
                \ifthenelse{\equal{\e}{0}}
                    {
                        \draw[thick] (\a+1,\b)--(\a,\b);
                    }{}
                \ifthenelse{\equal{\f}{0}}
                    {
                        \draw[thick] (\a+1,\b+1)--(\a+1,\b);
                    }
                    {}
                \draw (\a+0.5,\b+0.5) node{};
            }
        \end{tikzpicture}
        \caption*{}
        \end{subfigure}
    }
    \vspace{-0.5cm}
    \caption{The 7 minimal foldable tree-shaped polyominoes of bounding size $3 \times 3$.}
    \label{fig:3times3-foldable}
\end{figure}

\begin{figure}[p]
    \tikzstyle{every node}=[circle, draw, fill, inner sep=0pt, minimum width=4pt]
    \centering
    \loaddata{draw_new_3x4.txt}
    \foreach \cubenet in \loadeddata {
        \begin{subfigure}{0.17\textwidth}
        \centering
        \begin{tikzpicture}[xscale=0.5,yscale=0.5]
            \foreach \a/\b/\c/\d/\e/\f in \cubenet {
                \fill[gray!10] (\a,\b) -- (\a+1,\b) -- (\a+1,\b+1) -- (\a,\b+1);
                }
            \foreach \a/\b/\c/\d/\e/\f in \cubenet {
                \ifthenelse{\equal{\c}{0}}
                    {
                        \draw[thick] (\a,\b+1)--(\a+1,\b+1);
                    }
                    {
                        \draw[thick] (\a+0.5,\b+0.5)--(\a+0.5,\b+1.5);
                        \draw[dashed] (\a,\b+1)--(\a+1,\b+1); 
                    }
                \ifthenelse{\equal{\d}{0}}
                    {
                        \draw[thick] (\a,\b)--(\a,\b+1);
                    }
                    {
                        \draw[dashed] (\a,\b)--(\a,\b+1);
                        \draw[thick] (\a+0.5,\b+0.5)--(\a-0.5,\b+0.5); 
                    }
                \ifthenelse{\equal{\e}{0}}
                    {
                        \draw[thick] (\a+1,\b)--(\a,\b);
                    }{}
                \ifthenelse{\equal{\f}{0}}
                    {
                        \draw[thick] (\a+1,\b+1)--(\a+1,\b);
                    }
                    {}
                \draw (\a+0.5,\b+0.5) node{};
            }
        \end{tikzpicture}
        \vspace{0.6cm}
        \end{subfigure}
    }
    \vspace{-0.5cm}
    \caption{The 45 minimal foldable tree-shaped polyominoes of bounding size $3 \times 4$.}
    \label{fig:3times4-foldable}
\end{figure}
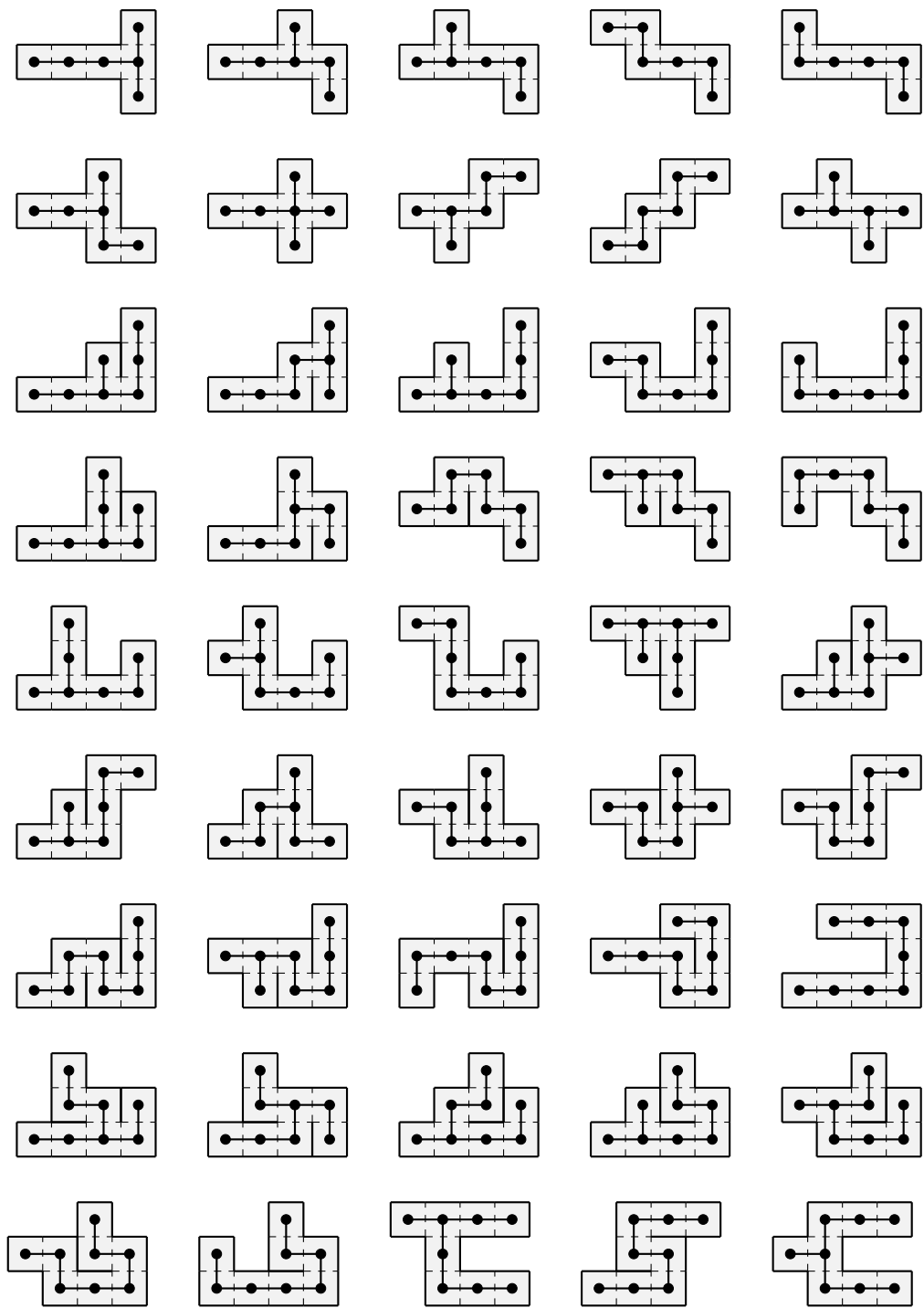

\subsection[Folding tree-shaped polyomino of bounding size at least 4 x 4]{Folding tree-shaped polyomino of bounding size at least $4 \times 4$}

The goal of this section is to show that the polyomino $P_W$ depicted in Figure~\ref{fig:4times4-unfoldable} is the only tree-shaped polyomino of bounding size at least $4 \times 4$ which does not fold onto $\Ccal$.

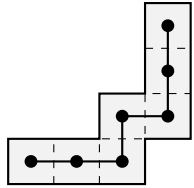
\begin{figure}[ht]
    \tikzstyle{every node}=[circle, draw, fill,
        inner sep=0pt, minimum width=4pt]
    \centering
       \begin{tikzpicture}[xscale=0.6,yscale=0.6]
        \draw[thick,fill=gray!10] (0,0)--(3,0)--(3,1)--(4,1)--(4,4)--(3,4)--(3,2)--(2,2)--(2,1)--(0,1)--cycle;
        \draw[dashed] (1,0)--(1,1) (2,0)--(2,1)--(3,1)--(3,2)--(4,2) (3,3)--(4,3);
        \draw[thick] (0.5,0.5) node {} -- (1.5,0.5) node {} -- (2.5,0.5) node {} --  (2.5,1.5) node {} --  (3.5,1.5) node {} --  (3.5,2.5) node {} --  (3.5,3.5) node {};
    \end{tikzpicture}
    \caption{Polyomino of bounding size $4 \times 4$ which does not fold onto $\Ccal$.}
    \label{fig:4times4-unfoldable}
\end{figure}

\begin{lem}
    $P_W$ does not fold onto $\Ccal$.
\end{lem}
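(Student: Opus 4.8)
The plan is to show that $P_W$ admits no surjective consistent mapping to $\Ccal$; since any folding onto $\Ccal$ induces such a mapping, this suffices and we never need to invoke the topological obstructions of Section~\ref{sec:mapping} (indeed, being tree-shaped, $P_W$ has no holes). Observe that $P_W$ is path-shaped: its seven squares $s_1,\dots,s_7$ form a single path whose successive displacement vectors are east, east, north, east, north, north. A consistent mapping of a path is exactly an initial placement of the cube on $s_1$ together with, at each of the six joining edges, a choice of fold: either a $90\degree$ roll (the next square maps to the unique face of $\Ccal$ adjacent across the shared edge) or a $180\degree$ flip (the next square maps to the same face). Because post-composing a consistent mapping with a rotation of $\Ccal$ merely permutes the six face-labels, and hence preserves surjectivity, I may fix the initial placement once and for all: I put $s_1$ on a fixed face with a fixed orientation so that rolling east or north cycles the bottom face through the two standard $4$-cycles of faces of $\Ccal$. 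The entire mapping is then determined by the sequence of six fold choices.

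Next I bound the number of flips. Each $180\degree$ flip forces two consecutive squares to share a face, so a mapping with $t$ flips produces at least $t$ coincidences among its seven images and therefore hits at most $7-t$ distinct faces. Surjectivity onto the six faces thus permits at most one flip, reducing the problem to seven configurations: the all-roll mapping, and the six mappings with a single flip placed at one of the six edges. For each configuration I compute the multiset of image faces by rolling the cube along the path and recording the bottom face at each square. The only subtlety is that a $180\degree$ flip reverses the orientation of the paper beyond the fold, exactly as in the strip foldings of Figure~\ref{fig:strip}: a flip across a horizontal edge interchanges the north and south travel directions of all later moves, and a flip across a vertical edge interchanges east and west, and this reflection must be carried through the subsequent rolls. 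Carrying out these seven short computations shows that every configuration omits at least one face of $\Ccal$; for instance the all-roll mapping yields the image sequence $\one,\three,\six,\two,\four,\one,\three$, which misses $\five$.

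The structural reason behind these computations, which also organises the bookkeeping, is that each straight sub-strip $s_1s_2s_3$ and $s_5s_6s_7$ maps into a ring of four faces around one axis of $\Ccal$ (straight rolls, and flips, stay within such a ring), and the two rings are joined only through the single square $s_4$ via the two turns at $s_3$ and $s_5$. To cover all six faces the two rings would have to lie around different axes, and each strip would then have to reach the opposite pair of poles omitted by the other ring; since covering an opposite pair with three collinear squares forces both of that strip's folds to be rolls, no flip is available inside either strip, and the single permitted flip at $s_3$ or $s_4$ turns out to be too weak to repair the coverage. The main obstacle is thus entirely one of careful bookkeeping: propagating the cube's orientation correctly through the two turns, and above all correctly reflecting the remaining travel directions whenever the one allowed flip occurs. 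Once all seven cases are checked to miss a face, no surjective consistent mapping exists, and therefore $P_W$ does not fold onto $\Ccal$.
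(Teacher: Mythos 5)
Your proof is correct, and it reaches the paper's conclusion by a genuinely different route; in fact both arguments establish the stronger statement that $P_W$ admits no \emph{surjective consistent mapping}, which suffices because every folding onto $\Ccal$ induces one. The paper pivots on the central square: if either of its two incident edges is folded by $180\degree$, the polyomino collapses onto an L-shaped hexomino, which is not among the eleven cube nets (Figure~\ref{fig:cubenets}) and hence covers at most five faces; if both edges are folded by $90\degree$, one checks that neither three-square arm can ever reach the face opposite the one covered by the central square. You instead prove a global bound — $t$ flips leave at most $7-t$ distinct image faces, so surjectivity permits at most one $180\degree$ fold — and then dispose of the seven remaining configurations by rolling computations. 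Your displayed computation is right (the all-roll sequence $\one,\three,\six,\two,\four,\one,\three$ does miss $\five$), your rule for propagating a flip (swap east/west travel after a flip across a vertical edge, north/south after a horizontal one) is exactly the correct bookkeeping, and the six asserted single-flip cases do check out: with your labelling they miss $\five,\five,\two,\four,\five,\five$ for a flip at the first through sixth edge, respectively (note that the flips at the third and fourth edges are precisely the paper's L-hexomino case). What the paper's route buys is brevity and structure, with essentially no orientation bookkeeping, though its final unreachability claim is itself left as a short verification; what yours buys is a reusable counting lemma (an $n$-square tree-shaped polyomino tolerates at most $n-6$ flips in any surjective consistent mapping) and a fully mechanical check, at the price of six undisplayed computations. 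One small polish point: fixing the initial placement ``once and for all'' requires the full automorphism group of the cube (order 48), not rotations alone; alternatively, observe that a mirrored initial placement just mirrors every image sequence, so surjectivity is unaffected either way.
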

\begin{proof}
Consider the square in the center of the polyomino $P_W$. If one of the two edges connecting it to the rest of $P_W$ is folded by $180\degree$, then the resulting shape has the form of the letter `L'. It consists of 6 squares and is not a cube net, so it cannot be folded onto $\Ccal$. Thus we may assume that both boundary edges of the central square are folded by $90\degree$. But then no folding of $P_W$ onto $\Ccal$ can cover the face directly opposite of the face covered by the central square. 
\end{proof}

\begin{lem}\label{lem:PW-with-leaf}
    Let $P$ be the polyomino $P_W$ with one additional leaf-square added. Then $P$ folds onto $\Ccal$.
\end{lem}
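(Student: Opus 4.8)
The plan is to handle every attachment position separately, cutting their number down with the symmetry of $P_W$ and then, in each case, exhibiting a folding of $P=P_W+t$, where $t$ is the new leaf square. Recall that $P_W$ is the path $s_1s_2\cdots s_7$ and that it is invariant under the reflection across the anti-diagonal of its $4\times4$ bounding box; this reflection fixes the central square $s_4$ and interchanges the two ends, so up to symmetry it suffices to glue $t$ to a free edge of $s_1,s_2,s_3$ or $s_4$. Since $P$ is again tree-shaped, by the observation that a tree-shaped polyomino containing a foldable minor is itself foldable it is enough, for each position, to find a foldable minor of $P$, and in most cases this minor is one of the eleven cube nets of Figure~\ref{fig:cubenets}.

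First I would clear the positions in which the new square turns the staircase. Attaching $t$ beside $s_2$ or $s_3$ (and, by symmetry, beside $s_5$ or $s_6$) completes a six-square zig-zag whose turn pattern is that of the $2$-$2$-$2$ staircase net~(j), or a straight–turn–turn–straight hexomino which is the $3$-$3$ net~(k); after the obvious $180\degree$ leaf-folds these occur as minors, so $P$ folds onto $\Ccal$. For the positions where $t$ sits directly above or below an extreme square, no six-square sub-polyomino is a net, and there I would instead write down an explicit surjective consistent mapping $\varphi\colon P\to\Ccal$ and simply verify that all six faces of $\Ccal$ are covered; the extra square suffices to defeat the coverage obstruction used in the previous lemma.

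The delicate part is the one remaining type of position, where $t$ merely prolongs the staircase at an end so that $P$ stays a \emph{monotone} staircase path. Here neither a net minor nor the naive rolling of the path covers all six faces, so the obstruction of the previous lemma very nearly survives, and this is exactly where I expect the real work to lie. I would attack it by designing a folding that inserts a single $180\degree$ fold to redirect the path and free the additional square for the one face that monotone rolling misses, checking carefully that the resulting consistent mapping is surjective; establishing (or carefully delimiting) foldability in this extremal configuration is the main obstacle of the whole argument.
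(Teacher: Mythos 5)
Your symmetry reduction is sound, and two of your net-minor claims are correct: gluing $t$ below $s_2$ does produce the staircase net \subref{cubenet-j}, and gluing it below $s_3$ produces the $3$-$3$ net \subref{cubenet-k}. But your first case fails at the two \emph{inner-corner} positions, which none of your three cases then handles. Writing $s_1=(0,0),\dots,s_7=(3,3)$, these are $t=(1,1)$ (above $s_2$, equivalently on the left edge of $s_4$) and $t=(3,0)$ (right of $s_3$, equivalently below $s_5$). There $t$ completes a $2\times 2$ block with three squares of $P_W$ (namely $\{s_2,s_3,s_4,t\}$, resp.\ $\{s_3,s_4,s_5,t\}$), and consequently \emph{no} six-square sub-polyomino of $P$ is a cube net: covering six faces with six squares forces every fold to be $90\degree$, and rolling a die along the dual tree (which is what an all-$90\degree$ folding amounts to) then sends two squares of the block to the same face of $\Ccal$ — for instance $t$ and $s_4$ clash when $t$ is glued to $s_2$, and $t$ and $s_5$ clash when $t$ is glued to $s_3$. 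The turn pattern at these positions is not monotone (down–right–up), so it is not the staircase, and leaf-folds of $s_1$ or $s_7$ cannot destroy the block, so ``the obvious $180\degree$ leaf-folds'' never yield a net minor. The paper settles exactly these positions by $180\degree$ folds along \emph{internal} edges (the edge $s_4s_5$, and for the position $(1,1)$ also the edge $s_2s_3$; these are the coloured edges in Figure~\ref{fig:PW-leaf-fold}), collapsing $P$ onto a $4\times 1$ strip with single squares attached on both sides, or with a $3\times 1$ strip attached to its side; that multi-layered shape then wraps around the cube using only $90\degree$ folds.

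The second gap is your fallback criterion: writing down a surjective consistent mapping $\varphi\colon P\to\Ccal$ and ``verifying that all six faces are covered'' does not prove foldability. By Example~\ref{exa:self-intersection-fan} of this paper, surjective consistent mappings that cannot be realised by any folded state exist even for tree-shaped polyominoes, so in your second and third cases you must exhibit an actual folding (an explicit sequence of $180\degree$ and $90\degree$ folds, or a net minor), not just a face-covering labelling. Your triage is also misaligned in both directions: for $t$ below $s_1$ the six squares $t,s_1,\dots,s_5$ literally form net \subref{cubenet-g} of Figure~\ref{fig:cubenets}, so a net minor does exist there; and the case you defer as the main obstacle ($t$ prolonging the path at an end) is no harder than the others — your instinct of a single $180\degree$ fold is precisely what works: folding along the edge $s_4s_5$ turns the $8$-square path into an L with two arms of length $4$, which rolls around $\Ccal$ with $90\degree$ folds, one face being covered twice. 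The uniform repair is the paper's proof: after the symmetry reduction, perform the $180\degree$ folds of Figure~\ref{fig:PW-leaf-fold} in every case and check that the resulting strip-with-attachments covers all six faces; no position requires the open-ended analysis you postpone to the end.
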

\begin{proof}
    By symmetry of $P_W$, we may assume without loss of generality that the additional leaf shares a vertical edge with $P_W$. All possible locations for this additional leaf are shown as red squares in Figure~\ref{fig:PW-leaf-fold}, where some sub-figures contain multiple locations as they are treated in the same way. It is not hard to check that each of the polyominoes is foldable. After applying $180\degree$ folds along the coloured edges of each polyomino, we see that each of the resulting shapes contains a $4 \times 1$ sub-polyomino, either with leaves attached on both sides or with a $3 \times 1$ sub-polyomino attached on one of its sides. In both cases folding everything by $90\degree$ yields the cube $\Ccal$.

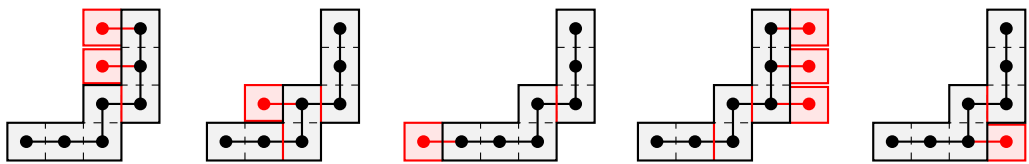
\begin{figure}[ht]
    \tikzstyle{every node}=[circle, draw, fill,
                        inner sep=0pt, minimum width=4pt]
\centering
\begin{subfigure}{.17\textwidth}
    \centering
    \begin{tikzpicture}[xscale=0.5,yscale=0.5]
        \fill[gray!10] (0,0)--(3,0)--(3,1)--(4,1)--(4,4)--(3,4)--(3,2)--(2,2)--(2,1)--(0,1);
        \fill[red!10] (3,4) -- (2,4) -- (2,3.05) -- (3,3.05);
        \fill[red!10] (3,2.95) -- (2,2.95) -- (2,2.05) -- (3,2.05);
        \draw[dashed] (1,0)--(1,1) (2,0)--(2,1)--(3,1)--(3,2)--(4,2) (3,3)--(4,3);
        \foreach \i in {2.5,3.5}{
            \draw[thick,red] (3.5,\i) -- (2.5,\i) node {};}
        \draw[thick] (0.5,0.5) node {} -- (1.5,0.5) node {} -- (2.5,0.5) node {} --  (2.5,1.5) node {} --  (3.5,1.5) node {} --  (3.5,2.5) node {} --  (3.5,3.5) node {};
        \draw[red,thick] (3,4) -- (2,4) -- (2,3.05) -- (3,3.05);
        \draw[red,thick] (3,2.95) -- (2,2.95) -- (2,2.05) -- (3,2.05);
        \draw[red,thick] (3,1) -- (3,2);
        \draw[thick] (0,0)--(3,0)--(3,1)--(4,1)--(4,4)--(3,4)--(3,2)--(2,2)--(2,1)--(0,1)--cycle;
    \end{tikzpicture}
\end{subfigure}
\begin{subfigure}{.17\textwidth}
    \centering
    \begin{tikzpicture}[xscale=0.5,yscale=0.5]
        \fill[gray!10] (0,0)--(3,0)--(3,1)--(4,1)--(4,4)--(3,4)--(3,2)--(2,2)--(2,1)--(0,1);
        \fill[red!10] (2,2) -- (1,2) -- (1,1.05) -- (2,1.05);
        \draw[dashed] (1,0)--(1,1) (2,0)--(2,1)--(3,1)--(3,2)--(4,2) (3,3)--(4,3);
        \draw[thick,red] (2.5,1.5) -- (1.5,1.5) node {};
        \draw[thick] (0.5,0.5) node {} -- (1.5,0.5) node {} -- (2.5,0.5) node {} --  (2.5,1.5) node {} --  (3.5,1.5) node {} --  (3.5,2.5) node {} --  (3.5,3.5) node {};
        \draw[red,thick] (2,2) -- (1,2) -- (1,1.05) -- (2,1.05);
        \draw[red,thick] (3,1) -- (3,2) (2,0) -- (2,1);
        \draw[thick] (0,0)--(3,0)--(3,1)--(4,1)--(4,4)--(3,4)--(3,2)--(2,2)--(2,1)--(0,1)--cycle;
    \end{tikzpicture}
\end{subfigure}
\begin{subfigure}{.20\textwidth}
    \centering
    \begin{tikzpicture}[xscale=0.5,yscale=0.5]
        \fill[gray!10] (0,0)--(3,0)--(3,1)--(4,1)--(4,4)--(3,4)--(3,2)--(2,2)--(2,1)--(0,1);
        \fill[red!10] (0,1) -- (-1,1) -- (-1,0) -- (0,0);
        \draw[dashed] (1,0)--(1,1) (2,0)--(2,1)--(3,1)--(3,2)--(4,2) (3,3)--(4,3);
        \draw[thick,red] (0.5,0.5) -- (-0.5,0.5) node {};
        \draw[thick] (0.5,0.5) node {} -- (1.5,0.5) node {} -- (2.5,0.5) node {} --  (2.5,1.5) node {} --  (3.5,1.5) node {} --  (3.5,2.5) node {} --  (3.5,3.5) node {};
        \draw[red,thick] (0,1) -- (-1,1) -- (-1,0) -- (0,0);
        \draw[red,thick] (3,1) -- (3,2);
        \draw[thick] (0,0)--(3,0)--(3,1)--(4,1)--(4,4)--(3,4)--(3,2)--(2,2)--(2,1)--(0,1)--cycle;
    \end{tikzpicture}
\end{subfigure}
\begin{subfigure}{.20\textwidth}
    \centering
    \begin{tikzpicture}[xscale=0.5,yscale=0.5]
        \fill[gray!10] (0,0)--(3,0)--(3,1)--(4,1)--(4,4)--(3,4)--(3,2)--(2,2)--(2,1)--(0,1);
        \fill[red!10] (4,4) -- (5,4) -- (5,3.05) -- (4,3.05);
        \fill[red!10] (4,2.95) -- (5,2.95) -- (5,2.05) -- (4,2.05);
        \fill[red!10] (4,1.95) -- (5,1.95) -- (5,1)-- (4,1);
        \draw[dashed] (1,0)--(1,1) (2,0)--(2,1)--(3,1)--(3,2)--(4,2) (3,3)--(4,3);
        \foreach \i in {1.5,2.5,3.5}{
            \draw[thick,red] (3.5,\i) -- (4.5,\i) node {};}
        \draw[thick] (0.5,0.5) node {} -- (1.5,0.5) node {} -- (2.5,0.5) node {} --  (2.5,1.5) node {} --  (3.5,1.5) node {} --  (3.5,2.5) node {} --  (3.5,3.5) node {};
        \draw[red,thick] (4,4) -- (5,4) -- (5,3.05) -- (4,3.05);
        \draw[red,thick] (4,2.95) -- (5,2.95) -- (5,2.05) -- (4,2.05);
        \draw[red,thick] (4,1.95) -- (5,1.95) -- (5,1)-- (4,1);
        \draw[red,thick] (3,1) -- (3,2) (2,0) -- (2,1);
        \draw[thick] (0,0)--(3,0)--(3,1)--(4,1)--(4,4)--(3,4)--(3,2)--(2,2)--(2,1)--(0,1)--cycle;
    \end{tikzpicture}
\end{subfigure}
\begin{subfigure}{.17\textwidth}
    \centering
    \begin{tikzpicture}[xscale=0.5,yscale=0.5]
        \fill[gray!10] (0,0)--(3,0)--(3,1)--(4,1)--(4,4)--(3,4)--(3,2)--(2,2)--(2,1)--(0,1);
        \fill[red!10] (3,0.95) -- (4,0.95) -- (4,0) -- (3,0);
        \draw[dashed] (1,0)--(1,1) (2,0)--(2,1)--(3,1)--(3,2)--(4,2) (3,3)--(4,3);
        \draw[thick,red] (2.5,0.5) -- (3.5,0.5) node {};
        \draw[thick] (0.5,0.5) node {} -- (1.5,0.5) node {} -- (2.5,0.5) node {} --  (2.5,1.5) node {} --  (3.5,1.5) node {} --  (3.5,2.5) node {} --  (3.5,3.5) node {};
        \draw[red,thick] (3,0.95) -- (4,0.95) -- (4,0) -- (3,0);
        \draw[red,thick] (3,1) -- (3,2);
        \draw[thick] (0,0)--(3,0)--(3,1)--(4,1)--(4,4)--(3,4)--(3,2)--(2,2)--(2,1)--(0,1)--cycle;
    \end{tikzpicture}
\end{subfigure}
\caption{Possible positions of the additional square and possible $180\degree$ folds to reduce the polyomino to a foldable shape.}
\label{fig:PW-leaf-fold}
\end{figure}
\end{proof}

\begin{thm}
\label{thm:4times4}
    Let $P$ be a tree-shaped polyomino of bounding size $4 \times 4$ which is not the polyomino $P_W$ depicted in Figure~\ref{fig:4times4-unfoldable}. Then $P$ folds onto $\Ccal$.
\end{thm}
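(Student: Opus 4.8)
The plan is to combine the two preceding lemmas with two facts already available in this section: a tree-shaped polyomino that contains a foldable sub-polyomino is itself foldable, and a polyomino that contains a cube net as a minor folds onto $\Ccal$.

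\textbf{Step 1: a size bound.} First I would observe that every tree-shaped polyomino with bounding box $4\times 4$ has at least seven squares. Indeed, a single square has a $1\times1$ bounding box, and adjoining one square increases $(\mathrm{width}+\mathrm{height})$ by at most one, so reaching $4+4$ costs at least six extra squares. Moreover, a seven-square example must enlarge the bounding box by exactly one at every step; such minimal spanning shapes are easy to enumerate (monotone corner-to-corner staircases, together with a few ``cross''-type trees). This pins down $P_W$ as one of the extremal shapes and rules out all smaller ones.

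\textbf{Step 2: polyominoes strictly containing $P_W$.} If $P$ contains a copy of $P_W$ (under the eight symmetries of the square) as a sub-polyomino and $P\neq P_W$, then $P$ has strictly more squares, so by connectedness some square $s$ of $P$ is edge-adjacent to this copy. Since $P$ is tree-shaped, $s$ is attached to the copy along a single edge, so $P_W\cup\{s\}$ is a copy of ``$P_W$ with one additional leaf-square'', which folds onto $\Ccal$ by Lemma~\ref{lem:PW-with-leaf}. As this is a sub-polyomino of the tree-shaped $P$, it follows that $P$ folds.

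\textbf{Step 3: the remaining polyominoes.} It remains to treat $P\neq P_W$ that contain no symmetric copy of $P_W$; for these I would exhibit a cube net as a minor. Repeatedly deleting leaf-squares that are not needed to keep the bounding box at $4\times 4$ reduces $P$ to a minimal spanning tree $P_0\subseteq P$; since $P$ contains no copy of $P_W$, neither does $P_0$, so $P_0$ is one of the finitely many minimal spanning shapes other than $P_W$. I would then check each such $P_0$ directly: the staircases that do not reduce to $P_W$ expose the $2$-$2$-$2$ net or the $3$-$3$ net of Figure~\ref{fig:cubenets} after a few $180\degree$ folds, while the cross-type shapes contain a $1\times4$ strip with attachments on two opposite transverse sides, yielding a $1$-$4$-$1$ net. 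In every case $P_0$, and therefore $P$, has a cube-net minor and folds.

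\textbf{Main obstacle.} The crux is Step 3, where $P_W$ must be singled out: one has to verify that among all minimal spanning trees inside a $4\times 4$ box, $P_W$ (and its symmetric images) is the unique shape admitting no cube-net minor. Carrying this out by a clean structural case analysis --- rather than by the brute-force enumeration that produced Figures~\ref{fig:3times3-foldable} and~\ref{fig:3times4-foldable} --- is the delicate part, essentially because in the staircase regime there is no straight $1\times4$ strip and one must use the full freedom of $\pm90\degree$ and $\pm180\degree$ folds to locate a net.
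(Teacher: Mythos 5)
Your Steps 1 and 2 are sound, and your overall skeleton (delete leaf squares while the bounding box stays $4\times 4$, invoke Lemma~\ref{lem:PW-with-leaf} when $P_W$ plus a leaf appears, then settle the terminal shapes) is the same as the paper's. The genuine gap is in Step 3, where you identify the terminal shapes of your reduction with the minimum-cardinality spanning shapes from Step 1 (seven squares: monotone corner-to-corner staircases and crosses). Leaf-deletion-minimality is a strictly weaker property than having seven squares. Concretely, consider the path-shaped polyomino with cells $(1,2),(2,2),(2,1),(3,1),(4,1),(4,2),(4,3),(4,4)$ (column, row): it has eight squares, bounding box $4\times 4$, its only leaves are the endpoints $(1,2)$ and $(4,4)$, and deleting either one empties a column or a row of the bounding box. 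So your reduction terminates at this shape; yet it is not a monotone staircase (its move sequence contains both an up and a down step), not a cross, and it contains no copy of $P_W$ (every sub-polyomino of a path is a subpath, and both of its seven-cell subpaths are non-monotone, while every copy of $P_W$ is monotone). Hence the finite list you propose to check in Step 3 simply does not contain all shapes your argument must handle. This particular polyomino does fold --- folding the tail $(1,2),(2,2)$ by $180\degree$ onto row~1 leaves a $4\times 1$ strip with a $3\times 1$ strip attached, a cube-net minor --- but that is an argument your proposal never makes, and one must rule out that some such overlooked reduced shape is a second non-foldable exception.

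This missing piece is precisely where the paper's proof does its real work: the paper also restricts attention to reduced polyominoes, but instead of asserting they have seven squares, it proves structural claims valid for reduced polyominoes of any size (a non-foldable reduced $P$ contains no $4\times 1$ strip, must be path-shaped, and must contain a $3\times 1$ strip) and then runs a case analysis on that structure which pins down $P_W$ as the unique exception. Your own closing paragraph concedes that exactly this verification is "the delicate part" and leaves it uncarried-out; combined with the incorrect identification of the terminal shapes, Step 3 is a genuine gap rather than a routine check, so the proposal as written is not a proof.
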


\begin{proof}
Given a polyomino $P$ of bounding size $4 \times 4$, we start by applying leaf-folds in arbitrary order to reduce the polyomino as long as possible without reducing the bounding size. Clearly $P$ folds onto $\Ccal$ whenever the reduced shape folds onto $\Ccal$. Additionally, by Lemma~\ref{lem:PW-with-leaf} the polyomino $P$ folds onto $\Ccal$ if the reduced shape equals $P_W$ and we folded away at least one leaf during the process.

Thus it is enough to prove the statement of the theorem for \emph{reduced} polyominoes, that are polyominoes, where any leaf-fold reduces the bounding size.

Let $P$ be a reduced polyomino of bounding size $4 \times 4$ which does not fold onto $\Ccal$. We show that $P$ must be equal to $P_W$. A leaf square of $P$ is said to \emph{point towards} the top/left/bottom/right boundary, if it lies in the topmost row/leftmost column/bottommost row/rightmost column of the $4 \times 4$ bounding box and its neighbour does not lie in this row or column. The following two observations follow easily from the reducedness of $P$.
\begin{enumerate}[label=Observation \arabic*. , ref=\arabic*, wide=0pt]
\item Every leaf square of $P$ points towards a boundary. \label{itm:boundaryleaf}
\item If a leaf square points towards a boundary, then the respective row or column belonging to this boundary does not contain any other squares of $P$. \label{itm:boundaryleaf2}
\end{enumerate}

We organize our proof as a sequence of claims.

\begin{clm}
The polyomino $P$ does not contain a $4 \times 1$ strip as a sub-polyomino.
\end{clm}

Indeed, if it contains a $4 \times 1$ strip not lying in an extremal row or column of the bounding box, this $4 \times 1$ strip has two neighbours on different sides and thus contains a cube net of type 1-4-1. If it contains a $4 \times 1$ strip lying in an extremal row, say the bottom row, then there must be a sequence of squares connected to it and reaching the top row. We can now use $180\degree$ folds along vertical edges to reduce this sequence to a $3 \times 1$ strip attached at the $4 \times 1$ strip, which can be folded onto $\Ccal$ with only $90\degree$ folds.

\begin{clm}
    The polyomino $P$ must be path-shaped.
\end{clm}

Assume for a contradiction that $P$ is not path-shaped. Then $P$ contains a square having at least 3 neighbouring squares and thus has a sub-polyomino with a T-shape. This square will henceforth be called the \emph{central square}. Assume without loss of generality that the central square has a neighbour to the left, at the bottom and to the right and that the left square lies in the leftmost column of the bounding box of $P$, see Figure~\ref{fig:4times4-a}, otherwise rotate or reflect $P$. Removing the central square from $P$ splits $P$ into 3 connected components, which will be called left, bottom and right arm of $P$. Clearly each arm has to contain at least one leaf square of $P$, which must point towards one of the four boundaries. 

Assume first that the right arm contains a square of the rightmost column of the bounding box. Let $s$ denote the square of the right arm lying in the rightmost column which is closest to the central square with respect to the graph metric in the dual of $P$. If $s$ lies above the row containing the central square, then it contains a 2-3-1 cube net as a minor. Otherwise $s$ lies below the row containing the central square. Note that in this case, the bottom arm has to contain a leaf square pointing toward the bottom boundary, thus we can apply $180\degree$ folds along horizontal edges of the right arm to reduce $P$ to a polyomino containing a $4 \times 1$ strip without reducing the bounding size, which is foldable. We conclude that the right arm cannot visit the rightmost column and thus contains only a single leaf square pointing towards the top boundary. In particular the square to the right of the central square is connected to its upper neighbour.

We are left with the case where the bottom arm visits the rightmost boundary. Then the left arm cannot visit the bottom row, otherwise we could apply $180\degree$ folds along horizontal edges of the bottom arm to reduce $P$ to a polyomino containing a cube net of type 1-4-1, which makes $P$ foldable. Thus the bottom arm also has to visit the bottom row. If the central square lies in the second row from top, we apply $180\degree$ folds along vertical edges of the bottom arm to see that the cube net of type 2-3-1 shown in Figure~\ref{fig:4times4-d} is a minor of $P$. Otherwise the central square lies in the third row from the top and we apply $180\degree$ folds along vertical edges of the right arm to see that the cube net of type 1-4-1 shown in Figure~\ref{fig:4times4-e} is a minor of $P$. This proves the claim that $P$ is path-shaped.

\begin{clm}
    The polyomino $P$ contains a $3 \times 1$ strip as a sub-polyomino.
\end{clm}

Indeed, if $P$ does not contain a $3 \times 1$ strip, it is not hard to check that $P$ either contains a cube net of type 2-2-2 or the shape in Figure~\ref{fig:4times4-g}, both of which are foldable.

We finally end up in the case where $P$ is a path and contains a $3 \times 1$ strip. Assume again without loss of generality that the $3 \times 1$ strip lies horizontally in the bounding box of $P$ and that the leftmost square lies in the leftmost column. Furthermore, as the rightmost square cannot be a leaf square by Observation~\ref{itm:boundaryleaf} we may assume that it is connected to its upper neighbour. Again the square in the center of the $3 \times 1$ strip will be called central square, and the two components of $P$ after removing this square are called left and right arm of $P$. Clearly the left arm cannot visit the rightmost column, otherwise $P$ contains the foldable polyomino in Figure~\ref{fig:4times4-h} or its reflection as a minor. Thus the right arm has to visit the right column. 

Consider the square to the left of the central square. If it is not a leaf square it is either connected to its lower or upper neighbour. In the first case we can apply $180\degree$ folds along horizontal edges of the right arm to see that $P$ contains a cube net of type 2-3-1 as a minor. In the other case the unique leaf square of $P$ contained in the left arm must points towards the top boundary. By Observation~\ref{itm:boundaryleaf2} the central square cannot lie in the second row from the top, thus  $P$ can be reduced to the polyomino in Figure~\ref{fig:4times4-j} by only $180\degree$ folds, which with the additional fold shown in the figure reduces to a cube net of type 2-3-1.

We are left with the case where the square to the left of the central square is a leaf square. If the right arm touches both the top and the bottom boundary, we can use $180\degree$ folds along vertical edges to reduce it to a polyomino containing a vertical $4 \times 1$ strip, which is foldable. Thus the central square lies in the bottom row of the boundary box of $P$. By Observation~\ref{itm:boundaryleaf} the square at the top right from the central square cannot be a leaf square. If it is not connected to its right neighbour, $P$ contains one of the polyominoes in Figure~\ref{fig:4times4-k} and Figure~\ref{fig:4times4-l} as a minor, which both contain a cube net and thus are foldable. Otherwise $P$ either contains the polyomino in Figure~\ref{fig:4times4-m} as a minor, which is foldable because one additional $180\degree$ fold yields the polyomino in Figure~\ref{fig:4times4-l}, or it is equal to the polyomino $P_W$.

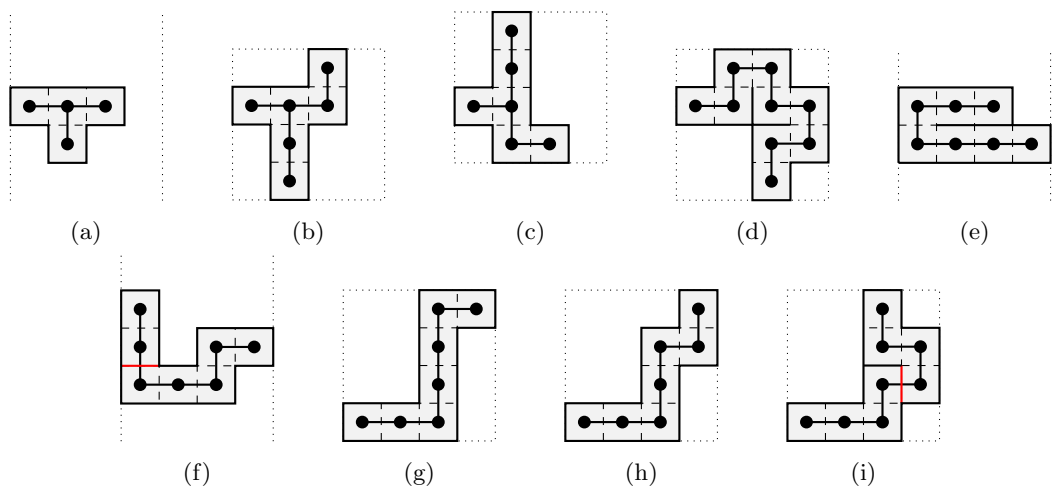
\begin{figure}[ht]
    \tikzstyle{every node}=[circle, draw, fill,
                        inner sep=0pt, minimum width=4pt]
\centering
\begin{subfigure}[b]{.19\textwidth}
    \centering
    \begin{tikzpicture}[xscale=0.5,yscale=0.5]
        \draw[dotted] (0,-2) -- (0,3);
        \draw[dotted] (4,-2) -- (4,3);
        \fill[gray!10] (0,0)--(1,0)--(1,-1)--(2,-1)--(2,0)--(3,0)--(3,1)--(0,1);
        \draw[dashed] (1,1)--(1,0)--(2,0)--(2,1);
        \draw[thick] (0.5,0.5) node {} -- (1.5,0.5) node {} -- (1.5,-0.5) node {} (1.5,0.5) --  (2.5,0.5) node {};
        \draw[thick] (0,0)--(1,0)--(1,-1)--(2,-1)--(2,0)--(3,0)--(3,1)--(0,1)--cycle;
    \end{tikzpicture}
    \caption{}
    \label{fig:4times4-a}
\end{subfigure}
\begin{subfigure}[b]{.19\textwidth}
    \centering
    \begin{tikzpicture}[xscale=0.5,yscale=0.5]
        \draw[dotted] (0,-2)--(0,2)--(4,2)--(4,-2)--cycle;
        \fill[gray!10] (0,0)--(1,0)--(1,-2)--(2,-2)--(2,0)--(3,0)--(3,2)--(2,2)--(2,1)--(0,1);
        \draw[dashed] (1,1)--(1,0)--(2,0)--(2,1)--(3,1) (1,-1)--(2,-1);
        \draw[thick] (0.5,0.5) node {} -- (1.5,0.5) node {} -- (1.5,-0.5) node {} -- (1.5,-1.5) node {} (1.5,0.5) --  (2.5,0.5) node {} --  (2.5,1.5) node {};
        \draw[thick] (0,0)--(1,0)--(1,-2)--(2,-2)--(2,0)--(3,0)--(3,2)--(2,2)--(2,1)--(0,1)--cycle;
    \end{tikzpicture}
    \caption{}
    \label{fig:4times4-d}
\end{subfigure}
\begin{subfigure}[b]{.19\textwidth}
    \centering
    \begin{tikzpicture}[xscale=0.5,yscale=0.5]
        \draw[white] (0,-2);
        \draw[dotted] (0,-1)--(0,3)--(4,3)--(4,-1)--cycle;
        \fill[gray!10] (0,0)--(1,0)--(1,-1)--(3,-1)--(3,0)--(2,0)--(2,3)--(1,3)--(1,1)--(0,1);
        \draw[dashed] (2,-1)--(2,0)--(1,0)--(1,1)--(2,1) (1,2)--(2,2);
        \draw[thick] (0.5,0.5) node {} -- (1.5,0.5) node {} -- (1.5,-0.5) node {} -- (2.5,-0.5) node {} (1.5,0.5) --  (1.5,1.5) node {} --  (1.5,2.5) node {};
        \draw[thick] (0,0)--(1,0)--(1,-1)--(3,-1)--(3,0)--(2,0)--(2,3)--(1,3)--(1,1)--(0,1)--cycle;
    \end{tikzpicture}
    \caption{}
    \label{fig:4times4-e}
\end{subfigure}
\begin{subfigure}[b]{.19\textwidth}
    \centering
    \begin{tikzpicture}[xscale=0.5,yscale=0.5]
        \draw[dotted] (0,-1)--(0,3)--(4,3)--(4,-1)--cycle;
        \fill[gray!10] (0,1)--(2,1)--(2,-1)--(3,-1)--(3,0)--(4,0)--(4,2)--(3,2)--(3,3)--(1,3)--(1,2)--(0,2);
        \draw[dashed] (1,1)--(1,2)--(3,2)--(3,0)--(2,0) (2,2)--(2,3) (3,1)--(4,1);
        \draw[thick] (0.5,1.5) node {} -- (1.5,1.5) node {} -- (1.5,2.5) node {} -- (2.5,2.5) node {} -- (2.5,1.5) node {} -- (3.5,1.5) node {} -- (3.5,0.5) node {} -- (2.5,0.5) node {} --  (2.5,-0.5) node {};
        \draw[thick] (0,1)--(2,1)--(2,-1)--(3,-1)--(3,0)--(4,0)--(4,2)--(3,2)--(3,3)--(1,3)--(1,2)--(0,2)--cycle;
        \draw[thick] (2,2)--(2,1)--(3,1);
    \end{tikzpicture}
    \caption{}
    \label{fig:4times4-g}
\end{subfigure}
\begin{subfigure}[b]{.19\textwidth}
    \centering
    \begin{tikzpicture}[xscale=0.5,yscale=0.5]
        \draw[dotted] (0,-2) -- (0,2);
        \draw[dotted] (4,-2) -- (4,2);
        \fill[gray!10] (0,-1)--(4,-1)--(4,0)--(3,0)--(3,1)--(0,1);
        \draw[dashed] (0,0)--(1,0) (1,-1)--(1,1) (2,-1)--(2,1) (3,-1)--(3,0);
        \draw[thick] (2.5,0.5) node {} -- (1.5,0.5) node {} -- (0.5,0.5) node {} -- (0.5,-0.5) node {} -- (1.5,-0.5) node {} -- (2.5,-0.5) node {} -- (3.5,-0.5) node {};
        \draw[thick] (0,-1)--(4,-1)--(4,0)--(1,0) (3,0)--(3,1)--(0,1)--(0,-1);
    \end{tikzpicture}
    \caption{}
    \label{fig:4times4-h}
\end{subfigure}
\begin{subfigure}[b]{.19\textwidth}
    \centering
    \begin{tikzpicture}[xscale=0.5,yscale=0.5]
        \draw[dotted] (0,-1) -- (0,4);
        \draw[dotted] (4,-1) -- (4,4);
        \fill[gray!10] (1,3)--(0,3)--(0,0)--(3,0)--(3,1)--(4,1)--(4,2)--(2,2)--(2,1)--(1,1);
        \draw[dashed] (0,2)--(1,2) (0,1)--(1,1)--(1,0) (2,0)--(2,1)--(3,1)--(3,2);
        \draw[thick] (0.5,2.5) node {} -- (0.5,1.5) node {} -- (0.5,0.5) node {} -- (1.5,0.5) node {} -- (2.5,0.5) node {} -- (2.5,1.5) node {} -- (3.5,1.5) node {};
        \draw[red,thick] (0,1)--(1,1);
        \draw[thick] (1,3)--(0,3)--(0,0)--(3,0)--(3,1)--(4,1)--(4,2)--(2,2)--(2,1)--(1,1)--cycle;
    \end{tikzpicture}
    \caption{}
    \label{fig:4times4-j}
\end{subfigure}
\begin{subfigure}[b]{.19\textwidth}
    \centering
    \begin{tikzpicture}[xscale=0.5,yscale=0.5]
        \draw[white] (0,3);
        \draw[dotted] (0,-2)--(0,2)--(4,2)--(4,-2)--cycle;
        \fill[gray!10] (0,-2)--(3,-2)--(3,1)--(4,1)--(4,2)--(2,2)--(2,-1)--(0,-1);
        \draw[dashed] (1,-2)--(1,-1) (2,-2)--(2,-1)--(3,-1) (2,0)--(3,0) (2,1)--(3,1)--(3,2);
        \draw[thick] (0.5,-1.5) node {} -- (1.5,-1.5) node {} -- (2.5,-1.5) node {} -- (2.5,-0.5) node {} -- (2.5,0.5) node {} -- (2.5,1.5) node {} -- (3.5,1.5) node {};
        \draw[thick] (0,-2)--(3,-2)--(3,1)--(4,1)--(4,2)--(2,2)--(2,-1)--(0,-1)--cycle;
    \end{tikzpicture}
    \caption{}
    \label{fig:4times4-k}
\end{subfigure}
\begin{subfigure}[b]{.19\textwidth}
    \centering
    \begin{tikzpicture}[xscale=0.5,yscale=0.5]
        \draw[dotted] (0,-2)--(0,2)--(4,2)--(4,-2)--cycle;
        \fill[gray!10] (0,-2)--(3,-2)--(3,0)--(4,0)--(4,2)--(3,2)--(3,1)--(2,1)--(2,-1)--(0,-1);
        \draw[dashed] (1,-2)--(1,-1) (2,-2)--(2,-1)--(3,-1) (2,0)--(3,0) (3,0)--(3,1)--(4,1);
        \draw[thick] (0.5,-1.5) node {} -- (1.5,-1.5) node {} -- (2.5,-1.5) node {} -- (2.5,-0.5) node {} -- (2.5,0.5) node {} -- (3.5,0.5) node {} -- (3.5,1.5) node {};
        \draw[thick] (0,-2)--(3,-2)--(3,0)--(4,0)--(4,2)--(3,2)--(3,1)--(2,1)--(2,-1)--(0,-1)--cycle;
    \end{tikzpicture}
    \caption{}
    \label{fig:4times4-l}
\end{subfigure}
\begin{subfigure}[b]{.19\textwidth}
    \centering
    \begin{tikzpicture}[xscale=0.5,yscale=0.5]
        \draw[dotted] (0,-2)--(0,2)--(4,2)--(4,-2)--cycle;
        \fill[gray!10] (0,-2)--(3,-2)--(3,-1)--(4,-1)--(4,1)--(3,1)--(3,2)--(2,2)--(2,-1)--(0,-1);
        \draw[dashed] (1,-2)--(1,-1) (2,-2)--(2,-1)--(3,-1)--(3,1)--(2,1) (3,0)--(4,0);
        \draw[thick] (0.5,-1.5) node {} -- (1.5,-1.5) node {} -- (2.5,-1.5) node {} -- (2.5,-0.5) node {} -- (3.5,-0.5) node {} -- (3.5,0.5) node {} -- (2.5,0.5) node {} -- (2.5,1.5) node {};
        \draw[red,thick] (3,-1)--(3,0);
        \draw[thick] (0,-2)--(3,-2)--(3,-1)--(4,-1)--(4,1)--(3,1)--(3,2)--(2,2)--(2,-1)--(0,-1)--cycle;
        \draw[thick] (2,0)--(3,0);
    \end{tikzpicture}
    \caption{}
    \label{fig:4times4-m}
\end{subfigure}
\caption{Polyominoes encountered in the proof of Theorem~\ref{thm:4times4}.}
\label{fig:4times4-cases}
\end{figure}
\end{proof}

\begin{thm}
\label{thm:4times5}
    Let $P$ be a tree-shaped polyomino of bounding size at least $4 \times 5$. Then $P$ folds onto $\Ccal$.
\end{thm}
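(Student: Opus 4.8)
The plan is to prove the statement by induction on the bounding size, reducing every polyomino of bounding box at least $4\times 5$ to a strictly smaller one of bounding box at least $4\times 4$, until we reach the case $4\times 4$ settled by Theorem~\ref{thm:4times4}. Throughout I will use only folds that are guaranteed to keep the polyomino tree-shaped, since a careless $180\degree$ fold of a whole column onto its neighbour can merge two squares that are far apart in the dual tree and thereby create a hole; such configurations would force us into the (much harder) theory of polyominoes with holes, which I want to avoid. As a first step I would, as in the proof of Theorem~\ref{thm:4times4}, apply leaf-folds in arbitrary order until the polyomino is \emph{reduced}, i.e.\ until no leaf-fold preserves the bounding size. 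Leaf-folds are valid $180\degree$ folds and turn a tree-shaped polyomino into a tree-shaped polyomino with one fewer square, so foldability of the reduced polyomino implies foldability of $P$; moreover the two observations from the proof of Theorem~\ref{thm:4times4} still apply: every leaf points towards a boundary, and the extremal row or column a leaf points to contains no other square.

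Granting reducedness, I would peel off an extremal line to drop the bounding size. Consider the rightmost column (assuming the longer side $n\ge 5$ is horizontal; otherwise rotate). Either this column consists of a single leaf square pointing right, in which case a single leaf-fold removes it and yields a tree-shaped polyomino of bounding box $m\times(n-1)$, still of size at least $4\times 4$; or the column contains several squares. In the latter case reducedness forces these squares to form a vertical run attached to column $n-1$ at a single square (a second attachment would create a cycle, contradicting tree-shapedness), whose free ends are leaves and hence must lie in the extremal rows. I would exploit the fact that a horizontal $180\degree$ fold reflects across a horizontal line and hence preserves the column index of every square: this lets me flatten such a vertical run, together with vertical excursions of the spanning path, without changing the horizontal extent, reducing to a configuration in which the rightmost column is again a single leaf that can be folded away. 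Iterating drops $n$ to $4$; symmetrically dropping the other dimension if it still exceeds $4$ brings us to the base case.

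For the base case $4\times 4$ I would invoke Theorem~\ref{thm:4times4}: the reduced polyomino folds unless it equals $P_W$. The subtlety is that a reduction step may land exactly on $P_W$; but whenever this happens the original polyomino contained at least one extra square attached to $P_W$ (the square that was folded away), so it contains $P_W$ with an added leaf as a sub-polyomino, which folds by Lemma~\ref{lem:PW-with-leaf}. Since $P$ is tree-shaped, containing a foldable sub-polyomino makes $P$ foldable, closing the induction. When peeling is obstructed I would instead directly produce a cube net as a minor: the width $n\ge 5$ gives a path spanning five columns and the height $m\ge4$ gives material both above and below a horizontal strip of length at least four, and after flattening (horizontal folds to straighten the strip, vertical folds to bring the upper and lower squares into place) this yields a net of type $1$-$4$-$1$, or of type $2$-$3$-$1$ or $3$-$3$ in the remaining configurations, each of which folds.

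The main obstacle I expect is the bookkeeping of the reduced configurations: one must check that in every reduced tree-shaped polyomino of bounding box at least $4\times 5$ at least one of the two options above is always available, and that the folds used to realise them never inadvertently create a hole. This is exactly why restricting to leaf-folds and to horizontal and vertical $180\degree$ folds (which preserve, respectively, tree-shapedness and one of the two extents) is essential: it keeps us within the class of tree-shaped polyominoes and guarantees that the bounding size behaves predictably under the reduction, so that the induction on bounding size is well-founded. The analysis mirrors, but is more involved than, the case distinctions in Theorems~\ref{thm:2_times_n}, \ref{thm:3_times_n}, and~\ref{thm:4times4}, since now both sides of the bounding box can be large.
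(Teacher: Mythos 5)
Your overall scaffolding---leaf-fold to a reduced polyomino, peel off extremal columns to induct down to bounding size $4 \times 4$, invoke Theorem~\ref{thm:4times4}, and catch reductions landing on $P_W$ via Lemma~\ref{lem:PW-with-leaf}---is the same as the paper's. But the core of the argument, namely what an extremal column of a reduced polyomino can look like and how to fold it away, is wrong on two counts. First, your structural claim that the squares of the rightmost column must form ``a vertical run attached to column $n-1$ at a single square (a second attachment would create a cycle, contradicting tree-shapedness)'' is false: a second attachment creates a cycle in the dual only if the two attachment squares remain connected after deleting the column, and in a path-shaped polyomino the spanning path can simply enter the column at one attachment and leave at the other. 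This C-shaped configuration (two squares in rows 2 and 3 of the extremal column, glued to each other and each to its horizontal neighbour) is exactly Observation 4 in the paper's proof and the first case of Figure~\ref{fig:5times4-left-column}; in it the column contains \emph{no} leaf, so your flattening step, which acts on ``free ends [that] are leaves'', has nothing to work with. You also miss the configuration in which \emph{both} leaves of the path lie in the extremal column as two disjoint L-shaped runs, so the column need not even be a single run.

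Second, even in the single-run case your flattening procedure destroys the induction invariant. The free end of the run is a leaf in the top row, and (under either notion of reducedness) it must be the \emph{only} square in that row, since otherwise folding it would not shrink the bounding box. Flattening the run by horizontal folds therefore folds the unique top-row square out of the top row, the height of the bounding box drops from 4 to 3, and the resulting polyomino is outside the scope of Theorem~\ref{thm:4times4}, so the induction cannot continue. The paper avoids precisely this by using a single \emph{vertical} $180\degree$ fold of the whole extremal column onto its neighbouring column (preceded in one case by one partial horizontal fold): the folded image of the leaf lands back in the extremal row, so the height stays 4 while the width drops by one, and one then checks in each of the five cases of Figure~\ref{fig:5times4-left-column} that the result is again reduced and path-shaped, so the induction hypothesis applies. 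Establishing that short list of cases rests on first proving (Observations 1--3) that a reduced $4\times k$ polyomino with $k>4$ is path-shaped with its two leaves being the unique squares of the top and bottom rows; you never establish this, and your fallback of ``directly producing a cube net as a minor'' when peeling is obstructed is not an argument: the existence of the non-foldable $P_W$ shows that such flattening claims can fail, and verifying them configuration by configuration is the actual content of the theorem.
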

\begin{proof}
For a given tree-shaped polyomino $P$ of bounding size at least $4 \times 5$ we consider the following 3-step approach.

Similar to the proof of the previous theorem, we start by applying leaf-folds in arbitrary order such that the resulting flat tree-shaped polyomino has bounding size at least $4 \times 4$. If no such leaf-fold is possible any more, we call the polyomino \emph{reduced}. Clearly any reduced polyomino $P'$ is still tree-shaped and has bounding size $4 \times k$ for some $k \geq 4$. If $P'$ is equal to the polyomino $P_W$ in Figure~\ref{fig:4times4-unfoldable}, we undo the last leaf-fold and see that the resulting polyomino and thus also the original polyomino $P$ is foldable by Lemma~\ref{lem:PW-with-leaf}. Thus it is enough to show that any reduced polyomino which is not equal to $P_W$ can be folded onto the cube. We use induction on $k \geq 4$.

For the induction basis assume first that the resulting tree-shaped polyomino $P'$ has bounding size $4 \times 4$. If this shape is not the polyomino $P_W$ depicted in Figure~\ref{fig:4times4-unfoldable} we are done by Theorem~\ref{thm:4times4} and can obtain a proper cube folding.

Assume now that the reduced tree-shaped polyomino $P'$ has bounding size $4 \times k$ for some $k>4$ (the case $k \times 4$ is symmetric, and bounding boxes where both dimensions are larger than 4 for sure have a leaf-fold.) We make a sequence of useful observations.
\begin{enumerate}[label=Observation \arabic*. , wide=0pt]
\item In the top (bottom) row of the bounding box there is only one leaf square and no other polyomino squares. Otherwise we could make a leaf-fold without decreasing the dimension of the bounding box below 4.
\item Except the two leaf squares on top and bottom, there is no other leaf square. Otherwise we could leaf-fold it without decreasing the dimension of the bounding box below 4.
\item Since there are only two leaf squares the polyomino $P'$ is path-shaped.
\item If there is no leaf square in the leftmost column, then the two squares in the left most column are in rows 2 and 3 and are connected in a C-shaped way together and to the right.
\item If there is a leaf square in the leftmost column, then this is connected in an L-shaped way to the right. Note that there are several possibilities. The L could be 2 or 3 rows high, having the leaf square on top or bottom, or there could even be both leaf squares in the leftmost columns with two L-shaped connections of height 2 each.
\end{enumerate}

In all cases we can again fold the leftmost column with a $180\degree$ fold into another path-shaped polyomino $P''$, reducing the bounding size to $4 \times (k-1)$. See Figure~\ref{fig:5times4-left-column} for the different possible foldings. Note that for the third polyomino in the figure, we first apply the $180\degree$ fold drawn in the figure to reduce it to the fourth polyomino in the figure. 

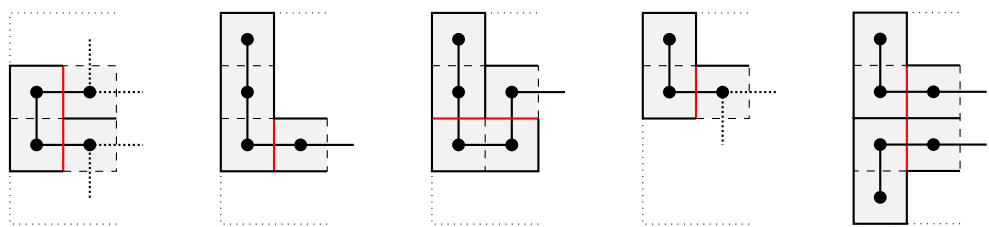
\begin{figure}[ht]
\tikzstyle{every node}=[circle, draw, fill,
    inner sep=0pt, minimum width=4pt]
\centering
\begin{subfigure}{.18\textwidth}
    \centering
    \begin{tikzpicture}[xscale=0.7,yscale=0.7]
        \draw[dotted] (2,0) -- (0,0) -- (0,4) -- (2,4);
        \fill[gray!10] (0,1)--(2,1)--(2,3)--(0,3);
        \draw[dashed] (0,2) -- (1,2);
        \draw[dashed] (1,1) -- (2,1) -- (2,3) -- (1,3);
        \draw[thick, densely dotted] (1.5,0.5) -- (1.5,1.5) -- (2.5,1.5);
        \draw[thick, densely dotted] (1.5,3.5) -- (1.5,2.5) -- (2.5,2.5);
        \draw[thick] (1.5,1.5) node {} -- (0.5,1.5) node {} -- (0.5,2.5) node {} -- (1.5,2.5) node {};
        \draw[red,thick] (1,1) -- (1,3);
        \draw[thick] (1,1) -- (0,1) -- (0,3) -- (1,3);
        \draw[thick] (1,2) -- (2,2);
    \end{tikzpicture}
\end{subfigure}
\begin{subfigure}{.18\textwidth}
    \centering
    \begin{tikzpicture}[xscale=0.7,yscale=0.7]
        \draw[dotted] (2,0) -- (0,0) -- (0,4) -- (2,4);
        \fill[gray!10] (0,1)--(2,1)--(2,2)--(1,2)--(1,4)--(0,4);
        \draw[dashed] (0,3) -- (1,3) (0,2) -- (1,2) (2,2) -- (2,1);
        \draw[thick] (0.5,3.5) node {} -- (0.5,2.5) node {} -- (0.5,1.5) node {} -- (1.5,1.5) node {} -- (2.5,1.5);
        \draw[red,thick] (1,1) -- (1,2);
        \draw[thick] (2,1) -- (0,1) -- (0,4) -- (1,4) -- (1,2) -- (2,2);
    \end{tikzpicture}
\end{subfigure}
\begin{subfigure}{.18\textwidth}
    \centering
    \begin{tikzpicture}[xscale=0.7,yscale=0.7]
        \draw[dotted] (2,0) -- (0,0) -- (0,4) -- (2,4);
        \fill[gray!10] (0,1)--(2,1)--(2,3)--(1,3)--(1,4)--(0,4);
        \draw[dashed] (0,3) -- (1,3) (1,2) -- (1,1) (2,2) -- (2,3);
        \draw[thick] (0.5,3.5) node {} -- (0.5,2.5) node {} -- (0.5,1.5) node {} -- (1.5,1.5) node {} -- (1.5,2.5) node {} -- (2.5,2.5);
        \draw[red,thick] (0,2) -- (2,2);
        \draw[thick] (2,2) -- (2,1) -- (0,1) -- (0,4) -- (1,4) -- (1,2) (1,3) -- (2,3);
    \end{tikzpicture}
\end{subfigure}
\begin{subfigure}{.18\textwidth}
    \centering
    \begin{tikzpicture}[xscale=0.7,yscale=0.7]
        \draw[dotted] (2,0) -- (0,0) -- (0,4) -- (2,4);
        \fill[gray!10] (0,2)--(2,2)--(2,3)--(1,3)--(1,4)--(0,4);
        \draw[dashed] (0,3) -- (1,3) (1,2) -- (2,2) -- (2,3);
        \draw[densely dotted,thick] (2.5,2.5) -- (1.5,2.5) -- (1.5,1.5);
        \draw[thick] (0.5,3.5) node {} -- (0.5,2.5) node {} -- (1.5,2.5) node {};
        \draw[red,thick] (1,2) -- (1,3);
        \draw[thick] (1,2) -- (0,2) -- (0,4) -- (1,4) -- (1,3) -- (2,3);
    \end{tikzpicture}
\end{subfigure}
\begin{subfigure}{.18\textwidth}
    \centering
    \begin{tikzpicture}[xscale=0.7,yscale=0.7]
        \draw[dotted] (2,0) -- (0,0) -- (0,4) -- (2,4);
        \fill[gray!10] (0,0)--(1,0)--(1,1)--(2,1)--(2,3)--(1,3)--(1,4)--(0,4);
        \draw[red,thick] (1,1) -- (1,3);
        \draw[thick] (2,1) -- (1,1) -- (1,0) -- (0,0) -- (0,2) --(2,2) (0,2) -- (0,4) -- (1,4) -- (1,3) -- (2,3);
        \draw[dashed] (0,3) -- (1,3) (2,3) -- (2,1) (1,1) --(0,1);
        \draw[thick] (0.5,3.5) node {} -- (0.5,2.5) node {} -- (1.5,2.5) node {} -- (2.5,2.5);
        \draw[thick] (0.5,0.5) node {} -- (0.5,1.5) node {} -- (1.5,1.5) node {} -- (2.5,1.5);
    \end{tikzpicture}
\end{subfigure}
\caption{Possible shapes of the two leftmost columns.}
\label{fig:5times4-left-column}
\end{figure}

It is not hard to check that in all cases $P''$ is reduced, thus we can apply the induction hypothesis and see that $P'$ and thus also $P$ can be folded onto the cube. 
\end{proof}

\section{Folding simply connected polyominoes}
\label{sec:simply_connected}
A polyomino is called simply connected if it does not contain any holes. Clearly this includes all tree-shaped polyominoes. However, it appears to be much harder to find a full characterisation for the set of all foldable simply connected polyominoes. Lemma~\ref{lem:extension} leads to a first naive approach. Clearly whenever a polyomino is simply connected but not tree-like, it has to contain at least one rectangular sub-polyomino of size at least $2 \times 2$. By the mentioned lemma each of these large rectangular sub-polyominoes has to be folded by $180\degree$ to either horizontal or vertical strips of width 1. The result then is in some sense tree-shaped. However, some squares might have multiple layers, which might allow more possible foldings than the respective normal tree-shaped polyomino with the same shape. The following example should make this clear.

\begin{exa}
    Consider the leftmost polyomino in Figure~\ref{fig:simply-connected-foldable}. It contains a rectangular sub-polyomino of size $4 \times 2$. By the previous discussion either all horizontal or all vertical edges must be folded by $180\degree$. It clearly does not make sense to fold the vertical edges as this would collapse the polyomino to a $1 \times 6$ strip, which obviously cannot be folded onto $\Ccal$. Horizontally folding along the red line yields the shape in the center of the figure. When considering this a normal single layered polyomino, it is not foldable. However we can use the two layers and just fold one of the attached arms along the red edge to obtain the shape at the right, which contains the standard cube net and thus is foldable.
    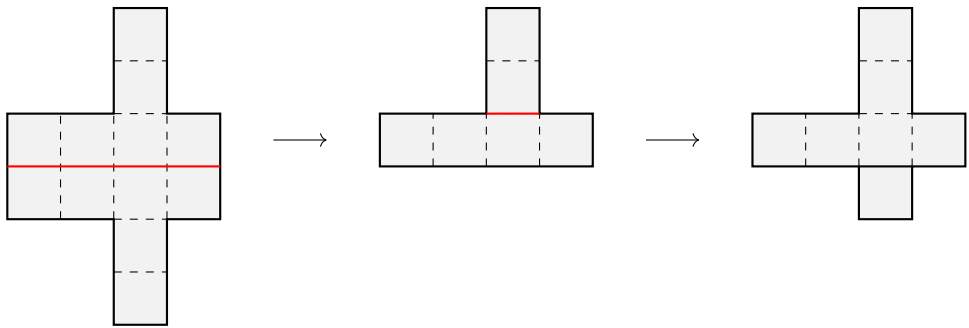
\begin{figure}[ht]
        \tikzstyle{every node}=[circle, draw, fill, inner sep=0pt, minimum width=4pt]
        \centering
        \begin{tikzpicture}[xscale=0.7,yscale=0.7]
            \begin{scope}[xshift=0,yshift=0]
                \draw[thick,fill=gray!10] (0,-1) -- (0,1) -- (2,1) -- (2,3) -- (3,3) -- (3,1) -- (4,1) -- (4,-1) -- (3,-1) -- (3,-3) -- (2,-3) -- (2,-1) -- cycle;
                \foreach \i in {-2,-1,1,2}{
                    \draw[dashed] (2,\i) -- (3,\i);
                }
                \foreach \i in {1,2,3}{
                    \draw[dashed] (\i,-1) -- (\i,1);
                }
                \draw[red,thick] (0,0) -- (4,0);
                \draw[->] (5,0.5) -- (6,0.5);
            \end{scope}
            \begin{scope}[xshift=7cm,yshift=0cm]
                \draw[thick,fill=gray!10] (0,0) -- (0,1) -- (2,1) -- (2,3) -- (3,3) -- (3,1) -- (4,1) -- (4,0) -- cycle;
                \foreach \i in {1,2}{
                    \draw[dashed] (2,\i) -- (3,\i);
                }
                \foreach \i in {1,2,3}{
                    \draw[dashed] (\i,0) -- (\i,1);
                }
                \draw[red,thick] (2,1) -- (3,1);
                \draw[->] (5,0.5) -- (6,0.5);
            \end{scope}
            \begin{scope}[xshift=14cm,yshift=-0cm]
                \draw[thick,fill=gray!10] (0,0) -- (0,1) -- (2,1) -- (2,3) -- (3,3) -- (3,1) -- (4,1) -- (4,0) -- (3,0) -- (3,-1) -- (2,-1) -- (2,0) -- cycle;
                \draw[thick] (2,0) -- (3,0);
                \foreach \i in {1,2}{
                    \draw[dashed] (2,\i) -- (3,\i);
                }
                \foreach \i in {1,2,3}{
                    \draw[dashed] (\i,0) -- (\i,1);
                }
            \end{scope}
        \end{tikzpicture}
        \caption{Multiple layers can make non-foldable shapes foldable.}
        \label{fig:simply-connected-foldable}
    \end{figure}
\end{exa}

We have seen in Section~\ref{sec:tree-shaped} that the bounding size of non-foldable tree-shaped polyominoes is bounded from above, in the sense that they either fit into a bounding box of size $3 \times n$ or of size $4 \times 4$. Clearly a similar upper bound for the bounding box cannot be given for simply connected polyominoes because a simply connected rectangular polyomino can never fold onto $\Ccal$. However, the upcoming result shows that in fact a non-foldable polyomino cannot differ a lot from a rectangular polyomino.

Let $P$ be a simply connected polyomino and let $B$ be its bounding box. For the upcoming arguments we orient both the boundary $\partial P$ of $P$ and the bounding box $B$ in clockwise direction. By definition $\partial P$ touches each of the four sides $B$. Let $v$ be a vertex of $\partial P$ which does not lie on $B$. We denote by $\pi_1(v)$ and $\pi_2(v)$ the unique paths obtained by starting at $v$ and following $\partial P$ in negative and positive direction until reaching $B$, respectively. Let $c_1$ and $c_2$ be the not necessarily different closest corners of $B$ in positive direction from the endpoint of $\pi_1(v)$ and in negative direction from the endpoint of $\pi_2(v)$, respectively. Then we say that $c_1$ and $c_2$ are \emph{valid corners} for $v$. 

This definition should become clear when looking at the example polyomino in Figure~\ref{fig:simplyconnected}. Note that $p_1(v)$ and $p_2(v)$ both end at the left boundary of the boundary box, thus the top left and the bottom left corner are valid for $v$. On the other hand, $p_1(w)$ ends at the right boundary, while $p_2(w)$ ends at the upper boundary. Thus by definition only the corner between these two boundaries, that is the top right corner, is valid for $w$.

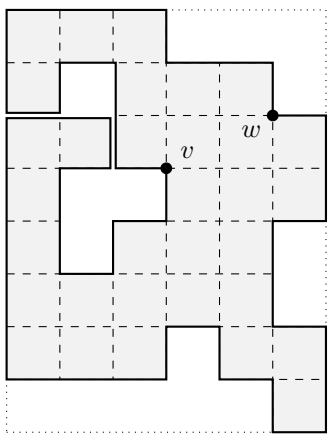
\begin{figure}[ht]
    \centering
\begin{tikzpicture}[xscale=0.7,yscale=0.7]
	\draw[dotted] (0,0) -- (0,8) -- (6,8) -- (6,0) -- cycle;
	\draw[thick,fill=gray!10] (0,1)--(0,5.95)--(1.95,5.95)--(1.95,5)--(1,5)--(1,3)--(2,3)--(2,4)--(3,4)--(3,5)--(2.05,5)--(2.05,7)--(1,7)--(1,6.05)--(0,6.05)--(0,8)--(3,8)--(3,7)--(5,7)--(5,6)--(6,6)--(6,4)--(5,4)--(5,2)--(6,2)--(6,0)--(5,0)--(5,1)--(4,1)--(4,2)--(3,2)--(3,1)--cycle;
	\node[circle, draw, fill,inner sep=0pt, minimum width=4pt] at (3,5) {};
	\node at (3.4,5.3) {$v$};
	\node[circle, draw, fill,inner sep=0pt, minimum width=4pt] at (5,6) {};
	\node at (4.6,5.7) {$w$};
    \clip (0,1)--(0,5.95)--(1.95,5.95)--(1.95,5)--(1,5)--(1,3)--(2,3)--(2,4)--(3,4)--(3,5)--(2.05,5)--(2.05,7)--(1,7)--(1,6.05)--(0,6.05)--(0,8)--(3,8)--(3,7)--(5,7)--(5,6)--(6,6)--(6,4)--(5,4)--(5,2)--(6,2)--(6,0)--(5,0)--(5,1)--(4,1)--(4,2)--(3,2)--(3,1)--cycle;
    \foreach \i in {1,...,5}{
        \draw[dashed] (\i,0)--(\i,8);
    }
    \foreach \i in {1,...,7}{
        \draw[dashed] (0,\i)--(6,\i);
    }
\end{tikzpicture}
    \caption{Valid corners for two vertices on the boundary of $P$}
    \label{fig:simplyconnected}
\end{figure}

\begin{thm}\label{thm:simply-connected}
Let $P$ be a simply connected polyomino containing a boundary point whose horizontal and vertical distance to one of its valid corners is at least 3. Then $P$ folds onto $\Ccal$.
\end{thm}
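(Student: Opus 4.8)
The plan is to use the deep boundary vertex $v$, together with its far valid corner, to locate inside $P$ a \emph{tree-shaped} sub-polyomino of bounding size at least $4\times 4$ which is not the exceptional shape $P_W$, and then to quote the tree-shaped classification of Section~\ref{sec:tree-shaped}. After a rotation or reflection of $P$ I may assume that the valid corner in question, call it $c$, is the top-left corner of the bounding box $B$, so that $v$ lies at horizontal distance at least $3$ to the right of the left edge of $B$ and at vertical distance at least $3$ below its top edge. By the definition of a valid corner, following $\partial P$ from $v$ along $\pi_1(v)$ and $\pi_2(v)$ reaches $\partial B$ at points $e_1,e_2$ whose nearest corner is $c$; I let $S$ be the sub-polyomino of $P$ formed by the unit squares incident (on the interior side) to the arc $\alpha=\pi_1(v)\cup\pi_2(v)$ running from $e_1$ through $v$ to $e_2$. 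Since $P$ is simply connected, so is $S$.

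\textbf{Extracting a large tree minor.} Because $v$ is at horizontal \emph{and} vertical distance at least $3$ from $c$, and $\alpha$ joins $v$ to the two sides of $B$ meeting at $c$, any dual path inside $S$ from a square incident to $v$ to a square touching $B$ must cross at least three columns and at least three rows; I would argue from this that $S$ has bounding size at least $4\times 4$. I then pass to a shortest dual path inside $S$ between such squares: being a geodesic it has no chords, hence the induced sub-polyomino $T$ is \emph{path-shaped} (in particular tree-shaped) and still has bounding size at least $4\times 4$. This lets me avoid flattening $2\times2$ blocks inside $S$, which would be delicate because a $180\degree$ fold of a block collapses one dimension; the multiple layers such folds create can only help, exactly as in the example preceding this theorem.

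\textbf{Applying the classification and handling $P_W$.} If $T$ has bounding size at least $4\times 5$ it folds onto $\Ccal$ by Theorem~\ref{thm:4times5}, and if it has bounding size exactly $4\times 4$ with $T\neq P_W$ it folds by Theorem~\ref{thm:4times4}. The only remaining possibility is $T=P_W$. Here I would use that a direct inspection of $P_W$ shows \emph{none} of its boundary vertices lies at horizontal and vertical distance at least $3$ from a valid corner (its reflex vertices all have one of the two coordinate-distances at most $2$); since $S$ does contain such a vertex, namely $v$, it strictly contains $T$, so $T$ carries at least one additional incident square. Then $P_W$ with one extra leaf folds onto $\Ccal$ by Lemma~\ref{lem:PW-with-leaf}. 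In every case I obtain a sub-polyomino of $P$, reachable from $P$ by $180\degree$ folds, that folds onto $\Ccal$; folding this piece onto the cube and stacking the rest of $P$ flat against the faces by $180\degree$ folds produces a folded state realising a surjective consistent mapping. As $P$ is simply connected it has no holes, so by the discussion following Theorem~\ref{thm:foldedstatevalid} this realisable pseudo-folding is automatically valid, and $P$ folds onto $\Ccal$.

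\textbf{Main obstacle.} The heart of the argument is the second step: converting the purely metric hypothesis (distance at least $3$ to a valid corner) into the combinatorial assertion that $S$ genuinely attains bounding size $4\times 4$, \emph{uniformly} over all the ways the arc $\alpha$ can wind back toward $c$. The arc may be long and thin, and reaching a single side of $B$ secures large extent in only one coordinate, so the vertical reach must be extracted from the interplay of both paths $\pi_1(v),\pi_2(v)$ with the corner; tracking precisely which unit square incident to $v$ lies on the interior side, and thereby pinning the constant $4$, forces a finite case distinction over the local configuration at $v$. The verification in the previous paragraph that $T=P_W$ can only occur when extra squares are present is the companion technical point.
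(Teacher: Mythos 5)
Your reduction to the tree-shaped classification has a fatal gap in its final step. You claim that the path-shaped sub-polyomino $T$ is ``reachable from $P$ by $180\degree$ folds'' and that the rest of $P$ can be ``stacked flat against the faces'', but this principle is only available when $P$ itself is tree-shaped (this is exactly why the paper introduces minors only in Section~\ref{sec:tree-shaped}: removing a sub-polyomino from a tree leaves components hanging by single edges, which can be folded away independently). For a general simply connected polyomino it is false: any large rectangle, say $7 \times 7$, contains a staircase which is a path-shaped sub-polyomino of bounding size $7\times 7 \ge 4 \times 5$, yet rectangular polyominoes never fold onto $\Ccal$ (as recalled at the start of Section~\ref{sec:simply_connected}). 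The obstruction is Lemma~\ref{lem:extension}/Corollary~\ref{cor:extension}: inside a $2\times 2$ block you cannot fold one square by $180\degree$ without collapsing the whole row or column of the block, so the shapes reachable from $P$ by $180\degree$ folds are very constrained, and an arbitrarily embedded path $T$ is in general not among them. Since your argument never uses the valid-corner hypothesis after extracting $T$, it would ``prove'' foldability of large rectangles, so the strategy cannot be repaired by tightening the estimates.

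There is also a secondary gap in the extraction of $T$. A geodesic in the dual of $S$ between a square touching $B$ and another such square need not pass anywhere near $v$: both endpoints $e_1,e_2$ of the arc $\alpha$ may lie close to the corner $c$, and if the two strands of $\alpha$ run alongside each other, squares near $e_1$ and near $e_2$ are dual-adjacent, so the geodesic can have tiny bounding size. Forcing the path to go through the square at $v$ restores the extent but destroys the geodesic property, so the induced sub-polyomino need not be path-shaped (or even tree-shaped). The paper's proof avoids both problems by never invoking the tree-shaped results: it rolls up all rows below the square $s$ at $v$ and rolls left all columns to its right, then rolls the two components of $P'\setminus s$ onto the row and column of $s$; the validity of the corner is used precisely to guarantee that one component still reaches the left boundary and the other the top boundary, so the result is an L-shape with both legs of length at least $3$, which folds onto $\Ccal$ by $90\degree$ folds. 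If you want to keep your approach, you would need to prove a statement of the form ``$T$ is reachable from $P$ by $180\degree$ folds'', and that is essentially equivalent to redoing the paper's rolling construction.
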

\begin{proof}
Assume without loss of generality that the top left corner of the bounding box of $P$ is valid for the point $v$ on the boundary of $P$ and that $v$ has horizontal and vertical distance at least 3 to this corner. Let $s$ denote the unit square having $v$ at its top left corner, which does not have to be present in $P$. Finally let $\pi_1(v)$ and $\pi_2(v)$ be the unique sub-paths of $\partial P$ connecting $v$ with the next points where $P$ touches the bounding box as defined above.

We fold $P$ in several steps. To assist the reader, in Figure~\ref{fig:simply-connected-folding} these steps are applied to the example polyomino of Figure~\ref{fig:simplyconnected}. Start by rolling up all rows below the row containing $s$. Next roll to the left all columns to the right of the column containing $s$. Then $s$ is the bottom right square of the resulting shape $P'$. Additionally, while $\pi_1(v)$ and $\pi_2(v)$ might have been deformed by this transformation in the case they visit rows below  or columns to the right of $s$, they still connect the top left corner of $s$ with the respective boundaries. In particular, after removing $s$, the shape $P'$ has two connected components: the \emph{left component} containing the square to the left of $s$ and the \emph{upper component} containing the square above $s$. Note that because the top left corner is valid for $v$, the left component touches the left boundary and the upper component touches the top boundary of the bounding box of $P$. 

We roll down all squares of the left component lying in a row above $s$ and roll to the right all squares of the upper component lying in columns left of $s$. Clearly all squares of the resulting shape either lie in the same row or the same column as $s$, thus the shape has the form of the letter `L'. Furthermore by construction the left component still touches the left boundary and the upper component still touches the upper boundary, thus the two components must have length at least 3. We conclude that it folds onto $\Ccal$, and thus the same is true for $P$.

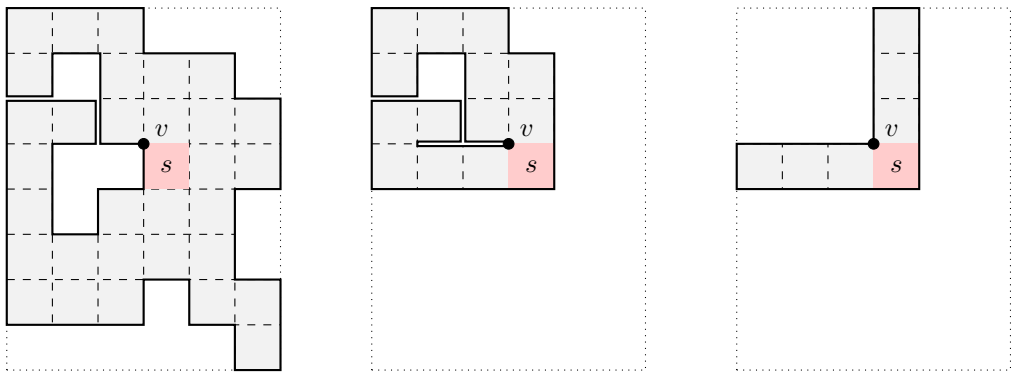
\begin{figure}[ht]
    \centering
\begin{tikzpicture}[xscale=0.6,yscale=0.6]
    \begin{scope}
        \draw[dotted] (0,0) -- (0,8) -- (6,8) -- (6,0) -- cycle;
        \begin{scope}
            \clip (0,1)--(0,5.95)--(1.95,5.95)--(1.95,5)--(1,5)--(1,3)--(2,3)--(2,4)--(3,4)--(3,5)--(2.05,5)--(2.05,7)--(1,7)--(1,6.05)--(0,6.05)--(0,8)--(3,8)--(3,7)--(5,7)--(5,6)--(6,6)--(6,4)--(5,4)--(5,2)--(6,2)--(6,0)--(5,0)--(5,1)--(4,1)--(4,2)--(3,2)--(3,1)--cycle;
            \fill[gray!10] (0,0) -- (0,8) -- (6,8) -- (6,0);
            \foreach \i in {1,...,5}{
                \draw[dashed] (\i,0)--(\i,8);
            }
            \foreach \i in {1,...,7}{
                \draw[dashed] (0,\i)--(6,\i);
            }
        \end{scope}
        \fill[red!20] (3,5)--(3,4)--(4,4)--(4,5);
        \node at (3.5,4.5) {$s$};
        \node[circle, draw, fill,inner sep=0pt, minimum width=4pt] at (3,5) {};
        \node at (3.4,5.3) {$v$};
        \draw[thick] (0,1)--(0,5.95)--(1.95,5.95)--(1.95,5)--(1,5)--(1,3)--(2,3)--(2,4)--(3,4)--(3,5)--(2.05,5)--(2.05,7)--(1,7)--(1,6.05)--(0,6.05)--(0,8)--(3,8)--(3,7)--(5,7)--(5,6)--(6,6)--(6,4)--(5,4)--(5,2)--(6,2)--(6,0)--(5,0)--(5,1)--(4,1)--(4,2)--(3,2)--(3,1)--cycle;
    \end{scope}
    \begin{scope}[xshift = 8cm]
        \draw[dotted] (0,0) -- (0,8) -- (6,8) -- (6,0) -- cycle;
        \begin{scope}
            \clip (0,4)--(0,5.95)--(1.95,5.95)--(1.95,5.05)--(1,5.05)--(1,4.95)--(3,4.95)--(3,5.05)--(2.05,5.05)--(2.05,7)--(1,7)--(1,6.05)--(0,6.05)--(0,8)--(3,8)--(3,7)--(4,7)--(4,4)--cycle;
            \fill[gray!10] (0,0) -- (0,8) -- (6,8) -- (6,0);
            \foreach \i in {1,...,5}{
                \draw[dashed] (\i,0)--(\i,8);
            }
            \foreach \i in {1,...,7}{
                \draw[dashed] (0,\i)--(6,\i);
            }
        \end{scope}
        \fill[red!20] (3,5)--(3,4)--(4,4)--(4,5);
        \node at (3.5,4.5) {$s$};
        \node[circle, draw, fill,inner sep=0pt, minimum width=4pt] at (3,5) {};
        \node at (3.4,5.3) {$v$};
        \draw[thick] (0,4)--(0,5.95)--(1.95,5.95)--(1.95,5.05)--(1,5.05)--(1,4.95)--(3,4.95)--(3,5.05)--(2.05,5.05)--(2.05,7)--(1,7)--(1,6.05)--(0,6.05)--(0,8)--(3,8)--(3,7)--(4,7)--(4,4)--cycle;
    \end{scope}
    \begin{scope}[xshift = 16cm]
        \draw[dotted] (0,0) -- (0,8) -- (6,8) -- (6,0) -- cycle;
        \begin{scope}
            \clip (0,4)--(0,5)--(3,5)--(3,8)--(4,8)--(4,4)--cycle;
            \fill[gray!10] (0,0) -- (0,8) -- (6,8) -- (6,0);
            \foreach \i in {1,...,5}{
                \draw[dashed] (\i,0)--(\i,8);
            }
            \foreach \i in {1,...,7}{
                \draw[dashed] (0,\i)--(6,\i);
            }
        \end{scope}
        \fill[red!20] (3,5)--(3,4)--(4,4)--(4,5);
        \node at (3.5,4.5) {$s$};
        \node[circle, draw, fill,inner sep=0pt, minimum width=4pt] at (3,5) {};
        \node at (3.4,5.3) {$v$};
        \draw[thick] (0,4)--(0,5)--(3,5)--(3,8)--(4,8)--(4,4)--cycle;
    \end{scope}
\end{tikzpicture}
    \caption{Folding strategy in the proof of Theorem~\ref{thm:simply-connected}}
    \label{fig:simply-connected-folding}
\end{figure}
\end{proof}

Observe that by definition every point of the boundary of a polyomino $P$ not lying inside the boundary box must have some valid corner. Thus we immediately see that whenever the boundary of $P$ contains a point at horizontal and vertical distance at least 3 from each of the four corners of the bounding box, $P$ folds onto $\Ccal$. We obtain the following corollary.

\begin{cor}
Let $P$ be a simply connected polyomino of bounding size $m \times n$ which does not fold onto $\Ccal$. Then it contains a rectangular sub-polyomino of size $(m-4) \times (n-4)$ at distance 2 from each boundary. 
\end{cor}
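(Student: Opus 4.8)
The plan is to prove the contrapositive: assuming $P$ does not fold onto $\Ccal$, I will show that every unit square of the central block $R$ occupying rows $3,\dots,m-2$ and columns $3,\dots,n-2$ (i.e.\ the rectangle $[2,n-2]\times[2,m-2]$ in plane coordinates) belongs to $P$. First I would record what non-foldability buys. By Theorem~\ref{thm:simply-connected}, in the contrapositive form stated just above the corollary, no boundary point of $P$ lies at horizontal \emph{and} vertical distance $\ge 3$ from all four corners of $B$ at once; taking the disjunction over the four corners, this collapses to the clean statement that no point of $\partial P$ lies in the open central rectangle $U=(2,n-2)\times(2,m-2)$. Since $U$ is connected and meets no boundary of $P$, the rectangle $R=\overline U$ is either entirely contained in $P$ or entirely disjoint from it; in the former case we are done, because $R$ is grid-aligned and hence all of its unit squares are present.

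It remains to rule out that a central square $q\in R$ is absent. Here the quick dichotomy is not enough: a thin bay can delete central squares while keeping $\partial P$ outside the \emph{open} rectangle $U$, so I would return to the full valid-corner formulation of Theorem~\ref{thm:simply-connected}. Suppose $q\in R$ sits in column $a\in\{3,\dots,n-2\}$ and row $b\in\{3,\dots,m-2\}$ and is missing. Since $P$ is simply connected it has no holes, so the maximal missing region $M\ni q$ must reach $\partial B$; thus $M$ is a \emph{bay} opening onto the box. A bay cannot join two opposite sides of $B$, as that would disconnect $P$ (which touches all four sides), so after a rotation we may assume $M$ opens only toward the bottom, or around a single corner. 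Walking along $\partial P$ around $M$, I would take the "cap" vertex $v$ bounding $M$ on the side farthest from its opening --- concretely a convex corner of $\partial P$ such as the top corner $(a,b)$ of a suitably extremal missing square of $M$. By construction both boundary paths $\pi_1(v)$ and $\pi_2(v)$ descend the two walls of the bay and terminate on the side(s) of $B$ through which $M$ opens, so the valid corners of $v$ lie on that far side.

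The payoff is that $v$ is genuinely deep: because $q\in R$ has distance $\ge 2$ to every side of $B$, the cap $v$ has both coordinates in $[2,n-2]\times[2,m-2]$, and its distance to the side through which $M$ opens is at least $3$ (the bay must reach from that side up to a row/column containing $q$). For the main case (both paths exiting the bottom), the two valid corners are the two bottom corners, whose vertical distance to $v$ is $v_y=b\ge 3$ and one of whose horizontal distances $v_x,\,n-v_x$ is $\ge 3$ once $n\ge 5$; so a valid corner is at distance $\ge 3$ in both coordinates and Theorem~\ref{thm:simply-connected} forces $P$ to fold, a contradiction. I expect the main obstacle to be exactly this last bookkeeping: determining, from the clockwise orientation convention that defines $\pi_1,\pi_2$ and the valid corners $c_1,c_2$, which corner is valid and checking that \emph{both} of its coordinate distances reach $3$. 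The delicate sub-case is when $M$ opens around a corner of $B$, so $\pi_1$ and $\pi_2$ exit through two adjacent sides and the two valid corners may coincide with that box-corner; there one must invoke that $q$ lies at distance $\ge 2$ from both incident sides together with $m,n\ge 5$ to guarantee the common valid corner is still at distance $\ge 3$ in both directions. Once this case analysis is discharged, the contradiction completes the contrapositive and hence the corollary.
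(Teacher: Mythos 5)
Your core argument (the bay analysis in your second and third paragraphs) is correct, and it is genuinely different from the paper's treatment: the paper derives this corollary in one line from the weakened observation that a boundary point at horizontal \emph{and} vertical distance at least $3$ from \emph{all four} corners certifies folding, leaving the remaining geometry implicit. Your instinct that this weakened statement alone does not suffice is right. Consider the polyomino made of two vertical arms of width $2$ along the left and right sides of $B$, joined by a bar of height $2$ along the bottom: it is simply connected, misses the \emph{entire} central block, yet every point of its boundary is within horizontal or vertical distance $2$ of some corner, so the all-corners observation never applies to it; one needs the valid-corner form of Theorem~\ref{thm:simply-connected} (points on the inner walls have the two \emph{top} corners as valid corners) to see that it folds. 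So the route you take --- missing central square $\Rightarrow$ bay reaching $\partial B$ (since $P$ has no holes), single mouth lying in one side or wrapping one corner (else $P$ would be disconnected), cap vertex whose two boundary paths $\pi_1,\pi_2$ run along the bay's boundary to the mouth, valid corners therefore sitting on the mouth's side(s), and the distance bookkeeping $v_y\ge 3$ and $\max(v_x,\,n-v_x)\ge 3$ --- is exactly the argument this corollary needs, and your main case (mouth inside one side) is carried out correctly.

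Two defects should be repaired. First, your opening paragraph is wrong: the set of points at horizontal and vertical distance at least $3$ from all four corners is $[3,n-3]\times[3,m-3]$, not the open rectangle $(2,n-2)\times(2,m-2)$ --- the point $(2.5,2.5)$ lies in your $U$ but is at horizontal distance $2.5<3$ from the left corners --- so the dichotomy you build on it has no basis; and even with the correct region, the wide-bay example above (walls at $x=2$, $x=n-2$, $y=2$) shows that no argument of the form ``$\partial P$ avoids the center, hence the center is filled'' can work. Since you abandon this route anyway, delete the paragraph rather than patch it. Second, the corner-opening case is only gestured at (``bookkeeping \dots to be discharged''), and your claim that the cap has both coordinates in $[2,n-2]\times[2,m-2]$ is not justified (a bay opening at the bottom may have its highest point in column $2$ or $n-1$; this is harmless in the main case only because \emph{both} bottom corners are valid there). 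For the corner case you must exhibit the witness explicitly: from the missing central square $q$, move horizontally away from the wrapped corner until you first leave the bay; you must hit $\partial P$, since otherwise the bay would touch the opposite side of $B$ and acquire a second mouth. The upper lattice endpoint $v$ of the wall edge you hit satisfies $v_y\ge 3$ and has horizontal distance at least $3$ from the wrapped corner, and that corner is a valid corner for $v$ because both of $v$'s boundary paths run along the bay's boundary to the two mouth endpoints on the adjacent sides. Writing this out --- together with the remark that all witnesses are lattice vertices, which you use tacitly when asserting $\max(v_x,\,n-v_x)\ge 3$ for $n\ge 5$ --- completes the proof.
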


As mentioned before, this result has a close relation to the bounded size of non-foldable tree-shaped polyominoes. In fact, it directly implies that any non-foldable simply connected polyomino with a bounding box of size at least $6 \times 6$ contains a $2 \times 2$ rectangular sub-polyomino, or in other words, that every tree-shaped polyomino with a bounding box of size at least $6 \times 6$ folds onto $\Ccal$.

\section*{Conclusion and open problems}

For polyominoes with holes we showed in Section~\ref{sec:one_hole} that any rectangular polyomino with only one simple hole does not fold into a cube. Moreover, we provided a complete characterisation when a rectangular polyomino with two or more unit square holes (but no other holes) can be folded into a cube. In~\cite{aich19} several combinations of two simple holes were given that allow the polyomino to fold into a cube, and also examples for simple hole(s) were identified that cannot be folded into a unit cube. Restricting considerations to rectangular polyominoes we ask the following question.
\begin{opr}
For which combinations of (two or more) simple holes can a rectangular polyomino be folded into a cube?
\end{opr}

Concerning tree-shaped polyominoes we gave a full characterisation which of them can be folded onto the cube (Section~\ref{sec:tree-shaped}). This settles the case of tree-shaped polyominoes.
 
Finally, in Section~\ref{sec:simply_connected} we provided a sufficient condition when a simply-connected polyomino can be folded to a cube. Here a complete classification is still open.
\begin{opr}
Provide a complete characterisation when a simply-connected polyomino can be folded into a cube. 
\end{opr}


\bibliographystyle{abbrv}
\bibliography{latex}

\begin{thebibliography}{1}

\bibitem{aich19}
O.~Aichholzer, H.~A. Akitaya, K.~C. Cheung, E.~D. Demaine, M.~L. Demaine, S.~P.
  Fekete, L.~Kleist, I.~Kostitsyna, M.~Löffler, Z.~Masárová, K.~Mundilova,
  and C.~Schmidt.
\newblock Folding polyominoes with holes into a cube.
\newblock {\em Computational Geometry}, 93:101700, 2021.

\bibitem{aich18}
O.~Aichholzer, M.~Biro, E.~D. Demaine, M.~L. Demaine, D.~Eppstein, S.~P.
  Fekete, A.~Hesterberg, I.~Kostitsyna, and C.~Schmidt.
\newblock Folding polyominoes into (poly)cubes.
\newblock {\em International Journal of Computational Geometry \&
  Applications}, 28(03):197--226, 2018.

\bibitem{puzzle}
N.~Beluhov.
\newblock Cube folding.
\newblock {\em https://nbpuzzles.wordpress.com/2014/06/08/cube-folding}, 2014.

\bibitem{czajkowski2020folding}
K.~Y. Czajkowski, E.~D. Demaine, M.~L. Demaine, K.~Eppling, R.~Kraft,
  K.~Mundilova, and L.~Smith.
\newblock Folding small polyominoes into a unit cube.
\newblock In {\em CCCG}, pages 95--100, 2020.

\bibitem{gardner1983}
M.~Gardner.
\newblock {\em Wheels, Life and Other Mathematical Amusements}.
\newblock W.H. Freeman and Company, 1983.
\newblock University of Chicago Lecture Notes prepared with the assistance of
  J. L. Shaneson and J. Lees.

\bibitem{HLP99}
J.~Hass, J.~C. Lagarias, and N.~Pippenger.
\newblock The computational complexity of knot and link problems.
\newblock {\em J. ACM}, 46(2):185--211, 1999.

\bibitem{Hudson69}
J.~F.~P. Hudson.
\newblock {\em Piecewise linear topology}.
\newblock W. A. Benjamin, Inc., New York-Amsterdam, 1969.
\newblock University of Chicago Lecture Notes prepared with the assistance of
  J. L. Shaneson and J. Lees.

\bibitem{Moise77}
E.~E. Moise.
\newblock {\em Geometric topology in dimensions {$2$} and {$3$}}, volume Vol.
  47.
\newblock Springer-Verlag, New York-Heidelberg, 1977.

\end{thebibliography}

\end{document}